\DeclareMathOperator{\tr}{tr}
\DeclareMathOperator{\spn}{span}
\DeclareMathOperator{\ran}{ran}
\DeclareMathOperator{\Real}{Re}
\newcommand{\alg}{\mathfrak A}
\newcommand{\krn}{\tilde k}
\newcommand{\rkha}{\mathfrak A}
\newcommand{\dtilde}[1]{%
\begingroup%
    \let\macc@kerna\z@%
    \let\macc@kernb\z@%
    \let\macc@nucleus\@empty%
    \tilde{\raisebox{.35ex}{\vphantom{\ensuremath{#1}}}\smash{\tilde{#1}}}%
    \endgroup%
}
\newtheorem{thm}{Theorem}
\newtheorem{lem}[thm]{Lemma}
\newtheorem{prop}[thm]{Proposition}
\newtheorem{cor}[thm]{Corollary}
\theoremstyle{remark}
\newtheorem*{rk*}{Remark}
\newcommand{\crossmark}{$\bm\times$}
\newcommand{\bra}[1]{\langle #1 \rvert}
\newcommand{\ket}[1]{\lvert #1 \rangle}
\begin{document}

\title{Embedding classical dynamics in a quantum computer}
\author{Dimitrios Giannakis}
\affiliation{Department of Mathematics, Dartmouth College, Hanover, NH 03755, USA}
\affiliation{Department of Mathematics, Courant Institute of Mathematical Sciences, New York University, New York, NY 10012, USA}
\author{Abbas Ourmazd}
\affiliation{Department of Physics, University of Wisconsin--Milwaukee, Milwaukee, WI 53211, USA}

\author{Philipp Pfeffer}
\affiliation{Institut f\"ur Thermo- und Fluiddynamik, Technische Universit\"at Ilmenau, D-98684 Ilmenau, Germany}

\author{J\"org Schumacher}
\affiliation{Institut f\"ur Thermo- und Fluiddynamik, Technische Universit\"at Ilmenau, D-98684 Ilmenau, Germany}
\affiliation{Tandon School of Engineering, New York University, New York, NY 11201, USA}

\author{Joanna Slawinska}
\affiliation{Department of Computer Science, University of Helsinki, FI-00014 Helsinki, Finland}
\affiliation{Pusan National University, Busan, South Korea}
\affiliation{Center for Climate Physics, Institute for Basic Science (IBS), Busan, South Korea}

\date{\today}

\begin{abstract}
    We develop a framework for simulating measure-preserving, ergodic dynamical systems on a quantum computer. Our approach provides a new operator-theoretic representation of classical dynamics by combining ergodic theory with quantum information science. The resulting \emph{quantum embedding of classical dynamics (QECD)} enables efficient simulation of spaces of classical observables with exponentially large dimension using a quadratic number of quantum gates. The QECD framework is based on a quantum feature map that we introduce for representing classical states by density operators on a reproducing kernel Hilbert space, $\mathcal H $. Furthermore, an embedding of classical observables into self-adjoint operators on $\mathcal H$ is established, such that quantum mechanical expectation values are consistent with pointwise function evaluation. In this scheme, quantum states and observables evolve unitarily under the lifted action of Koopman evolution operators of the classical system. Moreover, by virtue of the reproducing property of $\mathcal H$, the quantum system is pointwise-consistent with the underlying classical dynamics. To achieve an exponential quantum computational advantage, we project the state of the quantum system onto a finite-rank density operator on a $2^n$-dimensional tensor product Hilbert space associated with $n$ qubits. By employing discrete Fourier-Walsh transforms of spectral functions, the evolution operator of the finite-dimensional quantum system is factorized into tensor product form, enabling implementation through an $n$-channel quantum circuit of size $O(n)$ and no interchannel communication. Furthermore, the circuit features a state preparation stage, also of size $O(n)$, and a quantum Fourier transform stage of size $O(n^2)$, which makes predictions of observables possible by measurement in the standard computational basis. We prove theoretical convergence results for these predictions in the large-qubit limit, $n\to\infty$. In light of these properties, QECD provides a consistent, exponentially scalable, stochastic simulator of the evolution of classical observables, realized through projective quantum measurement. We demonstrate the consistency of the scheme in prototypical dynamical systems involving periodic and quasiperiodic oscillators on tori. These examples include simulated quantum circuit experiments in Qiskit Aer, as well as actual experiments on the IBM Quantum System One.      
\end{abstract}

\maketitle

\section{Introduction}

Ever since a seminal paper of Feynman in 1982 \cite{Feynman82}, the problem of identifying physical systems that can faithfully and efficiently simulate large classes of other systems (performing, in Feynman's words, \emph{universal computation}) has received considerable attention. Under the operating principle that nature is fundamentally quantum mechanical, and with the realization that simulating quantum systems by classical systems is exponentially hard, much effort has been focused on the design of universal simulators of quantum systems. Such efforts are based on the axioms of quantum mechanics, with gates connected in quantum circuits performing unitary (and thus reversible) transformations of quantum states  \cite{Lloyd96,BerryEtAl07,Barnett09,Nielsen10,Preskill18,Deutsch2020}. 

Over the past decades, several numerically hard problems have been identified, for which quantum algorithms are significantly faster than their classical counterparts. A prominent example is the Grover search algorithm, which results in a quadratic speedup over classical search \cite{Grover01}. In a few cases, such as random sampling, quantum computers have solved problems that would be effectively unsolvable with present-day classical supercomputing resources, thus opening the way to quantum supremacy \cite{AruteEtAl19}. See also Ref.~\cite{ZhouEtAl20} for a discussion of the result in Ref.~\cite{AruteEtAl19}. 

Yet, at least at the level of effective theories, a great variety of phenomena are well described by classical dynamical systems, generally formulated as systems of ordinary or partial differential equations. Since simulating a quantum system by a classical system can be exponentially hard, it is natural to ask whether simulation of a \emph{classical} system by a quantum system is an exponentially ``easy'' problem, enabling a substantial increase in the complexity and range of computationally amenable classical phenomena.  

The possibility to simulate classical dynamical systems on a quantum computer has attracted growing attention, on par with research on fundamental new quantum algorithms and their practical implementation \cite{Meyer02}. Already 20 years ago, for example, Benenti et al.~\cite{BenentiEtAl01} studied the sawtooth map generating rich and complex dynamics. The implementation of an Euler method to solve systems of coupled nonlinear ordinary differential equations (ODEs) was addressed by Leyton and Osborne~\cite{LeytonOsborne08}. A framework for sequential data assimilation (filtering) of partially observed classical systems based on the Dirac--von Neumann formalism of quantum dynamics and measurement was proposed in Ref.~\cite{Giannakis19b}. The simulation of classical Hamiltonian systems using a Koopman--von Neumann approach was studied by Joseph~\cite{Joseph20}. This quantum computational framework was shown to be exponentially faster than a classical simulation when the Hamiltonian is represented by a sparse matrix. More recently, the potential of quantum computing for fluid dynamics, in particular turbulence, was explored in Refs.~\cite{BharadwajSreenivasan20,Gaitan20}. This includes, for example, transport simulators for fluid flows in which the formal analogy between the lattice Boltzmann method and Dirac equation is used \cite{MezzacapoEtAl15}. Lubasch et al.~\cite{LubaschEtAl20} took a different path inspired by the success of quantum computing in solving optimization problems, modeling the one-dimensional Burgers equation by a variational quantum computing method, made possible by its correspondence with the nonlinear Schr\"odinger equation. Quantum systems have also been employed in the modeling of classical stochastic processes, where they have shown a superior memory compression \cite{ElliotGu18,ElliotEtAl20}.  

Here, we present a procedure for simulating a classical, measure-preserving, ergodic dynamical system by means of a finite-dimensional quantum mechanical system amenable to quantum computation. Combining operator-theoretic techniques for classical dynamical systems with the theory of quantum dynamics and measurement, our framework leads to exponentially scalable quantum algorithms, enabling the simulation of classical systems with otherwise intractably high-dimensional spaces of observables. Our work thus opens a novel route to the full realization of quantum advantage in the computation of classical dynamical systems. 

Another noteworthy aspect of our approach is that it interfaces between classical \cite{BerryEtAl15,GiannakisEtAl15,Kawahara16,Giannakis19,BerryEtAl20,DasGiannakis20,KlusEtAl20,DasEtAl21} and quantum \cite{SchuldEtAl15,BiamonteEtAl17,CilibertoEtAl18,SchuldKilloran19,HavlicekEtAl19,BlankEtAl20,Schuld21} machine learning techniques  based on kernel methods. Connections with other data-driven, operator-theoretic techniques for classical dynamics \cite{DellnitzJunge99,DellnitzEtAl00,Mezic05,RowleyEtAl09,WilliamsEtAl15,KlusEtAl16,BruntonEtAl17} are also prevalent. Building on our previous work on quantum mechanical approaches data assimilation \cite{Giannakis19b}, the framework presented here offers a mathematically rigorous route to representing complex, high-dimensional classical dynamics on a quantum computer. The primary contributions of this work are as follows. 

\begin{figure}[t]
    \centering
    \includegraphics[width=\linewidth]{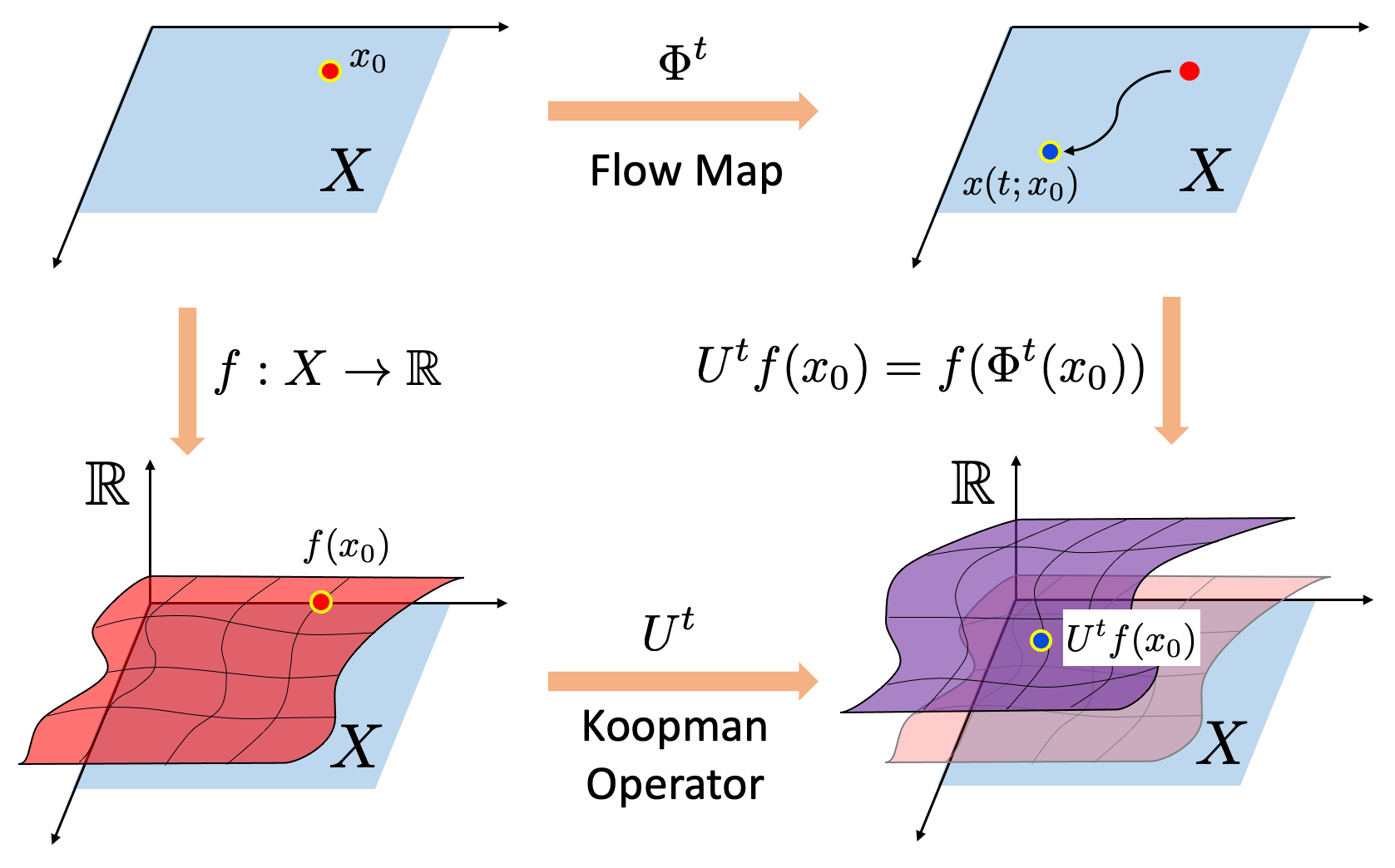}
    \caption{Schematic of the relationship between the flow map $\Phi^t$ that advances the (nonlinear) dynamical system in a  state space $X$ in time and the linear Koopman operator $U^t$ that advances observables $f$ on $X$ in an infinite-dimensional Hilbert space.}
    \label{Koopman}
\end{figure}
\begin{enumerate}[wide] 
    \item We present a generic pipeline that casts classical dynamical systems in terms amenable to quantum computation. This approach consists of four steps. (a)~A dynamically consistent embedding of the classical state space $X$ into the state space of an infinite-dimensional quantum system with a diagonalizable Hamiltonian. (b)~Eigenspace projection of the infinite-dimensional quantum system onto a finite-dimensional system, whose dynamics are representable by composition of basic commuting unitary transformations, realizable via quantum gates. (c)~A preparation process, encoding the classical initial state in $X$ to a quantum computational state. (d)~A quantum measurement process in the standard basis of the quantum computer to yield predictions for observables. These four steps result in simulations of a $2^n$-dimensional space of classical observables using $n$ qubits and a circuit of size (i.e., number of quantum gates) $O(n^2)$ and depth $O(n)$. We call this framework for encoding a classical dynamical system in terms of a quantum computational system \emph{quantum embedding of classical dynamics (QECD)}. 

    \item We develop the principal mathematical tools employed in this construction using Koopman and transfer operator techniques \cite{Baladi00,EisnerEtAl15} and the theory of reproducing kernel Hilbert spaces (RKHSs) \cite{FerreiraMenegatto13,PaulsenRaghupathi16} and Banach function algebras on locally compact abelian groups \cite{FeichtingerEtAl07,KuznetsovaMolitorBraun12,DasGiannakis20b}. The connection between the dynamical system and the Koopman operator is illustrated in Fig.~\ref{Koopman}. Using RKHSs as the foundation to build quantum mechanical models (as opposed to the $L^2$ spaces employed in Ref.~\cite{Giannakis19}) leads to \emph{pointwise} consistency with the underlying classical dynamical system; that is, consistency for every classical initial condition, rather than in the sense of expectations over initial conditions. This result should be of independent interest in the broader context of representations of classical dynamics in terms of quantum systems, which has received significant attention \cite{Mauro02,Klein18,BondarEtAl19,Morgan20,Giannakis21b}. 
        
    \item In the particular setting of quantum computation, we establish theoretical convergence results for the finite-dimensional systems generated by the compiler, including asymptotic convergence rates in the large-qubit limit, $n\to\infty$. The time evolution of the quantum computational systems leverages discrete Fourier-Walsh techniques \cite{WelchEtAl14} to efficiently represent the Koopman operator using a circuit of size $O(n)$ and depth $O(1)$. The state preparation step, which is a major challenge in quantum computing \cite{Benedetti2019,Markovic2020}, is also carried with a circuit of size $O(n)$ and depth $O(1)$. In particular, we take advantage of the fact that every quantum state associated with a classical initial state in $X$ can be reached to any desired accuracy by efficient unitary transformations applied to a uniform-superposition state constructed using Hadamard gates. Meanwhile, the measurement process employs the quantum Fourier transform (QFT) to perform efficient approximate diagonalization of observables with a circuit of size $O(n^2)$ and depth $O(n)$ \cite{Coppersmith94,MooreNilsson01}.

    \item We demonstrate the QECD framework in simple, analytically solvable examples of classical dynamics, so that all steps of the procedure are fully reproducible. Specifically, we use QECD to simulate the evolution of observables of periodic and quasiperiodic dynamical systems in a one- and two-dimensional phase space, respectively. We employ the gate-based, universal quantum computing toolkit Qiskit Aer~\cite{Qiskit,Qiskit20}, using up to $n=8$ qubits. Results from simulated quantum circuit experiments (see Figs.~\ref{dim1} and~\ref{dim2}) are found to be in good agreement with the true classical dynamics. In addition, we perform experiments for the periodic system on an actual quantum computer, the IBM Quantum System One, demonstrating the ability of QECD to simulate a classical system on a noisy intermediate-scale quantum (NISQ) device. 
\end{enumerate}

We note that the two-dimensional quasiperiodic dynamics in our examples can be straightforwardly extended to higher dimensions, where the dynamics becomes increasingly indistinguishable from a chaotic system. For quasiperiodic dynamics, no interchannel communication is necessary. Circuits of higher complexity that create inter-qubit entanglement may need to be explored for treatment of chaotic dynamics.

The outline of the paper is as follows. First, in Sec.~\ref{secOverview} we give a high-level description of the methodological framework underlying the quantum embedding.  In Sec.~\ref{secClassical}, we introduce the class of dynamical systems under study, along with the corresponding RKHSs of classical observables. This is followed in Secs.~\ref{secQuantumMechanical}--\ref{secPreparation} by a detailed description of the construction of the QECD for this class of systems. In Secs.~\ref{secExamples} and~\ref{secIBM}, we present our results from simulated and actual quantum computation experiments, respectively. Our primary conclusions are summarized in Sec.~\ref{secConclusions}. The paper contains appendices on RKHS-based quantum mechanical representations of classical systems (Appendix~\ref{appQuantumRep}), Fourier-Walsh factorization of the Koopman generator (Appendix~\ref{appWalsh}), and QFT-based approximate diagonalization of observables (Appendix~\ref{appQFT}). In addition, we provide an overview of elements of Koopman operator theory related to this work and associated numerical techniques as supplementary material (SM) \cite{Suppl}.

\section{\label{secOverview} A route to quantum embedding of classical dynamics}
\label{route}

We begin by describing the main components of the QECD framework for representing classical dynamics on a quantum computer. Figure~\ref{Scheme} schematically summarizes the successive levels used in the procedure, passing through classical, classical statistical, infinite-dimensional quantum mechanical, finite-dimensional quantum mechanical (referred to as matrix mechanical), and quantum computational levels. This diagram juxtaposes the steps for states  and observables side-by-side for easy comparison. In the following subsections, we discuss the individual horizontal and vertical connections (which are maps) on each of the five levels of this diagram.

\begin{figure*}[t]
    \includegraphics[scale=0.37]{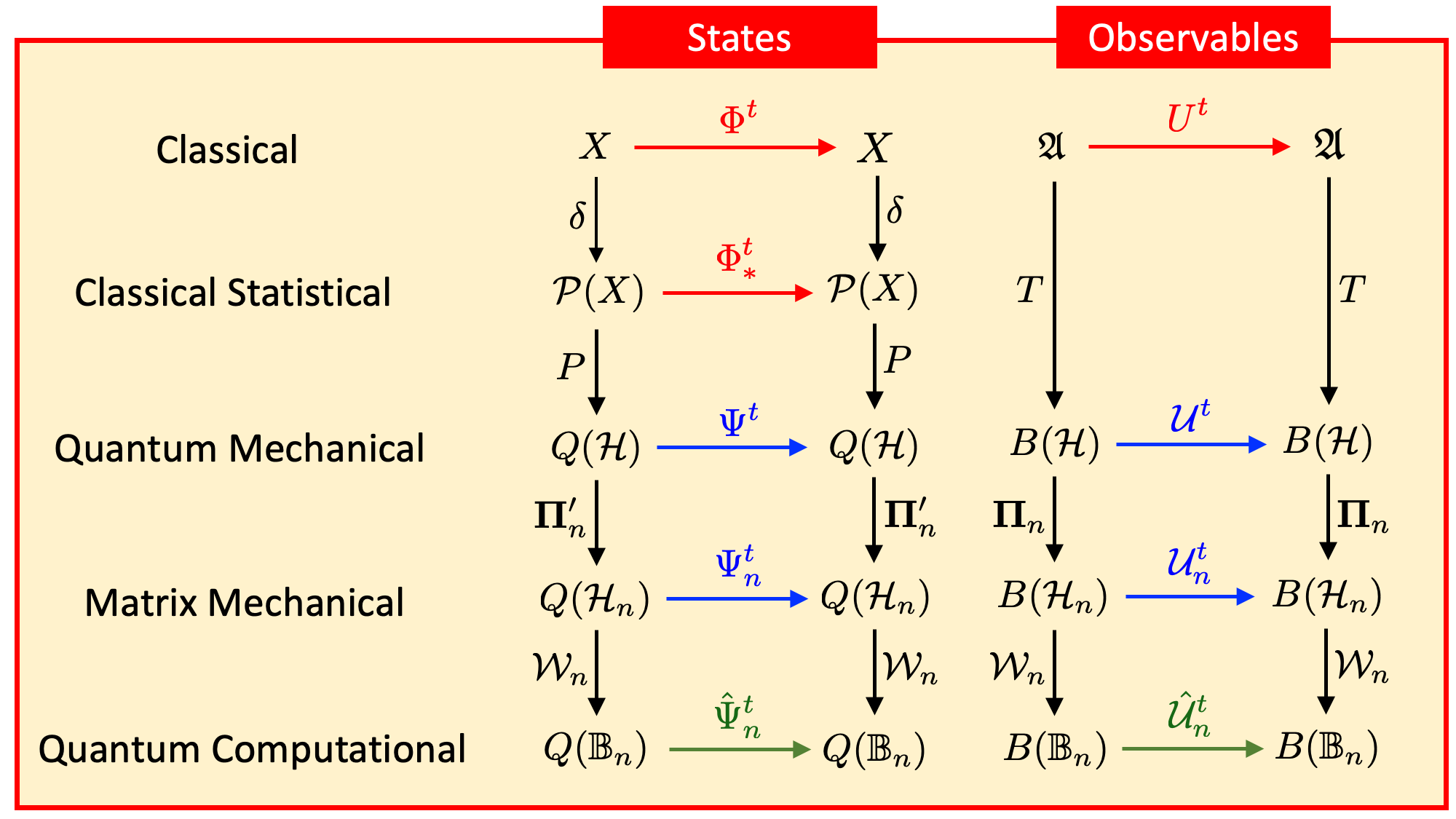}    
    \caption{Schematic representation of the QECD framework applied to states and observables of a classical dynamical system in five successive levels, leading to an $n$-qubit quantum computational system. These are the classical, classical statistical, quantum mechanical, matrix mechanical, and quantum computational levels. The horizontal arrows from top to bottom in the left- and right-hand columns represent the time evolution maps of states and observables, respectively. These are the flow map $\Phi^t$ on the classical state space $X$, the transfer operator $\Phi^t_{\ast}$ on the space of probability measures $\mathcal{P}(X)$, and the Koopman operator $U^t$ on the algebra of classical observables $\alg \subseteq C(X)$. They are followed by the unitary evolution map $\Psi^t$ and the Heisenberg operator $\mathcal{U}^t$ on the space of density operators $Q(\mathcal H)$ and bounded linear operators $B(\mathcal H)$, respectively, on the reproducing kernel Hilbert space $\mathcal H$. The maps at the matrix mechanical level, $\Psi^t_n$ and $\mathcal U^t_n$, are finite-rank projections of $\Psi^t$ and $\mathcal U^t$, respectively, acting on operators on $2^n$-dimensional subspaces $\mathcal H_n$ of $\mathcal H$.  The corresponding maps $\hat\Psi^t_n$ and $\hat{\mathcal U}^t_n$, respectively, at the quantum computational level act on operators on the $2^n$-dimensional tensor product Hilbert space $\mathbb B_n$, which forms the basis of an $n$-qubit quantum computer. The vertical arrows correspond to maps that translate states (left-hand column) and observables (right-hand column) to the next representation level. Under the combined action of these maps, a classical state $x\in X$ is mapped to an $n$-qubit density matrix $\hat\rho_{x,n}\in Q(\mathbb B_n)$, and a classical observable $f\in \alg$ is mapped to a self-adjoint operator $\hat S_n\in B(\mathbb B_n)$. A loop of arrows represents a commutative diagram.}
    \label{Scheme}
\end{figure*}
\subsection{\label{secOverviewClassical}Classical and classical statistical levels}

Consider a classical dynamical system on a compact metric space $X$, described by a dynamical flow map 
\begin{equation}
\Phi^t : X \to X  \quad\mbox{with}\quad t \in \mathbb R,
\end{equation}
as indicated by a horizontal arrow in the left-hand column of Fig.~\ref{Scheme}. The classical state space $X$ is embedded into the space of Borel probability measures $ \mathcal P( X )$ (i.e., the classical statistical space) by means of the map $\delta$ sending $x \in X$ to the Dirac measure $\delta_x \in \mathcal P(X)$ supported at $x$. The dynamics acts naturally on the classical statistical space by the pushforward map on measures, 
\begin{equation} 
    \label{eqPushforwardMeas}
    \Phi^t_* : \mathcal P( X ) \to \mathcal P( X ) \quad\mbox{with}\quad \Phi^t_*(\nu) = \nu \circ \Phi^{-t},  
\end{equation}
also known as the transfer or Perron-Frobenius operator \cite{Baladi00,EisnerEtAl15}. The map $\delta$ has the equivariance property $ \Phi^t_* \circ \delta = \delta \circ \Phi^t $, represented by the top loop in the the left-hand column in Fig.~\ref{Scheme}.

Associated with the dynamical system are spaces of classical observables, which we take here to be spaces of complex-valued functions on $X$. A natural example is the space of continuous functions, denoted as $C(X)$, which also forms an (abelian) algebra with respect to the pointwise product of functions. The Koopman operator \cite{Koopman31,KoopmanVonNeumann32}, $U^t$, acts on observables in $C(X)$ by composition with the flow map, i.e., 
\begin{equation}
U^t: C(X) \to C(X) \quad\mbox{with}\quad  U^t f = f \circ \Phi^t;
\label{Koopmandef}
\end{equation}
see also Fig.~\ref{Koopman}. The horizontal arrow in the first line of the right-hand column in Fig.~\ref{Scheme} represents the action of the Koopman operator on a subalgebra $\rkha \subseteq C(X) $ that will be described in Sec.~\ref{secOverviewQuantum} below.

In this context, a simulator of the system can be described as a procedure which takes as an input an observable $ f \in C(X)$ and an initial condition $ x \in X $, and produces as an output a function $\hat f^{(t)}(x)$ approximating the evolution $ f( \Phi^t(x) ) $ of the observable under the dynamics. For instance, if $ \Phi^t$ is the flow generated by a system of ODEs $ \dot x = \vec V (x)$ on $ \mathcal X = \mathbb R^m $, and $X \subset \mathcal X$ is an invariant subset of this flow (e.g., an attractor), a standard simulation approach is to construct a finite-difference approximation $ \hat \Phi^t : \mathcal X \to \mathcal X $ of the dynamical flow based on a timestep $\Delta t$ (using interpolation to generate a continuous-time trajectory), and obtain $ \hat f^{(t)}(x) = f(\hat \Phi^t(x))$ by evaluating  the observable of interest $f$ on the approximate trajectory. The scheme then converges in a limit of $ \Delta t \to 0 $ by standard results in ODE theory and numerical analysis for observables $f$ of sufficient regularity. 

From an observable-centric standpoint, a simulator of the system corresponds to a linear operator $\hat U^t$ approximating the Koopman operator $U^t$, giving $\hat f^{(t)}(x) = \hat U^t f(x)$. For instance, the ODE-based approximation just mentioned can be described in this way for $\hat U^t f = f \circ \hat \Phi^t$, but note that not \emph{every} approximation of $U^t$ has to be of the form of a composition operator by a flow. Indeed, ``lifting'' the task of simulation from states to (classical) observables opens the possibility of using new approximation techniques, which in some cases can resolve computational bottlenecks, e.g., due to high dimensionality ($m$) of the ambient state space $\mathcal X$ \cite{WangEtAl20}. Invariably, every practical simulator $\hat U^t$ is restricted to act on a space of observables of finite dimension, $N$ (e.g., a subspace of $C(X)$ or $L^2$). In general, the computation cost of acting with $\hat U^t$ on elements of this space scales as $N^2$, but can be reduced to $O(N)$ if $\hat U^t $ is efficiently represented by a diagonal matrix. The evaluation cost of observables, which corresponds to summation of an $N$-term basis expansion such as a Fourier series, is typically $O(N)$. 

In what follows, rather than employing an approximation $\hat U^t$ acting on classical observables, our goal is to simulate the action of $ U^t $ using a \emph{quantum mechanical} system. As we will see, this can be achieved at a logarithmic cost of elementary quantum operations (gates); specifically, QECD allows simulation of spaces of classical observables of dimension $N=2^n$ using $O(n^2)$ gates.

\subsection{\label{secOverviewQuantum}Quantum computational representation}

The QECD framework effecting the representation of the classical system by a quantum mechanical system employs the following key spaces: 
\begin{enumerate}
    \item The classical  state space $X$.
    \item A Banach $^*$-algebra $\alg \subseteq C(X)$ of classical observables.
    \item An infinite-dimensional RKHS $\mathcal H \subset \rkha $.
    \item A finite-dimensional Hilbert space $\mathbb B_n$ associated with the quantum computer.
\end{enumerate}
The Hilbert spaces $\mathcal H$ and $\mathbb B_n$ have corresponding (non-abelian) algebras of bounded linear operators, $B(\mathcal H)$ and $B(\mathbb B_n)$, respectively, acting as quantum mechanical observables. Moreover, states on these algebras are represented by density operators, i.e., trace-class, positive operators of unit trace, acting on the respective Hilbert space. We denote the spaces of density operators on $\mathcal H$ and $\mathbb B_n$ by  $Q(\mathcal H)$ and $Q(\mathbb B_n)$, respectively. Below, $n$ represents the number of qubits, thus the dimension of $\mathbb B_n $ is $2^n$. 

The spaces of classical states and observables $X$ and $\alg$ are mapped into the spaces of quantum states and observables $Q(\mathbb B_n)$ and $B(\mathbb B_n)$, respectively; see Fig.~\ref{Scheme}. The following maps on states (left-hand column) and observables (right-hand column) transform the classical system into a quantum-mechanical one on $\mathbb B_n $: 
\begin{itemize}[wide]
    \item We construct a map $ \hat{\mathcal F}_n  : X \to Q( \mathbb B_n )$ from classical states (points) in $X$ to quantum states on $\mathbb B_n$. By analogy with the RKHS-valued feature maps in machine learning \cite{ScholkopfEtAl98}, $\hat{\mathcal F}_n$ will be referred to as a \emph{quantum feature map}. To arrive at $\hat{\mathcal F}_n $, the classical statistical space $ \mathcal P(X)$ is first embedded into the quantum mechanical state space $Q(\mathcal H)$ associated with $\mathcal H$ through a map $P : \mathcal P(X) \to Q(\mathcal H)$ (see~\eqref{eqP} below). The composite map $ \mathcal F := P \circ \delta $ thus describes a one-to-one quantum feature map from $X$ into $ Q(\mathcal H)$. Next, the infinite-dimensional space $Q(\mathcal H)$ is projected onto a finite-dimensional quantum state space $Q(\mathcal H_n)$ associated with a $2^n$-dimensional subspace $\mathcal H_n \subset \mathcal H$ by means of a map $\bm \Pi'_n : Q(\mathcal H) \to Q(\mathcal H_n)$. We refer to this level of description as {\em matrix mechanical} since all quantum states and observables are finite-rank operators, represented by  $2^n \times 2^n $ matrices. To arrive at the {\em quantum computational} state space, we finally apply a unitary $ \mathcal W_n : Q(\mathcal H_n) \to Q(\mathbb B_n)$, so that the full quantum feature map from $X$ to $Q(\mathbb B_n)$ takes the form $ \hat{\mathcal F}_n = \mathcal W_n \circ \bm \Pi'_n \circ P \circ \delta $. 

    \item We construct a linear map $ \hat T_n : \alg \to B( \mathbb B_n ) $ from classical observables in $\alg$ to quantum mechanical observables in $ B(\mathbb B_n)$. This map takes the form $\hat T_n= \mathcal W_n \circ \bm \Pi_n \circ T$, where $\bm \Pi_n : B(\mathcal H) \to B(\mathcal H_n)$ is a projection, so that $\hat T_n$ yields a quantum computational representation of classical observables passing through intermediate quantum mechanical and matrix mechanical representations. Here, $T : \alg \to B(\mathcal H) $ is one-to-one on real-valued functions in $\alg$, and $Tf$ is self-adjoint whenever $ f $ is real.
\end{itemize}

Next, we describe the maps governing the temporal evolution of states and observables, represented by horizontal arrows in Fig.~\ref{Scheme}:
\begin{itemize}[wide]
    \item At the quantum mechanical level, states in $Q(\mathcal H)$ evolve under the operator $ \Psi^t $ (horizontal arrow in the left-hand column) induced by a unitary Koopman operator $U^t=e^{tV}$ on $\mathcal H$. This evolution is generated by a skew-adjoint generator $ V: D(V) \to \mathcal H $, defined on a dense subspace $D(V) \subset \mathcal H $ and possessing a countable spectrum of eigenfrequencies. 
     \item The generator $V$ is mapped to a self-adjoint Hamiltonian $H_n: \mathbb B_n\to \mathbb B_n$  given by $ H_n = \mathcal W_n \bm \Pi_n V /  i$. This Hamiltonian is decomposable as a sum $  H_n = \sum_j G_j $ of mutually-commuting operators $G_j \in B(\mathbb B_n)$, each of which is of pure tensor product form, $G_j = \bigotimes_{i=1}^n G_{ij}$. The latter property enables quantum parallelism in the unitary evolution $ \hat \Psi^t_n : Q(\mathbb B_n) \to Q(\mathbb B_n)$ at the quantum computational level generated by $H_n$ (see horizontal arrow at the bottom of the left-hand column of Fig.~\ref{Scheme}). One of our main results is that $\hat \Psi^t$ can be implemented via a quantum circuit of size $O(n)$ and no interchannel communication (see Figs.~\ref{dim1} and~\ref{dim2}).   
    \item The horizontal arrow at the quantum mechanical level represents the action of the Heisenberg evolution operator $ \mathcal U^t : B(\mathcal H) \to B(\mathcal H)$. Under the assumption that the RKHS $\mathcal H$ is invariant under the Koopman operator, $ \mathcal U^t$ acts on $B(\mathcal H)$ by conjugation with $U^t$, i.e., $ \mathcal U^t A = U^t A U^{t*}$.
    \item The corresponding Heisenberg evolution operator at the quantum computational level, $ \hat{\mathcal U}^t_n : B(\mathbb B_n) \to B(\mathbb B_n)$, acting on quantum mechanical observables on the Hilbert space $ \mathbb B_n$, is represented by the horizontal arrow at the bottom of the right-hand column. This operator is obtained by projection of $\mathcal U^t$, viz.\ $\hat{\mathcal U}_n^t = \mathcal W_n \bm\Pi_n\mathcal U^t $. 
\end{itemize}

Given a classical initial condition $x \in X$, the quantum computational system constructed by QECD makes probabilistic predictions $ \hat f^{(t)}_n(x) $ of $ f( \Phi^t(x))$ through quantum mechanical measurement of the projection-valued measure (PVM) \cite{Barnett09,Peres02} associated with the quantum register on the quantum state $ \hat \rho^{(t)}_{x,n} := \hat \Psi^t_n(\hat \rho_{x,n})$, where $\hat \rho_{x,n} = \hat{\mathcal F}_n(x)$. The state $\hat \rho_{x,n}$ is prepared by means of a circuit of size $O(n)$, which is applied to the standard initial state vector of the quantum computer. Furthermore, the measurement step is effected by performing a rotation $ \hat \rho^{(t)}_{x,n} \mapsto \tilde \rho^{(t)}_{x,n} $ by a QFT, which is implementable via a circuit of size $O(n^2)$. An ensemble of such measurements then approximates the quantum mechanical expectation value 
\begin{equation}
    \label{eqQuantumEvolution}
    \langle \hat T_n f \rangle_{\hat \rho^{(t)}_{x,n}} := f^{(t)}_n(x). 
\end{equation}
The function $x \mapsto f^{(t)}_n(x)$ converges in turn uniformly to the true classical evolution, i.e., $U^t f(x)$, in the large-qubit limit, $n\to \infty$. We will return to these points in a more detailed discussion in Secs.~\ref{secMatrixMechanical}--\ref{secPreparation}.

In summary, the key distinguishing aspects of QECD are as follows:
\begin{enumerate}[wide]
    \item \emph{Dynamical consistency.} The predictions made by the quantum quantum computational system via~\eqref{eqQuantumEvolution} converge to the true classical evolution as the number of qubits $n$ increases. In particular, since $ \dim \mathbb B_n = 2^n$, the convergence is exponentially fast in $n$.  
    \item \emph{Quantum efficiency.} The full circuit implementation of the scheme, including state preparation, dynamical evolution, and measurement, requires a circuit of size $O(n^2)$ and depth $O(n)$. Since, as just mentioned, the dimension of $\mathbb B_n$ increases exponentially with $n$, the quantum computational system constructed by QECD has an exponential advantage over classical simulators of the underlying $2^n$-dimensional subspace of classical observables. 
    \item \emph{State preparation.} The quantum computational state $\hat \rho_{x,n}$ corresponding to classical state $x$ is prepared by passing the standard initial state vector of the quantum computer through a circuit of size $O(n)$ and depth $O(1)$. This overcomes the expensive (potentially exponential) state preparation problem affecting many quantum computational algorithms. 

    \item \emph{Measurement process.} The process of querying the system to obtain predictions is a standard projective measurement of the quantum register. Importantly, no quantum state tomography or auxiliary classical computation is needed to retrieve the relevant information.
\end{enumerate}

In the ensuing sections, we lay out the properties of the classical system under study (Sec.~\ref{secClassical}), and describe the conversion to the quantum computational system using QECD (Secs.~\ref{secQuantumMechanical}--\ref{secPreparation}). 

\section{\label{secClassical}Classical dynamics and observables}

\subsection{\label{secClassicalDynamics}Dynamical system}

We focus on the class of continuous, measure-preserving, ergodic flows with a pure point spectrum generated by finitely many eigenfrequencies and continuous corresponding eigenfunctions. Every such system is topologically conjugate (for our purposes, equivalent) to an ergodic rotation on a $d$-dimensional torus, so we will set $ X = \mathbb T^d$ without loss of generality. Using the notation $ x = ( \theta^1, \ldots, \theta^d ) $ to represent a point $ x \in \mathbb T^d$, where $ \theta^j \in [ 0, 2 \pi )$ are canonical angle coordinates, the dynamics is described by the flow map
\begin{equation}
    \label{eqTorusRotation}
    \Phi^t( x ) = ( \theta^1 + \alpha_1 t, \ldots, \theta^d + \alpha_d t ) \mod 2 \pi,
\end{equation}
where $ \alpha_1, \ldots, \alpha_d $ are positive,  rationally independent (incommensurate) frequency parameters. This dynamical system is also known as a linear flow on the $d$-torus, but note that $\mathbb T^d$ is not a linear space. In dimension $ d > 1 $, the orbits $ \Phi^t(x)$ of the dynamics do not close by incommensurability of the $ \alpha_j $, each forming a dense subset of the torus (i.e., a given orbit passes by any point in $\mathbb T^d$ at an arbitrarily small distance). The case $ d = 2 $ is shown for two choices of $\alpha_j$ in Fig.~\ref{Torus}, illustrating the difference between ergodic and non-ergodic dynamics. In dimension $d=1$, the flow map corresponds to a harmonic oscillator on the circle, $ \mathbb T^1 = S^1$, where each orbit is periodic and samples the whole space.

It is important to note that if the dynamical system is not presented in the form of a torus rotation, then standard constructions from ergodic theory may be used to transform it into the form in~\eqref{eqTorusRotation}. These constructions are based entirely on spectral objects (i.e., eigenfunctions and eigenfrequencies) associated with the Koopman operator of the system. See Sec.~I~D in the SM for further details \cite{Suppl}. The same constructions allow one to treat the case where $X$ is a periodic or quasiperiodic attractor of a  dynamical flow $\Phi^t : \mathcal X \to \mathcal X $ on a higher-dimensional space $\mathcal X \supseteq X$. By virtue of these facts, the quantum mechanical framework described in this paper can readily handle simulations of observables of general measure-preserving, ergodic flows with pure point spectrum. Relevant examples include ODE models on $\mathcal X = \mathbb R^m$ with quasiperiodic attractors \cite{GrebogiEtAl85}, as well as PDE models where $\mathcal X$ is an infinite-dimensional function space. The latter class includes many pattern-forming physical systems such as thermal convection flows \cite{EckeEtAl91}, plasmas \cite{WeixingEtAl93}, and reaction-diffusion systems \cite{KalogirouEtAl15} in moderate-forcing regimes.

\begin{figure}[t]
    \centering
    \includegraphics[width=\linewidth]{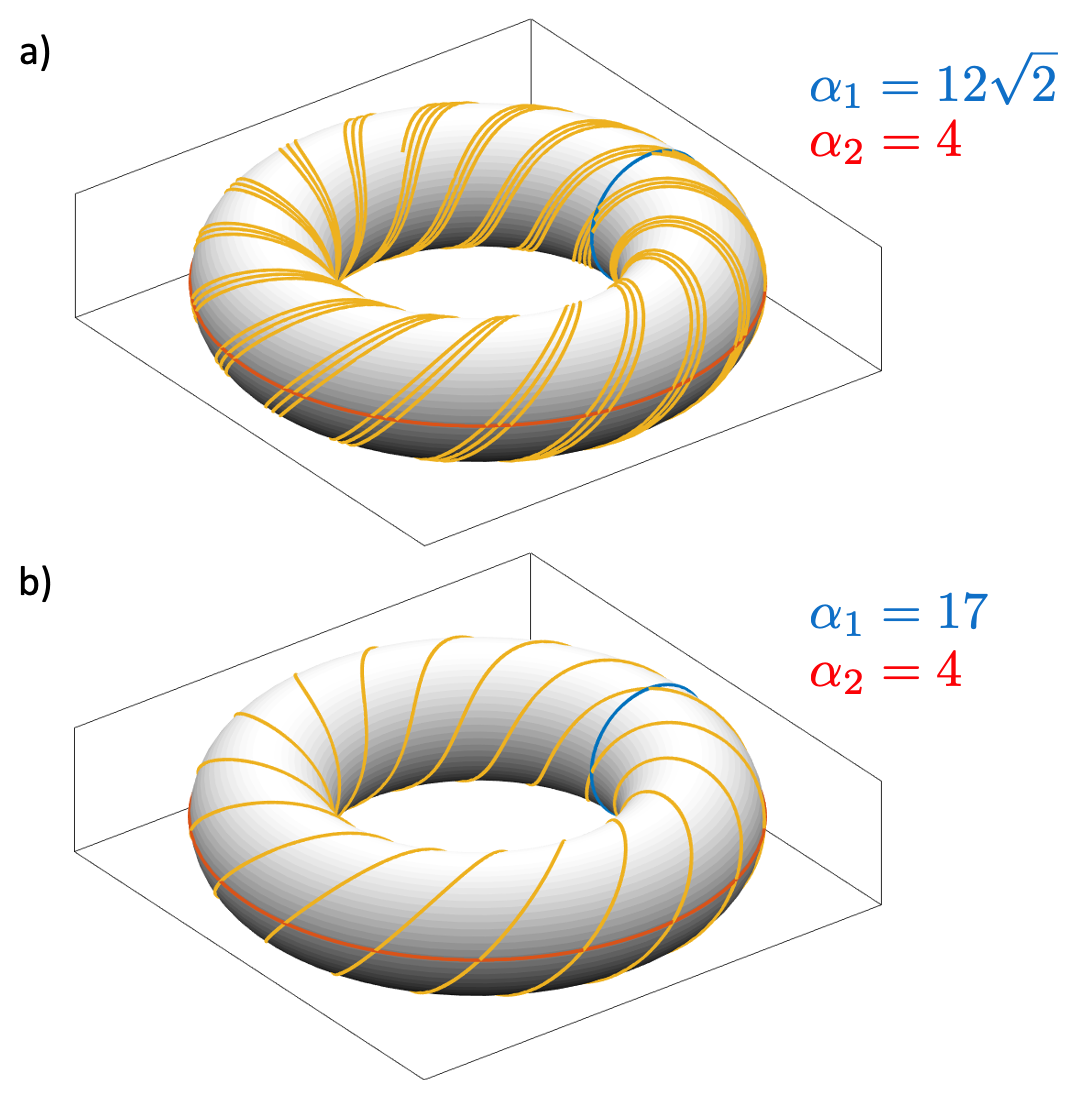}
    \caption{Ergodic (a) and non-ergodic (b) linear flows on the two-dimensional torus $\mathbb T^2$. In (a) the ratio of the frequency parameters $\alpha_1/\alpha_2$ is irrational, and the trajectory starts to fill the torus surface. In (b) the  ratio of the frequencies is rational, and the trajectory is closed. The corresponding frequency parameters $\alpha_1$ and $\alpha_2$ are given to the right of each figure.}
    \label{Torus}
\end{figure}

At any dimension $ d $, the flow in~\eqref{eqTorusRotation} is measure-preserving and ergodic for a probability measure $\mu$ given by the normalized Haar measure. The dynamics of classical observables $ f : X \to \mathbb C$ is governed by the Koopman operator $U^t$, which is a \emph{linear} operator, acting by composition with the dynamical flow in accordance with~\eqref{Koopmandef} \cite{Baladi00,BudisicEtAl12,EisnerEtAl15}. The Koopman operator acts as an isometry on the Banach space of continuous functions on $X$, i.e., $ \lVert U^t f \rVert_{C(X)} = \lVert f \rVert_{C(X)}$, where $ \lVert f \rVert_{C(X)} = \max_{x\in X} \lvert f(x) \rvert$ is the uniform norm. In addition, $U^t$ lifts to a unitary operator on the Hilbert space $L^2(\mu)$ associated with the invariant measure. That is, using  $ \langle f, g \rangle_{L^2(\mu)} = \int_X f^* g \, d\mu$ to denote the $L^2(\mu)$ inner product, we have $ \langle U^t f, U^t g \rangle_{L^2(\mu)} = \langle f, g \rangle_{L^2(\mu)} $ for all $ f, g \in L^2(\mu)$, which implies, in conjunction with the invertibility of $\Phi^t$, that 
\begin{displaymath}
    U^{t*} = {U^t}^{-1}. 
\end{displaymath}
Here, $U^{t*}$ denotes the operator adjoint, which is also frequently denoted as $(U^{t})^\dag$. The collection $ \{ U^t : L^2(\mu) \to L^2(\mu) \}_{t \in \mathbb R}$ then forms a strongly continuous unitary group under composition of operators \footnote{This implies (i) $U^s \circ U^t = U^{s+t}$, (ii) $(U^t)^{-1} = U^{-t}$, and (iii) $ \lim_{t\to 0} U^t f = f$ for all $s,t \in \mathbb R $ and $ f \in L^2(\mu)$.}. See again Fig.~\ref{Koopman}. 

By Stone's theorem on one-parameter unitary evolution groups \cite{Stone32}, the Koopman group on $L^2(\mu)$ has a skew-adjoint infinitesimal generator, i.e., an operator $V : D(V) \to L^2(\mu)$ defined on a dense subspace $D(V) \subset L^2(\mu)$ satisfying 
\begin{equation}
V^* = - V \quad\mbox{and}\quad V f = \lim_{t\to0} \frac{U^t f - f}{t}, 
\label{eqGenerator}
\end{equation}
for all $ f \in D(V)$. The generator gives the Koopman operator at any time $t $ by exponentiation, 
\begin{equation}
U^t = e^{ tV}.
\label{eqExpKoopman}
\end{equation}
Modulo multiplication by $ 1/i $ to render it self-adjoint, it plays an analogous role to a quantum mechanical Hamiltonian generating the unitary Heisenberg evolution operators.        

As already noted, the torus rotation in~\eqref{eqTorusRotation} is a canonical representative of a class of continuous-time continuous dynamical systems on topological spaces with quasiperiodic dynamics generated by finitely many basic frequencies. This means that every such system can be transformed into an ergodic torus rotation of a suitable dimension by a homeomorphism (continuous, invertible map with continuous inverse). By specializing to this class of systems (as opposed to a more general measure-preserving, ergodic flow), we gain two important properties: 
\begin{enumerate}[wide]
    \item The dynamics has no mixing (chaotic) component. This implies that the spectrum of the Koopman operator for this system acting on $L^2(\mu) $, or a suitable RKHS as in what follows,  is of ``pure point'' type, obviating complications arising from the presence of continuous spectrum as would be the case under mixing dynamics. 
    \item The state space $X$ is a smooth, closed manifold with the structure of a connected, abelian Lie group. The abelian group structure, in particular, renders this system amenable to analysis with Fourier analytic tools. 
\end{enumerate}

Below, we use a $d$-dimensional vector $ j = ( j_1, \ldots, j_d ) \in \mathbb Z^d$ to represent a generic multi-index, and
    \begin{equation}
        \phi_{j}(x) = \prod_{m=1}^d\varphi_{j_m}(\theta^m) \quad\mbox{with}\quad \varphi_{l}(\theta) = e^{i l \theta },
        \label{FourierF}
    \end{equation}
to represent the Fourier functions on $\mathbb T^d$. In Sec.~\ref{secConclusions}, we will discuss possible avenues for extending the framework presented here to other classes of dynamical systems, such as mixing dynamical systems with continuous spectra of the Koopman operators.

\subsection{\label{secRKHA}Algebra of observables}

According to the scheme described in Sec.~\ref{secOverviewQuantum}, we perform quantum conversion of an (abelian) algebra $\alg$ of classical observables, i.e., a space of complex-valued functions on $X$ which is closed under the pointwise product of functions. We construct $\alg$ such that it is a subalgebra of $C(X)$ with additional (here, $C^\infty$) regularity and RKHS structure. This structure is induced by a smooth, positive-definite kernel function $ \krn : X \times X \to \mathbb R$, which has the following properties for every point $x \in X$ and function $ f \in \rkha$: 
\begin{enumerate}
    \item The kernel section $\krn_x := \krn (x, \cdot )$ lies in $\rkha$. 
    \item Pointwise evaluation, $ x \mapsto f(x)$, is continuous, and satisfies
    \begin{equation}
        \label{eqReproducing}
        f(x) = \langle \krn_x, f \rangle_\rkha, 
    \end{equation}
    where $\langle\cdot,\cdot\rangle_\rkha$ is the inner product of $\rkha$.  
\end{enumerate}
Equation~\eqref{eqReproducing} is known as the \emph{reproducing property}, and underlies the many useful properties of RKHSs for tasks such as function approximation and learning. Note, in particular, that $L^2$ spaces, which are more commonly employed in Koopman operator theory and numerical techniques (see Sec.~\ref{secClassicalDynamics}), do not have a property analogous to~\eqref{eqReproducing}. In fact, pointwise evaluation is not even defined for the $L^2(\mu)$ Hilbert space on $\mathbb T^d$.  See Refs.~\cite{CuckerSmale01,FerreiraMenegatto13,PaulsenRaghupathi16} for detailed expositions on RKHS theory.    

Our construction of $\rkha$ follows Ref.~\cite{DasGiannakis20b}. We begin by setting parameters $ p \in (0,1) $ and $ \tau > 0$, and defining the map $ \lvert \cdot \rvert_p : \mathbb Z^d \to \mathbb R_+$,
\begin{displaymath}
    \lvert j \rvert_p := \lvert j_1 \rvert^p + \ldots + \lvert j_d \rvert^p,
\end{displaymath}
and the functions $ \psi_{j} \in C(X)$,  
\begin{displaymath}
    \psi_{j} := e^{-\tau \lvert j \rvert_p  / 2} \phi_{j} \quad \text{with} \quad j \in \mathbb Z^d. 
\end{displaymath}
We then define a kernel $ \krn : X \times X \to \mathbb R_+$ via the series 
\begin{equation}
    \label{eqKTorus}
    \krn(x,x') = \sum_{j \in \mathbb Z^d} \psi^*_{j}(x) \psi_{j}(x'),  
\end{equation}
where the sum over $ j $ converges uniformly on $ X \times X$ to a smooth function. Intuitively, $ \tau $ can be thought of as a locality parameter for the kernel, meaning that as $ \tau $ decreases $\krn (x,x')$ becomes increasingly concentrated near $x = x'$,  approaching a $\delta$-function as $ \tau \to 0$. 

An important property of the kernel that holds for any $ \tau > 0 $ is that it is translation-invariant on the abelian group $ X= \mathbb T^d$. That is, using additive notation to represent the binary group operation on $ X$, we have 
\begin{equation}
    \krn(x+y,x'+y) = \krn(x,x'), \quad \forall x,x',y \in X.
    \label{eqKTrans}
\end{equation}
In particular, setting $ y = \Phi^t(e)$, where $e$ is the identity element of $X$, and noticing that the dynamical flow from~\eqref{eqTorusRotation} satisfies $ \Phi^t(x) = x + \Phi^t(e)$, we deduce the dynamical invariance property
\begin{displaymath}
    \krn(\Phi^t(x),\Phi^t(x') ) = \krn(x,x'), \quad \forall x,x' \in X, \quad \forall t \in \mathbb R.
\end{displaymath}

In Ref.~\cite{DasGiannakis20b} it was shown that for every $ p > 0 $ and $ \tau > 0 $, the kernel $\krn$ in~\eqref{eqKTorus} is a strictly positive-definite kernel on $X$, so it induces an RKHS, $\rkha $, which is a dense subspace of $C(X)$. One can verify that the collection $ \{ \psi_{j} : j \in \mathbb Z^d \} $ forms an orthonormal basis of $\rkha$, consisting of scaled Fourier functions, so every observable $f \in \rkha$ admits the expansion
\begin{displaymath}
    f = \sum_{j \in \mathbb Z^d} \tilde f_{j} \psi_{j} = \sum_{j \in \mathbb Z^d} \tilde f_{j} e^{-\tau \lvert j \rvert_p / 2 } \phi_{j},
\end{displaymath}
where the sum over $ j $ converges in $\rkha$ norm. The above manifests the fact that $\rkha$ contains continuous functions with Fourier coefficients decaying faster than any polynomial, implying in turn that every element of $\rkha$ is a smooth function in $ C^\infty(X)$. 

It can also be shown that the RKHS induced by $\krn$  acquires an important special property which is not shared by generic RKHSs---namely, it becomes an abelian, unital, Banach $^*$-algebra under pointwise multiplication of functions. We list the defining properties for completeness in Appendix~\ref{BanachStar}. In Ref.~\cite{DasGiannakis20b}, the space $\rkha$ was referred to as a {\em reproducing kernel Hilbert algebra (RKHA)} as it enjoys the properties of both RKHSs and Banach algebras. In particular, a distinguishing aspect of  $\rkha$ is that it simultaneously has Hilbert space structure (as $L^2(\mu)$) and Banach $^*$-algebra structure (as $C(X)$), while also allowing pointwise evaluation by continuous functionals, (i.e., the reproducing property in~\eqref{eqReproducing}). The RKHAs associated with the family of kernels in~\eqref{eqKTorus} are examples of harmonic Hilbert spaces on locally compact abelian groups \cite{FeichtingerEtAl07}, and are also closely related (by Fourier transforms) to weighted convolution algebras \cite{KuznetsovaMolitorBraun12} on the dual group $\mathbb Z^d$ of $X = \mathbb T^d$.  

Table~\ref{tableSpaces} summarizes the properties of $\rkha$ and other function spaces on $ X$ employed in this work. In what follows, we shall let $ \rkha_\text{sa}$ denote the set of self-adjoint elements of $ \rkha$, i.e., the elements $ f \in \rkha$ satisfying $ f^* = f $. Since the $^*$ operation of $\rkha $ corresponds to complex conjugation of functions, it follows that $\rkha_\text{sa}$ contains the real-valued functions in $\rkha$. Note that if $ f = \sum_{j\in\mathbb Z^d} \tilde f_j \psi_j $ is an element of $\rkha_\text{sa}$, then its expansion coefficients in the $\psi_j $ basis satisfy $ \tilde f^*_j = \tilde f_{-j}$.    

Next, we state a product formula for the orthonormal basis functions $ \psi_{j}$, which follows directly from their definition, viz.
\begin{align}
\label{eqProd}
        \psi_j \psi_l  &= c_{jl} \psi_{j+l} \quad \mbox{with} \nonumber\\
        c_{jl} &= \exp\left(-\tau \frac{\lvert j \rvert_p + \lvert l \rvert_p - \lvert j +l \rvert_p}{2}\right).
\end{align}
In the above, we interpret the coefficients $c_{jl}$ as ``structure constants'' of the RKHA $\rkha$. Figure~\ref{figStructConst} displays representative matrices formed by the $c_{jl} $ in dimension $ d = 1 $ and 2 for $ p = 1/4 $ and $ \tau = 1/ 4$. 

\begin{figure}
    \centering
    \includegraphics[width=.9\linewidth]{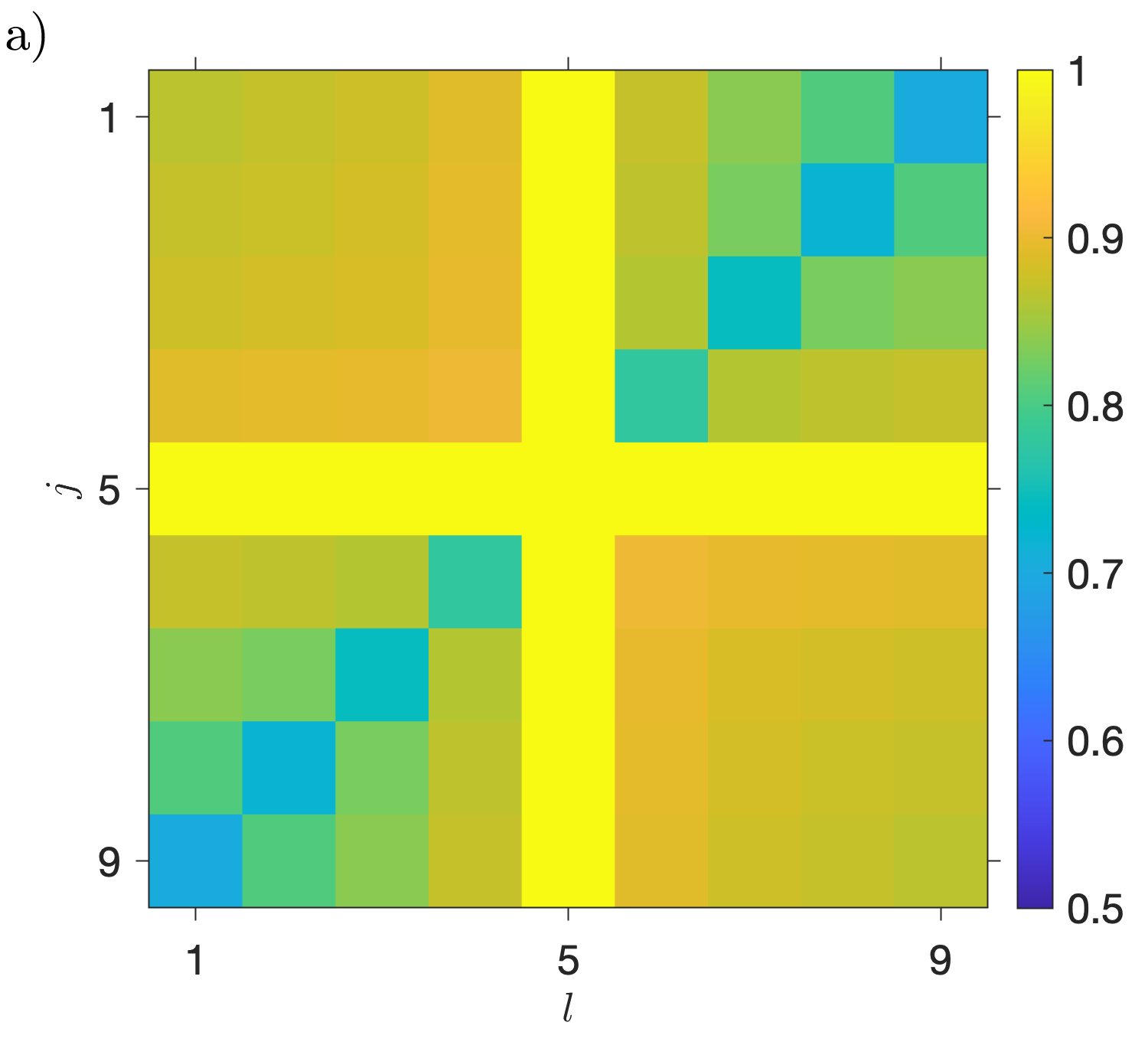}
    \includegraphics[width=.9\linewidth]{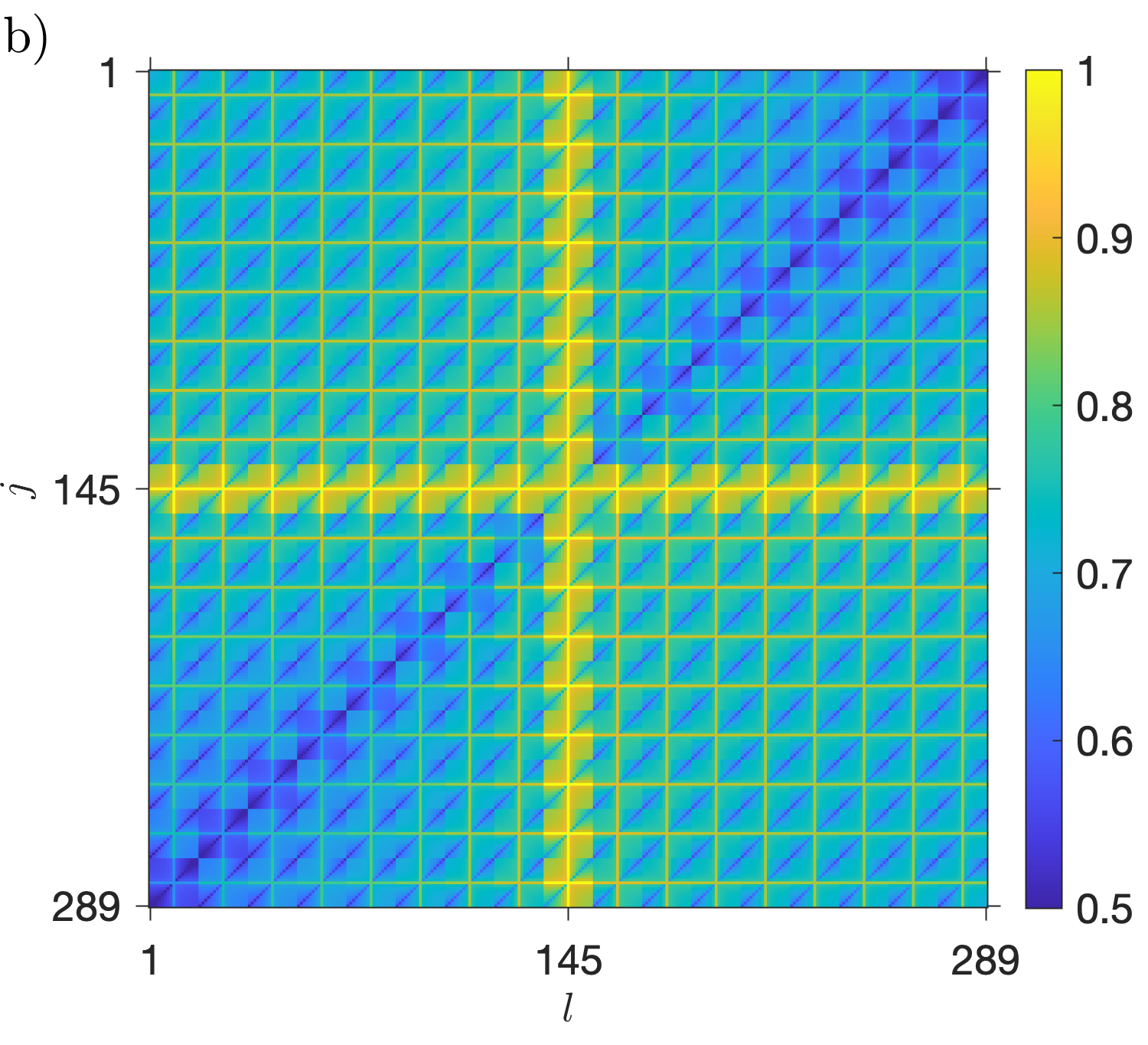}
    \caption{\label{figStructConst}Structure constant matrices $c_{jl}$ for reproducing kernel Hilbert algebras on (a) the circle with $d=1$ and (b) the 2- torus with $d=2$. In both cases, we use the parameter values $ p = 1/4 $ and $ \tau = 1/ 4$ as given in~\eqref{eqProd}. In (a), we consider indices in the range $-2^{n-1} \leq j,l \leq 2^{n-1}$ with $ n = 3$. In (b), the multi-indices $ j = (j_1,j_2)$ and $ l = (l_1,l_2)$ satisfy $ -2^{n/2 -1} \leq j_i,l_i \leq 2^{n/2 -1}$ with $n = 8 $. In both (a) and (b), we map $ j $ and $l $ into standard matrix indices  $1, 2, \ldots, ( 2^{n/d} + 1 )^d$ (which results in $(2^4+1)^2=289$ for (b)) by lexicographical ordering. The matrix in (b) is thus equal to the Kronecker product of the matrix in (a) with itself.}
\end{figure}

In the special case $ d = 1$, we will let $\rkha^{(1)} $ be the RKHA on the circle $S^1 \equiv \mathbb T^1$ constructed as above. We denote the reproducing kernel of $\rkha^{(1)} $ by $\krn^{(1)}$, and let $ \psi_j^{(1)}$, $ j \in \mathbb Z$, be the corresponding orthonormal basis functions with $ \psi_j^{(1)}(\theta) = e^{-\lvert j \rvert^p \tau / 2 } \varphi_j(\theta)$. It then follows that $\rkha$ admits the tensor product factorization 
\begin{equation}
    \label{eqRKHAProd}
    \rkha = \bigotimes_{i=1}^d \rkha^{(1)},
\end{equation}
and the reproducing kernel and orthonormal basis functions of $\rkha$ similarly factorize as 
\begin{align*}
    \krn(x,x') &= \prod_{i=1}^d\krn^{(1)}( \theta^i, \theta^{i\prime} ), \\
    \psi_j(x)&=\prod_{i=1}^d \psi^{(1)}_{j_i}(\theta^i),
\end{align*}
where $ j = (j_1, \ldots, j_d) $, and $\theta^i$, $\theta^{i\prime}$ are canonical angle coordinates of the points $x = (\theta^1, \ldots, \theta^d)$, $x'= (\theta^{1\prime}, \ldots, \theta^{d\prime})$, respectively (see also~\eqref{FourierF}). 
\begin{table}
    \centering
    \begin{tabularx}{\linewidth}{@{\extracolsep{\fill}}lccccc}
        \hline\hline
        & $L^2(\mu)$ & $L^\infty(\mu)$ & $C(X)$ & $C^\infty(X)$ & $\rkha$\\
        \hline
        Completeness & \checkmark & \checkmark & \checkmark & \crossmark & \checkmark \\
        Hilbert space structure & \checkmark & \crossmark & \crossmark & \crossmark & \checkmark \\ 
        Pointwise evaluation & \crossmark & \crossmark & \checkmark & \checkmark & \checkmark \\
        $^*$-algebra structure & \crossmark & \checkmark & \checkmark & \checkmark & \checkmark \\
        $C^\infty$ regularity & \crossmark & \crossmark & \crossmark & \checkmark & \checkmark \\
        \hline\hline
    \end{tabularx}
    \caption{Properties of representative spaces of classical observables on the compact abelian group $X = \mathbb T^d$. The space $\rkha$ is an RKHA, which, in addition to being an RKHS it has Banach $^*$-algebra structure.}
    \label{tableSpaces}
\end{table}

\subsection{\label{secRKHAEvolution}Evolution of RKHA observables}

From an operator-theoretic perspective, simulating the dynamical evolution of a continuous classical observable $f \in C(X)$  can be understood as approximating the Koopman operator $U^t $ on $C(X)$; for, if $U^t$ were known one could use it to compute $U^tf(x) = f(\Phi^t(x))$ for every observable $ f \in C(X)$, time $ t \in \mathbb R$, and initial condition $x \in X $ (cf.\ Sec.~\ref{secOverview}). Yet, despite its theoretical appeal, consistently approximating the Koopman operator on $C(X)$ is challenging in practice, as this space lacks the Hilbert space structure underpinning commonly employed operations used in numerical techniques, such as orthogonal projections (see Table~\ref{tableSpaces}). For a measure-preserving, ergodic dynamical system such as the torus rotation in~\eqref{eqTorusRotation}, a natural alternative is to consider the unitary Koopman operator on the $ L^2(\mu) $ Hilbert space associated with the invariant measure $\mu$. While this choice addresses the absence of orthogonal projections on $C(X)$, $L^2(\mu)$ lacks the notion of pointwise evaluation of functions, so one must correspondingly abandon the notion of pointwise forecasting in this space.   

In light of the above considerations, RKHSs emerge as attractive candidates of spaces of classical observables in which to perform simulation, as they allow pointwise evaluation through the reproducing property in~\eqref{eqReproducing} while having a Hilbert space structure. Unfortunately, an obstruction to using RKHSs in dynamical systems forecasting is that a general RKHS $ \mathcal H$ on $X$ need not be preserved under the dynamics, even if the reproducing kernel $k$ is continuous. That is, in general, if $f: X \to \mathbb C$ lies in an RKHS, the composition $ f \circ \Phi^t$ need not lie in the same space, and thus the Koopman operator is not well-defined as an operator mapping the RKHS into itself \cite{DasGiannakis20}.  Intuitively, this is because membership of a function $f$ in an RKHS generally imposes stringent requirements in its regularity, as we discussed for example in Sec.~\ref{secRKHS} with the rapid decay of Fourier coefficients, which need not be preserved by the dynamical flow.  

An exception to this obstruction occurs when the reproducing kernel is translation-invariant, which holds true for the class of kernels introduced in Sec.~\ref{secRKHA} (see~\eqref{eqKTrans}). In fact, it can be shown \cite{Giannakis21b} that  the RKHA $\rkha$ associated with the kernel $\tilde k$ in~\eqref{eqKTorus},is invariant under the Koopman operator $U^t$ for all $ t \in \mathbb R$, and  $U^t : \rkha \to \rkha $ is unitary and strongly continuous. Analogously to the $L^2(\mu)$ case, the evolution group $ \{ U^t : \rkha \to \rkha \}_{t \in \mathbb R} $ is uniquely characterized through its skew-adjoint generator $ V : D(V) \to \rkha$, defined on a dense subspace $D(V) \subset \rkha $, and acting on observables as displayed in~\eqref{eqGenerator}. 

For the torus rotation in~\eqref{eqTorusRotation}, $V$ is diagonalizable in the $ \{ \psi_j \} $ basis of $\rkha$. That is, for $ j = (j_1, \ldots, j_d) \in \mathbb Z^d$, we have
\begin{displaymath}
    V \psi_j = i \omega_j \psi_j,
\end{displaymath}
where $ \omega_j $ is a real eigenfrequency given by
\begin{equation}
    \label{eqOmega}
    \omega_j = j_1 \alpha_1 + \ldots + j_d \alpha_d. 
\end{equation}
Moreover, $ V $ admits a decomposition  into mutually commuting, skew-adjoint generators $ V_1, \ldots, V_d $  satisfying
\begin{equation}
    V_l \psi_j = i j_l \alpha_l \psi_j \quad \mbox{with}\quad l=1, \dots, d. 
    \label{eqVJ}
\end{equation}
In particular, since $ \{ \psi_j \} $ is an orthonormal basis, \eqref{eqVJ} completely characterizes $V_l$, and we have 
\begin{equation}
    \label{eqVDecomp}
    \begin{gathered}
        V = V_1 + \ldots + V_d, \\
        [ V_j, V_l ] = 0, \quad [ V_j, V ] = 0.
    \end{gathered}
\end{equation}
It should be noted that the Koopman generator on $L^2(\mu)$ admits a similar decomposition to~\eqref{eqVDecomp}; see e.g., Ref.~\cite{Giannakis19} for further details. Analogously to the $L^2(\mu)$ case, the Koopman operator on $\rkha$ can be recovered at any $ t \in \mathbb R$ from the generator by exponentiation as given in~\eqref{eqExpKoopman}. 

\section{\label{secQuantumMechanical}Embedding into an infinite-dimensional quantum system}

The initial stages of the QECD procedure outlined in Sec.~\ref{secOverview} involve embedding classical states and observables into states and observables of quantum system associated with an infinite-dimensional RKHS $\mathcal H$, arriving at the quantum mechanical level depicted in Fig.~\ref{Scheme}. In this section, we describe the construction of this quantum system and associated embeddings of classical states and observables. First, in Sec.~\ref{secRKHS} we build $\mathcal H$ as a subspace of the RKHA $\rkha$ from Sec.~\ref{secRKHA}. Then, in Secs.~\ref{secQuantumFeature} and~\ref{secQuantumRep} we establish representation maps $ Q : X \to Q(\mathcal H)$ and $ T : \rkha \to B(\mathcal H) $ from classical states and observables into quantum mechanical states and observables, respectively, on $\mathcal H$. Note that the quantum mechanical embedding of states $Q$ passes through an intermediate classical statistical level associated with probability measures on the classical state space (second row in the left-hand column of Fig.~\ref{Scheme}). In Secs.~\ref{secConsistency} and~\ref{secQuantumDynamical}, we establish the classical-quantum consistency and associated dynamical properties of our embeddings.

\subsection{\label{secRKHS}Reproducing kernel Hilbert space}

We choose  $\mathcal H$ as an infinite-dimensional subspace of the RKHA $\rkha$ containing zero-mean functions. For that, we introduce the (infinite) index set
\begin{equation}
    \label{eqJ}
    J = \{ ( j_1, \ldots, j_d ) \in \mathbb Z^d : j_i \neq 0 \},
\end{equation}
and define $\mathcal H$ as the corresponding infinite-dimensional closed subspace
\begin{displaymath}
    \mathcal H = \overline{\spn \{ \psi_j : j \in J \} }.
\end{displaymath}
The space $ \mathcal H $ is then an RKHS with the reproducing kernel
\begin{equation}
    \label{eqK}
    k(x,x') = \sum_{ j \in  J } \psi_j^*(x) \psi_j(x').
\end{equation}
In particular, for every $ f \in \mathcal H$, which is necessarily an element of $\rkha$, the reproducing property in~\eqref{eqReproducing} reads
\begin{displaymath}
    f(x) = \langle k_x, f \rangle_{\mathcal H} = \langle \krn_x, f \rangle_{\rkha},
\end{displaymath}
where $k_x := k(x,\cdot)$ is the section of the kernel $k$ at $x \in X$, and $\langle \cdot, \cdot \rangle_{\mathcal H}$ denotes the inner product of $\mathcal H$. 

By excluding zero indices from the index set $J$,  every element $f$ of $\mathcal H$ has zero mean, $ \int_X f \, d\mu = 0$, as noted above. The reason for adopting this particular definition for $\mathcal H$, instead of, e.g., working with the entire space $\rkha$, is that later on it will facilitate construction of $2^n$-dimensional subspaces $\mathcal H_n \subset \mathcal H$ suitable for quantum computation (see Sec.~\ref{secMatrixMechanical}). In what follows, $ \Pi : \rkha \to \rkha$ will denote the orthogonal projection with $\ran  \Pi = \mathcal H$. Moreover, we set 
\begin{align*}
    \kappa &= k(x,x) = \sum_{j \in J} e^{-\tau \lvert j \rvert_p},\\ 
    \tilde \kappa &= \krn(x,x) = \sum_{j \in \mathbb Z^d} e^{-\tau \lvert j \rvert_p},
\end{align*}
where these definitions are independent of the point $x \in X$ by~\eqref{eqKTrans}. We also note that, by construction, $ \mathcal H$ is a Koopman-invariant subspace of $\rkha$, so we may define unitary Koopman operators $U^t : \mathcal H \to \mathcal H$ by restriction of $U^t : \rkha \to \rkha $ from Sec.~\ref{secRKHAEvolution}.

\subsection{\label{secQuantumFeature}Representation of states with a quantum feature map}

For our purposes, a key property that the RKHS structure of $\mathcal H$ endows is the \emph{feature map}, which is the continuous map $ F : X \to \mathcal H$ mapping classical state $x \in X $ to the RKHS function 
\begin{equation}
    \label{eqFeature}
    F(x) = k_x. 
\end{equation}
It can be shown that for the choice of kernel in~\eqref{eqK}, $F$ is an injective map, and the functions $ \{ F(x) \in \mathcal H : x \in X \} $ are linearly independent. It is then natural to think of the normalized feature vectors 
\begin{equation}
    \xi_x := \frac{k_x}{ \lVert k_x \rVert_{\mathcal H} } = \frac{k_x}{\kappa}
\label{feature_vec}    
\end{equation}
as ``wavefunctions'' corresponding to classical states $x \in X$. 

We can generalize this idea by associating every such wavefunction $\xi_x$ with the pure quantum state $ \rho_x = \langle \xi_x, \cdot \rangle_{\mathcal H} \xi_x$. The mapping $\mathcal F: X \to Q(\mathcal H) $ with  
\begin{equation}
    \label{eqQ}
    \mathcal F(x) = \rho_x
\end{equation}
then describes an embedding of the classical state space $X$ into quantum mechanical states in $ Q(\mathcal H)$, which we refer to as a \emph{quantum feature map}. Note that there is no loss of information in representing $x \in X $ by $ \rho_x \in Q(\mathcal H)$. Moreover, $\mathcal F$ can be understood as a composition $\mathcal F = P \circ \delta $, where $\delta  : X \to \mathcal P(X)$ maps classical state $ x \in X  $ to the Dirac  probability measure $ \delta_x \in \mathcal P(X)$, and $ P  : \mathcal P(X) \to Q(\mathcal H)$ is a map from classical probability measures on $ X $ to quantum states on $\mathcal H$, such that 
\begin{equation}
    \label{eqP}
    P(p) = \int_X \rho_x \, dp(x).
\end{equation}
The map $P$ describes an embedding of the state space $X$ into the space of probability measures $\mathcal P(X)$, i.e., the classical statistical level in the left-hand column of Fig.~\ref{Scheme}. See Ref.~\cite{DasGiannakis20b} for further details on the properties of this map. 

By virtue of it being an RKHS, we can also define classical and quantum feature maps for the RKHA $\rkha$. Specifically, we set $ \tilde F : X \to \rkha$ and $ \tilde{\mathcal F} : X \to Q(\rkha) $, where
\begin{equation}
    \label{eqQRKHA}
    \tilde F(x) = \langle \krn_x, \cdot \rangle_{\rkha}, \quad \tilde{\mathcal F}(x) = \langle \tilde \xi_x, \cdot \rangle_{\rkha} \tilde \xi_x,
\end{equation}
and $ \tilde \xi_x  = \krn_x / \lVert \krn_x \rVert_{\rkha}$. The feature maps $\tilde F$ and $\tilde{\mathcal F}$ have analogous properties to $F$ and $\mathcal F$, respectively, which we do not discuss here in the interest of brevity.

\subsection{\label{secQuantumRep}Representation of observables}

The quantum mechanical representation of classical observables in $\rkha$  is considerably facilitated by the Banach algebra structure of that space. In Sec.~\ref{secRKHARep}, we leverage that structure to build representation maps from functions in $\rkha$ to bounded linear operators in $B(\rkha)$. Then, in Sec.~\ref{secRKHSRep}, we consider associated representations mapping into bounded linear operators on the RKHS $\mathcal H$ (which is a strict subspace of $\rkha$), arriving at the map $ T : \rkha \to B(\mathcal H)$ depicted in the right-hand column of Fig.~\ref{Scheme}. Additional details on the construction are provided in Appendix~\ref{appQuantumRep}. 

\subsubsection{\label{secRKHARep}Representation on the RKHA $\rkha$}
We begin by noting that the joint continuity of the multiplication operation of Banach algebras (see~\eqref{eqBanachAlg}) implies that for every $ f \in \rkha$ the multiplication operator $ A_f : g \mapsto f g$ is well-defined as a bounded operator in $B(\rkha)$. This leads to the regular representation $ \pi : \rkha \to B(\rkha)$, which is the algebra homomorphism of $\rkha$ into $B(\rkha)$, mapping classical observables in $\rkha$ to their corresponding multiplication operator,
\begin{equation}
    \label{eqPi}
    \pi f := A_f.
\end{equation}
This mapping is a homomorphism since
\begin{displaymath}
    \pi(fg) = A_{fg} = A_f A_g, \quad \forall f, g \in \rkha,
\end{displaymath}
and it is injective (i.e., faithful as a representation) since $(\pi(f-f')) 1_X = f - f' \neq 0 $ whenever $ f \neq f' $. However, $\pi$ is not a $^*$-representation; i.e., $\pi(f^*)$ is not necessarily equal to $A_f^*$. In particular, $A_f$ need not be a self-adjoint operator in $B(\rkha)$ if $ f $ is a self-adjoint element in $\rkha_\text{sa}$.   
To construct a map from $\rkha$ into the self-adjoint operators in $ B(\rkha)$, we define $\tilde T: \rkha \to B(\rkha) $ with 
\begin{equation}
    \label{eqT}
    \tilde T f = \frac{\pi f  + (\pi f)^* }{2}.
\end{equation}
By construction, $ \tilde T f $ is self-adjoint for all $ f \in \rkha $, and it can also be shown (see Appendix~\ref{appInjectivity}) that $\tilde T$ is injective on $ \rkha_\text{sa}$. That is, $ \tilde T$ provides a one-to-one mapping between real-valued functions in $\rkha$ and self-adjoint operators in $ B(\rkha)$. 
\begin{figure}
    \centering
    \includegraphics[width=.93\linewidth]{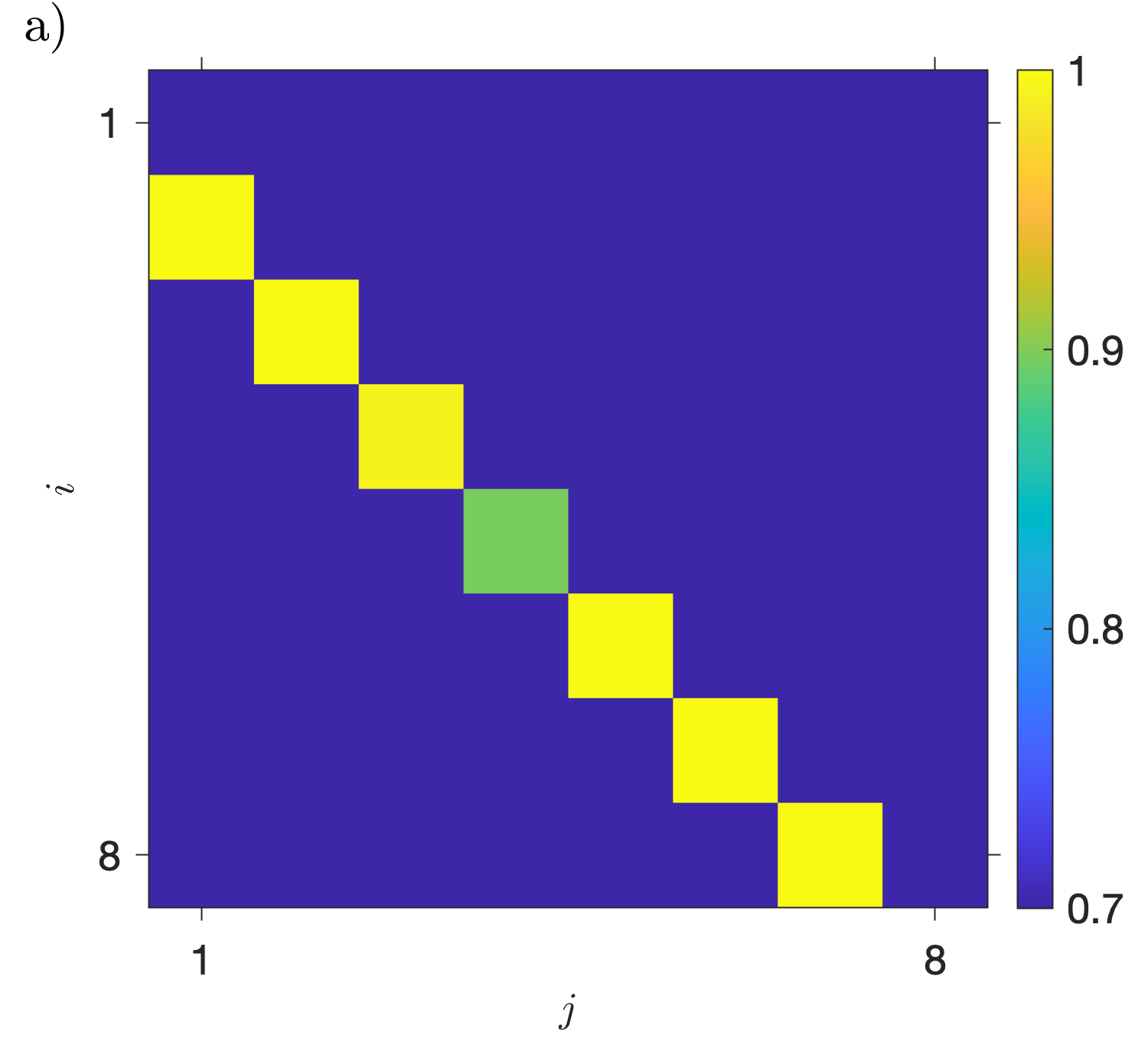}
    \includegraphics[width=.93\linewidth]{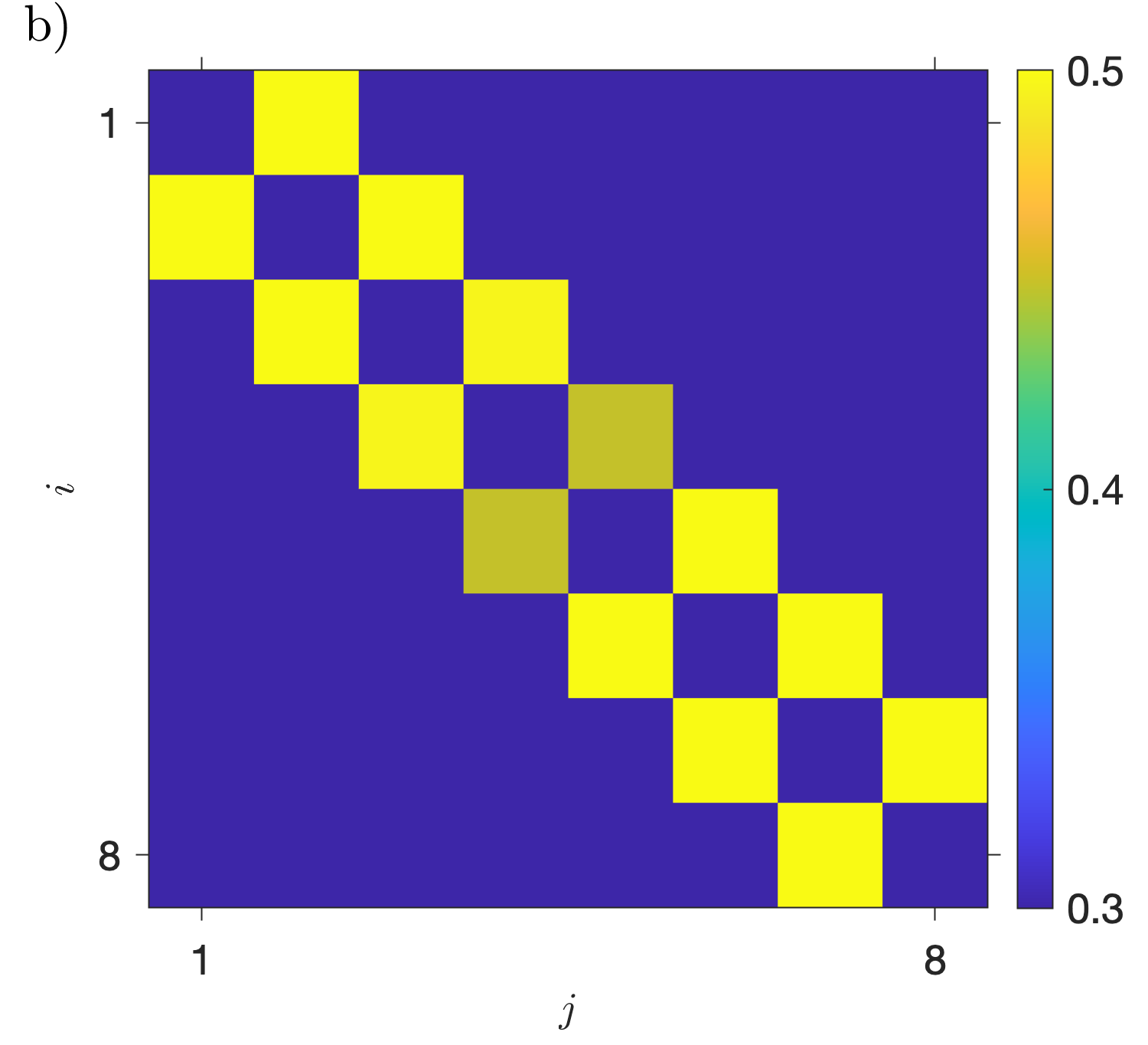}
    \caption{\label{figMultOp} Matrix elements  $(A_{\psi_1})_{ij}$ (a) and $(S_{\psi_1})_{ij}$ (b) of the multiplication operator $A_{\psi_1}$ (Eq.~\eqref{eqMIJ}) and the self-adjoint operator $S_{\psi_1}$ (Eq.~\eqref{eqSIJ}) representing the basis function $ \psi_1$ for dimension $d=1$. As in Fig.~\ref{figStructConst}(a), we consider the reproducing kernel Hilbert algebra $\rkha$ on the circle from  with $p=1/4$ and $\tau=1/4$, and map the indices $i$ and $j$ into standard matrix indices $1,2,\ldots, 2^n+1$ with $n=3$. The matrix in (a) has nonzero elements only in the first lower diagonal, $ i - j = 1$. The matrix in (b) is a symmetric bidiagonal matrix with elements in the first upper and lower diagonals, $ i - j = \pm 1$.}
\end{figure}

It follows from the  product formula in~\eqref{eqProd} that if $f \in \rkha$ has the expansion $ f = \sum_{j \in \mathbb Z^d} \tilde f_j \psi_j$, where $ \tilde f_j = \langle \psi_j, f \rangle_{\rkha}$, then the corresponding multiplication operator $A_f = \pi f$ has the matrix elements
\begin{equation*}
    (A_{f})_{ij} := \langle \psi_i, A_f \psi_j \rangle_{\rkha} = \langle \psi_i, f \psi_j \rangle_{\rkha},
\end{equation*}
and thus
\begin{equation}
    \label{eqMIJ}
    (A_{f})_{ij} = c_{j,i-j} \tilde f_{i-j}. 
\end{equation}
Correspondingly, the matrix elements of the self-adjoint operator $ S_f := \tilde T f$ are given by 
\begin{equation*}
    (S_{f})_{ij} := \langle \psi_i, S_f \psi_j \rangle_{\rkha} = \frac{(A_{f})_{ij} + (A_{f})^*_{ji}}{2}.
\end{equation*}
If, in addition, $ f $ lies in $\rkha_\text{sa}$, then we have $ \tilde f_{j-i}^* = \tilde f_{i-j} $ and the formula above reduces to
\begin{equation}
    \label{eqSIJ}
    (S_{f})_{ij} = \frac{c_{j,i-j} + c_{i,j-i}}{2} \tilde f_{i-j}. 
\end{equation}

Here, of particular interest are the multiplication and self-adjoint operators representing the basis elements of $\rkha$, i.e., $A_{\psi_l}$ and $S_{\psi_l}$, respectively, for $ l \in \mathbb Z^d$. Since $ \tilde f_{i-j} = \delta_{l,i-j} $ for $ f = \psi_l $, it follows from~\eqref{eqMIJ} that after a suitable lexicographical ordering of multi-indices (as in Fig.~\ref{figStructConst}), $(A_f)_{ij}$ forms a banded matrix with nonzero elements only in the diagonal corresponding to multi-index $k$. Figure~\ref{figMultOp}(a) illustrates the nonzero matrix elements of $A_{\psi_1} $ in the one-dimensional case, $d = 1$. Similarly, the self-adjoint operator $S_{\psi_1}$ is a bi-diagonal matrix with nonzero entries in the diagonals corresponding to $ \pm 1 $, as shown in Fig.~\ref{figMultOp}(b). 

We deduce from these observations that if $ f $ is a bandlimited observable (i.e., expressible as a finite linear combination of Fourier functions $ \phi_j $), $A_f$ is represented by a banded matrix, whose $l$-th diagonal comprises of the  structure constants $ c_{lj} $ multiplied by $\tilde f_l $. The matrix representing $S_f $ is also banded whenever $f$ is bandlimited. If, in addition, $f$ is real, the $l$-th diagonal of $ (S_f)_{ij} $ is given by the multiple of $ ( c_{lj} + c_{li} ) / 2  $ with $ \tilde f_{l}$.

\subsubsection{\label{secRKHSRep}Representation on the RKHS $\mathcal H$}

We now take up the task of defining analogs of the maps $ \pi : \rkha \to B(\rkha)$ and $ T : \rkha \to B(\rkha)$ from Sec.~\ref{secRKHSRep}, mapping elements of $\rkha$ to bounded operators on the RKHS $\mathcal H \subset \rkha$ (i.e., the Hilbert space underlying the infinite-dimensional system at the quantum mechanical level). To that end, let $\bm \Pi$ be the projection map from $B(\rkha)$ to $B(\mathcal H)$, defined as  
\begin{equation}
    \label{eqPiProj}
    \bm \Pi A := \Pi A \Pi, 
\end{equation}
where $\Pi$ is the orthogonal projection from $\rkha $ to $\mathcal H$ introduced in Sec.~\ref{secRKHS}. One can explicitly verify that the map $ \bm \Pi \circ \pi : \rkha \to B(\mathcal H)$ is injective, so there is no loss of information in representing $f \in \rkha$ by $ \bm \Pi(\pi f) \in B(\mathcal H)$ as opposed to $ \pi f \in B(\rkha)$. For our purposes, however, in addition to injectivity we require that our representation maps provide value-level consistency between classical and quantum measurements (in a sense made precise in Sec.~\ref{secConsistency} below). For that, it becomes necessary to modify the map $ \bm \Pi \circ \pi $, as well as its self-adjoint counterpart $\bm \Pi \circ \tilde T $, to take into account the contractive effect of the projection $ \bm \Pi $. 

In Appendix~\ref{appConsistency}, we construct a self-adjoint, invertible operator $L : \rkha \to \rkha$, which is diagonal in the $ \{ \psi_j \} $ basis, and whose role is to counter-balance that contraction. Specifically, we define $ \varpi : \rkha \to B(\mathcal H)$ and $T : \rkha \to B(\mathcal H) $ with   
\begin{equation}
    \label{eqVarpi}
    \varpi = \bm \Pi \circ \pi \circ L^{-1}, \quad T = \bm \Pi \circ \tilde T \circ L^{-1}.
\end{equation}
Here, $L^{-1}$ inflates the expansion coefficients of functions in the $ \{ \psi_j \} $ basis of $\rkha$, absorbing the contractive action of $\bm \Pi$. Analogously to $\pi$ and $\tilde T$, respectively, $\varpi $ is one-to-one, and $T$ is one-to-one on the real functions in $\rkha_\text{sa}$. Moreover, every operator in the range of $T$ is self-adjoint. The map $T$ provides the representation of classical observables in $\rkha$ by self-adjoint operators in $B(\mathcal H)$ at the quantum mechanical level, depicted by vertical arrows in the right-hand column of Fig.~\ref{Scheme}.

\subsection{\label{secConsistency}Classical--quantum consistency}

We now come to a key property of the regular representation $\pi$ and the associated map $\tilde T$, which is a consequence of the reproducing property and Banach algebra structure of $\rkha$. Namely, $\pi$ and $\tilde T $ provide a consistent correspondence between evaluation of classical observables and quantum mechanical expectation values. To see this, for any quantum state $ \varrho \in Q( \rkha) $ and quantum mechanical observable $ A \in B(\rkha)$, let 
\begin{equation}
\langle A\rangle_{\varrho} := \tr( \varrho A )
\end{equation}
be the standard quantum mechanical expectation functional. Then, it follows from the reproducing property in~\eqref{eqReproducing}, the definition of the quantum feature map $\tilde{\mathcal F}$ in~\eqref{eqQRKHA}, and the definition of the regular representation in~\eqref{eqPi} that for any observable $ f \in \rkha $ and classical state $ x \in X$,
\begin{equation}
    f(x) = \langle \pi f\rangle_{\varrho_x} = \langle \tilde Tf\rangle_{\varrho_x}, 
\label{expec1}    
\end{equation}
where $\varrho_x = \tilde{\mathcal F}(x)$. The last equality in~\eqref{expec1} requires that $f$ is a self-adjoint element in $ \rkha_\text{sa}$; see Ref.~\cite{DasGiannakis20b} for further details. Equation~\eqref{expec1} shows, in particular, that by passing to the quantum mechanical representation we maintain pointwise consistency with the classical measurement processes for special sets of quantum mechanical observables and states. These are the self-adjoint operators $S_f$ and the pure states $\varrho_x$.

To express these relationships in terms of matrix elements, note first that the quantum state $\varrho_x$ satisfies 
\begin{align}
    \nonumber
    (\varrho_{x})_{ij}  :=& \langle \psi_i, \varrho_x \psi_j \rangle_{\rkha} \nonumber\\
                     =& \frac{\langle \psi_i, \krn_x \rangle_{\rkha} \langle \krn_x, \psi_j \rangle_{\rkha} }{\tilde\kappa} \nonumber\\
                     =& \frac{\psi^*_i(x) \psi_j(x)}{\tilde\kappa}.
\label{eqRhoIJ}    
\end{align}
Combining this result with~\eqref{eqMIJ}, we obtain
\begin{align*}
    f(x) =& \tr( \varrho_x (\pi f) ) \nonumber \\
         =& \sum_{i,j \in \mathbb Z^d} (\varrho_{x})_{ij} (A_f)_{ji} \nonumber \\
    =& \sum_{i,j \in \mathbb Z^d} \frac{\psi^*_i(x) \psi_j(x) c_{i,j-i} \tilde f_{j-i}}{\tilde\kappa},
\end{align*}
and this relationship holds irrespective of whether $f$ is self-adjoint or not. If $f $ is a self-adjoint element in $\rkha_{\text{sa}}$, then we can use the matrix elements of the self-adjoint operator $S_f$ from~\eqref{eqSIJ}, in conjunction with the fact that $ \varrho_x$ is also self-adjoint, to arrive at the expression   
\begin{align*}
    f(x) &= \tr( \varrho_x (\tilde Tf) ) \nonumber\\
         &= \sum_{i,j \in \mathbb Z^d} (\varrho_{x})_{ij} (S_{f})_{ji} \nonumber \\
         &= \sum_{i,j \in \mathbb Z^d} \frac{\psi^*_i(x) \psi_j(x) ( c_{i,j-i} + c_{j,i-j} ) \tilde f_{j-i}}{2 \tilde\kappa}.
\end{align*}

Even though  $\mathcal H$ is a strict subspace of the RKHA $\rkha$, it is still possible to consistently recover all predictions made for classical observables, as we describe in Appendix~\ref{appConsistency}. There, we show that the modified versions $\varpi : \rkha \to B(\mathcal H)$ and $ T : \rkha \to B(\mathcal H)$ of $\pi : \rkha \to B(\rkha)$ and $\tilde T : \rkha \to B(\rkha)$, respectively (defined in~\eqref{eqVarpi}), satisfy the analogous consistency relation to~\eqref{expec1}, i.e., 
\begin{equation}
    f(x) = \langle \varpi f\rangle_{\rho_x} = \langle Tf\rangle_{\rho_x}, 
    \label{expecRKHS}
\end{equation}
where $\rho_x = \mathcal F(x)$ is the quantum state on $\mathcal H$ obtained from the feature map in~\eqref{eqQ}. As with~\eqref{expec1}, the first equality in~\eqref{expecRKHS} holds for any $f \in \rkha$ and the second holds for real-valued elements $ f \in \rkha_{\text{sa}}$.

\subsection{\label{secQuantumDynamical}Dynamical evolution}

In this section, we describe the dynamics of quantum states and observables associated with the RKHA $\rkha$ and RKHS $\mathcal H \subset \rkha $, and establish consistency relations between the classical and quantum evolution. 

First, recall that the Koopman operators $U^t$ act on $\rkha$ as a unitary evolution group. As a result, there is an induced action $ \mathcal U^t : B( \rkha ) \to B(\rkha)$ on quantum mechanical observables in $B(\rkha)$, given by 
\begin{equation}
    \label{eqUOp}
    \mathcal U^t A = U^t A U^{t*}.
\end{equation}
This action has the important property of being compatible with the action of the Koopman operator on functions in $ \rkha$ under the regular representation. Specifically, for every $ f \in \rkha$ and $ t \in \mathbb R$, we have
\begin{equation}
    \label{eqUPi}
    \mathcal U^t ( \pi f ) = \pi( U^t f ).  
\end{equation}

The unitary evolution in~\eqref{eqUOp} has a corresponding dual action $ \Psi^t : Q( \rkha) \to Q(\rkha) $ on quantum states, given by
\begin{equation}
    \label{eqPsiOp}
    \Psi^t ( \varrho ) = U^{t*} \varrho U^t \equiv \mathcal U^{-t} \varrho.
\end{equation}
One can verify that this action is compatible with the classical dynamical flow under the feature map $\tilde{\mathcal F} : X \to Q(\rkha) $, viz.
\begin{equation}
    \label{eqPsiF}
    \Psi^t( \tilde{\mathcal F}(x) ) = \tilde{\mathcal F}(\Phi^t(x)).
\end{equation}
Using~\eqref{expec1}, \eqref{eqUOp}, and~\eqref{eqPsiOp}, we arrive at the consistency relationships
\begin{equation}
    U^t f(x) = \langle \mathcal U^t (\pi f) \rangle_{\varrho_x}= \langle \pi f \rangle_{\Psi^t(\varrho_x)},
    \label{eqUtMF}    
\end{equation}
with $\varrho_x = \tilde{\mathcal F}(x)$. This holds for every classical observable $ f \in \rkha$, initial condition $ x \in X$, and evolution time $ t \in \mathbb R$. If, in addition, $ f $ is a self-adjoint element in $ \rkha_{\text{sa}}$, we may compute the evolution $U^t f $ using the self-adjoint operator $ \tilde Tf$, which is accessible via physical measurements. That is, for $ f \in \rkha_\text{sa}$ we have
\begin{equation}
    \label{eqUtSF}
    U^t f(x) = \langle \mathcal U^t (\tilde T f) \rangle_{\varrho_x}= \langle \tilde T_f\rangle_{\Psi^t(\varrho_x)}. 
\end{equation}

In summary, we have constructed a dynamically consistent embedding of the torus rotation from~\eqref{eqTorusRotation} into a quantum mechanical system on the RKHA $ \rkha$. For completeness, we note that the matrix elements of the state $\Psi^t(\rho_x)$ are given by
\begin{align*}
    \langle \psi_i, \Psi^t(\varrho_x) \psi_j \rangle_{\rkha} &= \langle U^t \psi_i, U^t \varrho_x \psi_j \rangle_{\rkha} \\ 
    &= e^{i (\omega_j - \omega_i) t} (\varrho_{x})_{ij}. 
\end{align*}
Using this formula together with the expressions for the matrix elements of  $\tilde T f $ in~\eqref{eqSIJ}, respectively, we arrive at the expression
\begin{displaymath}
    U^tf = \sum_{i,j \in \mathbb Z^d} e^{i(\omega_j-\omega_i)t} \frac{ \psi^*_i(x) \psi_j(x) ( c_{i,j-i} + c_{j,i-j} ) \tilde f_{j-i}}{2\tilde\kappa},
\end{displaymath}
which holds for all self-adjoint elements $ f = \sum_{j \in \mathbb Z^d} \tilde f_j \psi_j \in \rkha_{\text{sa}}$. 

Our discussion was thus far based on the RKHA $\rkha$, as opposed to the RKHS $\mathcal H$. In Appendix~\ref{appRKHSDyn}, we establish that the dynamics of classical states and observables can be represented consistently through their representatives on $\mathcal H$ using the maps $\varpi$ and $T$ in~\eqref{eqVarpi}. Specifically, we show that for any $ f \in \rkha$, 
\begin{displaymath}
    U^t f(x) = \langle \mathcal U^t (\varpi f) \rangle_{\rho_x}= \langle \varpi f \rangle_{\Psi^t(\rho_x)},
\end{displaymath}
while for any real-valued $ f \in \rkha_\text{sa}$,
\begin{displaymath}
    U^t f(x) = \langle \mathcal U^t (T f) \rangle_{\rho_x}= \langle T f \rangle_{\Psi^t(\rho_x)},
\end{displaymath}
where $\rho_x = \mathcal F(x)$. In the above, $\mathcal U^t : B(\mathcal H) \to B(\mathcal H)$ and $\Psi^t : Q(\mathcal H) \to Q(\mathcal H)$ are evolution operators on quantum observables and states on $\mathcal H$, respectively, defined analogously to their counterparts on $\rkha$ using the Koopman operator $U^t : \mathcal H \to \mathcal H$ (see Sec.~\ref{secRKHS}).

\section{\label{secMatrixMechanical}Projection to finite dimensions}

While being dynamically consistent with the underlying classical evolution, the quantum system constructed in Sec.~\ref{secQuantumMechanical} is infinite-dimensional, and thus not directly accessible to simulation by a quantum computer. We now describe an approach for projecting the infinite-dimensional quantum system to a finite-dimensional system. In Fig.~\ref{Scheme} we refer to this level of representation as {\em matrix mechanical}, since all linear operators involved have finite rank and are representable by matrices. Our objectives are to construct this projection such that (a) it is refinable, i.e., the original quantum system is recovered in a limit of infinite dimension (number of qubits); and (b) it facilitates the eventual passage to the quantum computational level (to be described in Sec.~\ref{secQuantumComputational}). 

We begin by fixing a positive integer parameter $ n $ (the number of qubits), chosen such that it is a multiple of the dimension $d$ of the classical state space $X$, and defining the index sets
\begin{equation}
    \label{eqIdxJ}
    \begin{gathered}
        J_{n,d} = \{ - 2^{n/d - 1}, \ldots, -1, 1, \ldots, 2^{n/d-1} \}, \\ 
        J_n = \{ (j_1, \ldots, j_d) \in \mathbb Z^d :  j_i \in J_{n,d} \}.
    \end{gathered}
\end{equation}
Note that $J_n$ is a subset of $J$ from~\eqref{eqJ} with $ N \equiv 2^n $ elements. Next, consider the $N$-dimensional subspace of $\mathcal H$ given by
\begin{displaymath}
    \mathcal H_n = \spn \{ \psi_j : j \in J_n \},
\end{displaymath}
and let $ \Pi_n : \mathcal H \to \mathcal H$ be the orthogonal projection mapping into $ \mathcal H_n$. When appropriate, we will interpret $\Pi_n$ as a map into its range, i.e., $ \Pi_n : \mathcal H \to \mathcal H_n$, without change of notation. The subspace $ \mathcal H_n $ has the structure of an RKHS of dimension $ 2^n$, associated with the spectrally truncated reproducing kernel
\begin{displaymath}
    k_n(x,x') = \sum_{ j \in J_n } \psi_j^*(x) \psi_j(x').
\end{displaymath}
Moreover, $ \mathcal H_d, \mathcal H_{2d}, \mathcal H_{3d}, \ldots $ is a nested family of subspaces, increasing towards  $\mathcal H$. 

By virtue of being spanned by eigenfunctions of the generator $V$, $\mathcal H_n$ is invariant under the Koopman operator, i.e., $ U^t \mathcal H_n = \mathcal H_n$ for all $ t \in \mathbb R$. Moreover, the projection $\Pi_n $ commutes with both $V $ and $U^t$, 
\begin{displaymath}
    [ V, \Pi_n ] = 0, \quad [ U^t, \Pi_n ] = 0.
\end{displaymath}
These invariance properties allow us to define a projected generator
\begin{equation}
    \label{eqVN}
    V_n \equiv  \Pi_n V \Pi_n, 
\end{equation}
and an associated Koopman operator 
\begin{displaymath}
    U^t_n := e^{t V_n} \equiv  \Pi_n U^t \Pi_n,
\end{displaymath}
such that the following diagram commutes for all $ t \in \mathbb R$:
\begin{equation}
    \label{eqComU}
    \begin{tikzcd}
        \mathcal H \arrow[r, "U^t"] \arrow[d, "\Pi_n", swap] & \mathcal H \arrow[d, "\Pi_n"] \\
        \mathcal H_n \arrow[r, "U^t_n"] & \mathcal H_n
    \end{tikzcd}.
\end{equation}
Similarly, we define a finite-rank Heisenberg operator 
\begin{displaymath}
    \mathcal U^t_n := \mathcal U^t \bm \Pi_n,
\end{displaymath}
where $\bm \Pi_n : B(\mathcal H) \to \mathcal B(\mathcal H)$ is the projection on $B(\mathcal H)$ defined as $ \bm \Pi_n A = \Pi_n A \Pi_n$. This leads to an analogous commutative diagram to that in~\eqref{eqComU} viz., 
\begin{displaymath}
    \begin{tikzcd}
        B( \mathcal H ) \arrow[r, "\mathcal U^t"] \arrow[d, "\bm \Pi_n", swap] & B( \mathcal H ) \arrow[d, "\bm \Pi_n"] \\
        B( \mathcal H_n ) \arrow[r, "\mathcal U^t_n"] & B( \mathcal H_n )
    \end{tikzcd}.
\end{displaymath}

Next, we introduce a spectrally truncated feature map $ F_n : X \to  \mathcal H_n$, defined analogously to~\eqref{eqFeature} as 
\begin{displaymath}
    F_n(x) = k_{x,n} := k_n(x,\cdot ),
\end{displaymath}
as well as a corresponding quantum feature map $ \mathcal F_n : X \to Q(\mathcal H_n) $, such that $ \mathcal F_n(x) = \rho_{x,n} $ is given by
\begin{equation}
    \label{eqRhoN}
    \begin{gathered}
        \rho_{x,n} = \langle \xi_{x,n}, \cdot \rangle_{\mathcal H_n} \xi_{x,n} \quad\mbox{with} \\
        \xi_{x,n} = \frac{k_{x,n}}{\sqrt{\kappa_n}}, \quad \kappa_n = k_n(x,x) = \sum_{j\in J_n} e^{-\tau \lvert j \rvert_p }.
    \end{gathered}
\end{equation}
In the sequel, we will use the states $ \rho_{x,n}$ as approximations of the states $\rho_x = \mathcal F(x)$. These approximations have the following properties.
\begin{enumerate}
    \item The dynamical evolution of $\rho_{x,n}$ is governed by a \emph{finite-rank} operator $ \Psi^t_n : Q(\mathcal H_n) \to Q(\mathcal H_n)$, where
        \begin{displaymath}
            \Psi^t_n(\rho_{x,n}) =  U^{t*}_n \rho_{x,n} U^t_n.
        \end{displaymath}
    \item As $n \to \infty $ (i.e., in the infinite qubit limit), $\rho_{x,n}$ converges to $ \rho_x$, in the sense that for any quantum mechanical observable $A \in B(\mathcal H)$,
        \begin{equation}
            \label{eqERhoXN}
            \langle A_n\rangle_{\rho_{x,n}} \xrightarrow{n\to\infty} \langle A \rangle_{\rho_x},
        \end{equation}
        where $ A_n = \bm \Pi_n A$, and the convergence is uniform with respect to $ x \in X$. 
\end{enumerate}

In light of the above, we employ the following approximations to the quantum mechanical representation of the evolution of classical observables from Sec.~\ref{secQuantumDynamical} (see also Appendix~\ref{appRKHSDyn}), 
\begin{equation}
    \label{eqUtApprox}
    \begin{aligned}
        \check f_n^{(t)}(x) &:= \langle \bm \Pi_n (\varpi f)\rangle_{\Psi^t(\rho_{x,n})}, \\
        f_n^{(t)}(x) &:= \langle \bm \Pi_n (Tf)\rangle_{\Psi^t(\rho_{x,n})}. 
    \end{aligned}
\end{equation}
By~\eqref{eqERhoXN}, for every function $ f \in \rkha$ and evolution time $ t \in \mathbb R$,  $ \check f_n^{(t)}(x)$ converges as $n\to \infty$ to $U^t f(x)$, uniformly with respect to $x \in X $, whereas $ f_n^{(t)}(x)$ converges to $U^tf(x)$ if $f$ is self-adjoint (real-valued). 

\section{\label{secQuantumComputational}Representation on a quantum computer}

We are now ready to perform the final step in the QECD pipeline, namely passage from the matrix mechanical level to the quantum computational level associated with the $n$-qubit Hilbert space  $\mathbb B_n = \mathbb B^{\otimes n}$ (see bottom row in Fig.~\ref{Scheme}). We will do so by applying a unitary map, so that the systems in the matrix mechanical and quantum computational levels are isomorphic as quantum systems. However, the key aspects that the quantum computational system provides are that (a) it can be efficiently implemented as a quantum circuit with a quadratic number of gates in $n$; and (b) information about the evolution of classical observables can be extracted by measurement of the standard projection-valued measure associated with the computational basis. We describe the construction of the unitary map from the matrix mechanical to quantum computational levels and the properties of the resulting quantum system in Secs.~\ref{secTensorProd} and~\ref{secWalsh}, respectively.

\subsection{\label{secTensorProd}Quantum computational system on the tensor product Hilbert space}

Being expressible in terms of finite-rank quantum states, observables, and evolution operators, the approximation framework described in Sec.~\ref{secMatrixMechanical} can be encoded in a quantum computing system operating on a finite-dimensional Hilbert space. In particular, letting $ \mathbb B \simeq \mathbb C^2$ denote the 2-dimensional Hilbert space associated with a single qubit, it follows immediately from the fact that $\mathcal H_n$ is a $2^n$-dimensional Hilbert space that there exists a unitary map $ W_n: \mathcal H_n \to \mathbb B_n$, where  
\begin{equation}
    \mathbb B_n := \mathbb B^{\otimes n} \simeq \underbrace{\mathbb C^2 \otimes \dots \otimes \mathbb C^2}_n
\label{tensor}
\end{equation}
is the tensor product Hilbert space associated with $n$ qubits. Under such a unitary, the projected generator $V_n$ from~\eqref{eqVN} maps to a skew-adjoint operator $ \hat V_n := W_n V_n W^*_n $, inducing a self-adjoint Hamiltonian 
\begin{equation}
    \label{eqH}
    H_n := \frac{1}{i}\hat V_n,
\end{equation}
and a corresponding unitary evolution operator $ \hat U^t_n := e^{i H_n t}$ on $ \mathbb B_n$. This leads to the commutative diagram
\begin{displaymath}
    \begin{tikzcd}
        \mathcal H_n \arrow[r, "U^t_n"] \arrow[d, "W_n",swap] & \mathcal H_n \arrow[d, "W_n"] \\
        \mathbb B_n \arrow[r, "\hat U^t_n"] & \mathbb B_n
    \end{tikzcd},
\end{displaymath}
expressing the fact that elements of $\mathcal H_n$ and $\mathbb B_n$ evolve consistently under $U^t_n$ and $\hat U^t_n$, respectively. Note that we work here with the self-adjoint Hamiltonian $H_n$ as opposed to the skew-adjoint generator $ \hat V_n $ for consistency with the usual convention in quantum mechanics.  

In addition, $W_n$ induces a unitary $\mathcal W_n : B(\mathcal H_n) \to B(\mathbb B_n)$, with $ \mathcal W_n A = W_n A W_n^* $, mapping quantum mechanical observables on $\mathcal H_n$ to quantum mechanical observables on $ \mathbb B_n$. The restriction of $\mathcal W_n$ on $ Q(\mathcal H_n) \subset B(\mathcal H_n)$ then induces a continuous, invertible map $ \mathcal W_n : Q(\mathcal H_n) \to Q(\mathbb B_n) $ from quantum states on $\mathcal H_n$ to quantum states on $ \mathbb B_n$ (which we continue to denote using the symbol $\mathcal W_n$). Moreover, we have the evolution maps 
\begin{equation}
    \label{eqPsiUN}
    \begin{aligned}
        \hat \Psi_n^t : Q(\mathbb B_n) &\to Q(\mathbb B_n): \hat{\rho}_n \mapsto  \hat \Psi^t_n(\hat \rho_n) = \hat U^{t*}_n \hat \rho_n \hat U^t_n ,\\
        \hat{\mathcal U}^t_n : B(\mathbb B_n) &\to B(\mathbb B_n):  \hat{A}_n \mapsto \hat{\mathcal U}^t_n \hat{A} _n= \hat U^t_n \hat{A}_n \hat U^{t*}_n,
    \end{aligned}
\end{equation} 
such that the maps for states and observables between and within the matrix mechanical and quantum computational level in Fig. \ref{Scheme} constitute commutative diagrams.    
In particular, following the vertical arrows in the left- and right-hand columns from the classical level to the quantum computational level gives the maps $ \hat{\mathcal F}_n : X \to Q(\mathbb B_n) $ and $\hat T_n: \alg \to B(\mathbb B_n)$, where 
\begin{equation}
    \label{eqFTN}
    \begin{aligned}
        \hat{\mathcal F}_n &= \mathcal W_n \circ \bm \Pi_n' \circ P \circ \delta \\
        \hat T_n &= \mathcal W_n \circ \bm \Pi_n \circ T.  
    \end{aligned}
\end{equation}
The maps $\hat{\mathcal F}_n$ and $\hat T_n$ provide the quantum computational representation of classical states and observables, respectively, which are two of the main ingredients of the QECD (see Sec.~\ref{secOverview}). By unitary equivalence, they have analogous convergence properties in the $n\to\infty$ limit as those of their matrix mechanical counterparts $\mathcal F_n$ and $T_n$ described in Sec.~\ref{secMatrixMechanical}.We also note that the evolution operator $\hat U^t_n $ at the quantum computational level can be equivalently obtained as a projection of the Koopman operator $U^t$ on $ \rkha $, i.e., 
\begin{equation}
    \label{eqHatUtN}
    \hat U^t_n = (\mathcal W_n \circ \bm \Pi_n \circ \bm \Pi ) U^t.
\end{equation} {
}

\subsection{\label{secWalsh}Factorizing the Hamiltonian in tensor product form}

In order for the representation of the dynamics on $\mathbb B_n$ to exhibit robust quantum parallelism, i.e., implementation on a quantum circuit of small depth, it is highly beneficial that the Hamiltonian $H_n$ can be decomposed as a sum of commuting operators in pure tensor product form, i.e., 
\begin{equation}
    \label{eqHDecomp}
    H_n = \sum_{j \in J_n} G_j = \sum_{j \in J_n} G_{1j} \otimes \cdots \otimes G_{nj},
\end{equation}
where $ [ G_i, G_j ] = 0 $ and $ G_{lj} : \mathbb B \to \mathbb B$ are mutually-commuting, single-qubit Hamiltonians. With such a decomposition, the unitary operator $ \hat U^t_n = e^{iH_nt}$ generated by $H_n$ factorizes as
\begin{equation}
    \label{eqUDecomp} \hat U^t_n = \exp\left(i \sum_{j \in J_n} G_j t\right) = \prod_{j \in J_n} \exp\left(\bigotimes_{l=1}^n i G_{lj}t\right).
\end{equation}
Thus, $ \hat U^t_n $ can be split into a composition of up to $2^n$ unitaries  $\exp(i G_jt)$ (depending on the number of nonzero terms $G_j $ in the right-hand side of~\eqref{eqHDecomp}), which can be applied in any order by commutativity of the $G_j$. Moreover, each unitary $\exp(i G_jt)$ has a generator of pure tensor product form, and thus can be represented as a quantum circuit with at most $n$ quantum gates for rotations of the individual qubits. 

In fact, as we will now show, using a Walsh operator representation \cite{WelchEtAl14}, for a dynamical system with pure point spectrum the decomposition in~\eqref{eqHDecomp} only has $n$ nonzero terms $G_j $, and for each nonzero term, the tensor product factorization $G_j = \bigotimes_{l=1}^n G_{lj} $ has all but one factors $G_{lj} $ equal to the identity. As a result, 
\begin{equation*}
    \exp\left(\bigotimes_{l=1}^n i G_{lj}t\right) = \bigotimes_{l=1}^n \exp(i G_{lj} t),     
\end{equation*}
and the decomposition in~\eqref{eqUDecomp} reduces to a tensor product of $n$ unitaries, 
\begin{equation}
    \label{eqUDecomp2}
    \hat U^t_n = \bigotimes_{l=1}^n \Xi^t_l \quad \mbox{with} \quad \Xi^t_l = \exp\left(i \sum_{j\in J_n} G_{lj} t\right).
\end{equation}
The key point about~\eqref{eqUDecomp2} is that $\hat U^t_n$ can be implemented via a quantum circuit of $n$ qubit channels with no cross-channel communication.  We will return to this point in Sec.~\ref{secMeas}.

\subsubsection{Walsh-Fourier transform and Walsh operators}

Classical states and observables of the dynamical system have been transformed into pure state density operators and self-adjoint operators on the $2^n$-dimensional Hilbert space $\mathbb B_n$ which is a tensor product of the single qubit quantum state spaces as given in \eqref{tensor}. We will employ the commonly used Dirac bra-ket notation \cite{Nielsen10} to denote vectors in  $\mathbb B_n$. We let $\{|0\rangle, |1\rangle \}$ be the standard orthonormal basis of the single-qubit Hilbert space $\mathbb B \simeq \mathbb C^2$ comprising of eigenvectors of the Pauli $Z$ operator, 
\begin{displaymath}
    Z \ket 0 = \ket 0, \quad Z |1\rangle = -|1\rangle,
\end{displaymath}
with 
\begin{displaymath}
Z= \begin{pmatrix} 
        1 & 0\\0 & -1
    \end{pmatrix}, \quad
    \ket 0 = 
    \begin{pmatrix}
        1\\0
    \end{pmatrix},
    \quad\mbox{and}\quad \ket 1 = 
    \begin{pmatrix} 0\\1 \end{pmatrix}.
\end{displaymath}
Thus, each vector $\ket\psi \in \mathbb B$ can be expanded in this basis as 
\begin{equation}
|\psi\rangle = \alpha |0\rangle +\beta |1\rangle \quad \mbox{with}\quad \alpha, \beta \in \mathbb C.     
\end{equation}

In order to arrive at the decomposition in~\eqref{eqUDecomp2}, we employ the approach developed in Ref.~\cite{WelchEtAl14}, which is based on discrete Walsh-Fourier transforms, and the associated Walsh operators, as follows. First, for any  non-negative integer $ j \in \mathbb N_0$, we let $\beta(j) = ( \beta_1(j),\ldots, \beta_l(j) ) \in \{ 0, 1 \}^l$ be its binary expansion; that is, 
\begin{equation*}
    j = \sum_{i=1}^l \beta_i(j) 2^{i-1}=\beta_1(j) 2^0+\beta_2(j) 2^1+ \ldots +\beta_l(j) 2^l,
\end{equation*}
where $l \in \mathbb N$ is the smallest positive integer such that $j \leq 2^l - 1$. For example, we have $\beta(0)=0$, $\beta(1)=1$, $\beta(2)=(0,1)$, $\beta(3)=(1,1)$, and $\beta(4)=(0,0,1)$. Moreover, for every real number $ u \in [ 0, 1 )$ we let $ \gamma( u ) = ( \gamma_1(u), \gamma_2(u), \ldots ) \in \{ 0, 1 \}^{\mathbb N}$ be its dyadic expansion, i.e., 
\begin{equation*}
    u = \sum_{i=1}^\infty \gamma_i(u) 2^{-i}=\frac{\gamma_1(u)}{2}+\frac{\gamma_2(u)}{4}+\frac{\gamma_3(u)}{8}+\dots.
\end{equation*}
Note that the most significant digit in $\beta(j)$ is the last one, $\beta_l(j)$, whereas the most significant digit in $ \gamma(u)$ is the first one, $ \gamma_1(u)$. 

With this notation, for every $ j \in \mathbb N_0$ we define the \emph{Walsh function} $ w_j : [ 0, 1 ) \to \{ 0, 1 \}$ as
\begin{displaymath}
    w_j(u) = (-1)^{\sum_{i=1}^l \beta_i(j ) \gamma_i(u)}.
\end{displaymath}
Furthermore, for any $ n \in \mathbb N_0 $ and $ j \in \{ 0, \ldots, 2^n -1 \}$, we define the {\em discrete Walsh function of order $n$}, $w^{(n)}_j : \{ 0, \ldots, 2^n -1 \} \to \{0, 1\}$ as 
\begin{displaymath}
    w^{(n)}_j( m) = w_j( m / 2^n ), \quad\mbox{with}\quad m=0, \dots, 2^n-1.
\end{displaymath}
It then follows that 
\begin{align*}
    w^{(n)}_j(m) &= (-1)^{\sum_{i=1}^l \beta_i(j ) \gamma_i(m / 2^n)} \\
    &= (-1)^{\sum_{i=1}^n \beta^{(n)}_i(j) \tilde \beta_{i}^{(n)}( m) }.  
\end{align*}
Here, $ \beta^{(n)}(j) = ( \beta_1, \ldots, \beta_l, 0, \ldots, 0 ) \in \{ 0, 1\}^n$ is the $n$-digit binary expansion of $j$ obtained by padding $\beta(j)$ to the right with zeros, as needed. Moreover, 
\begin{align*}
\tilde \beta^{(n)}(j) &= (\tilde \beta_1^{(n)}(j), \ldots, \tilde \beta_n^{(n)}(j))\\
&=(\gamma_1(j/2^m), \ldots, \gamma_n(j/2^m) ) 
\end{align*}
is the $n$-digit reversed binary representation of $m$. Thus, the exponent in the expression for $w^{(n)}_j( m)$ is given by the inner product between the $n$-digit binary expansion of $j$ with the \emph{bit-reversed} binary expansion of $m$. For example, with $n=2$ and $m=0,1,2,3$, we have  $w_0^{(2)}(m)=\{1,1,1,1\}$, $w_1^{(2)}(m)=\{1,1,-1,-1\}$, $w_2^{(2)}(m)=\{1,-1,1,-1\}$, and $w_3^{(2)}(m)=\{1,-1,-1,1\}$. 

Among the Walsh functions $w_j$, those with $ j =1, 2, 4, \dots, 2^l $ for $ l \in \mathbb N_0 $ are called \emph{Rademacher functions}, $R_l$, and satisfy
\begin{equation}
    \label{eqRad}
    w_{2^l}(u)\equiv R_l(u) = (-1)^{\gamma_l(u)}.  
\end{equation}
That is, $R_l(u)$ depends only on the $(l+1)$-th bit in the dyadic expansion of $u$. Using~\eqref{eqRad}, it follows that for any integer $ m \in \{ 0, \ldots, 2^n - 1 \} $, we have  
\begin{displaymath}
    \frac{m}{2^n} = \sum_{i=0}^{n-1} \frac{ 1 - R_{i+1}(m / 2^n)}{2^{i+2}},
\end{displaymath}
meaning that we can express the $i$-th bit in the dyadic decomposition of $ m / 2^n$ in terms of the $(i-1)$-th Rademacher function,
\begin{equation}
    \label{eqRad2}
    \gamma_i(m/ 2^n) = \frac{1 - R_{i-1}(m/2^n) }{ 2 }.
\end{equation}

It is known that the set $ \{ w_j \}_{j \in \mathbb N_0} $ forms an orthonormal basis of the Hilbert space $L^2([0,1])$ with respect to Lebesgue measure. In the discrete case, we let $L^2_n([0,1])$ be the $N$-dimensional Hilbert space, $N \equiv 2^n$, with respect to the normalized counting measure supported on $\{  0, 1 / N, 2 / N, \ldots, (N-1)/N \} $. Then, the set of discrete Walsh functions of order $n$, $ \{ w^{(n)}_j \}_{j=0}^{N-1} $, is an orthonormal basis of $L^2_n([0,1])$. One obtains
\begin{align*}
    f &= \sum_{j=0}^{N-1} \hat f_j w^{(n)}_j \in L^2_n([0,1]) \quad \mbox{with} \\
    \hat f_j &= \frac{1}{N} \sum_{m=0}^{N-1} \hat w_j^{(n)}(m) f(m/N).
\end{align*}
The map $ \mathsf F_n : L^2_n([0,1]) \to \mathbb C^{N}: f \mapsto ( \hat f_0, \ldots, \hat f_{N-1} ) $ is called the \emph{discrete Walsh-Fourier transform} of the function $f\in L^2_n([0,1])$.  

Next, consider the tensor product basis $ \{ \ket{\bm b} = \ket{b_1} \otimes \cdots \otimes \ket{b_n} \} $ of $\mathbb B_n$ with $\ket{b_i} \in \{|0\rangle, |1\rangle\}$, where the multi-index $ \bm b = (b_1, \ldots, b_n) \in \{ 0, 1 \}^n$ runs over all binary strings of length $n$. Whenever convenient, we will employ the notation $\ket b \equiv \ket{ \bm b } $, where $ \bm b = \tilde \beta^{(n)}(b)$. That is, $b$ is an integer in the range $0, \ldots, 2^n-1$, whose reversed binary representation is equal to $\bm b$, 
\begin{displaymath}
    b=\sum_{i=1}^n \tilde\beta^{(n)}_i(b) 2^{n-i} = \sum_{i=1}^n b_i 2^{n-i}. 
\end{displaymath}
For example, in a system with $n=3$ qubits  $|b\rangle=|6\rangle$ corresponds to $|\bm b\rangle=|110\rangle$, where the least significant bit is the one to the right. Note that $ \{ \ket b \}_{b=0}^{2^n-1}$ is also the standard quantum computational basis for an $n$-qubit problem in the Qiskit framework \cite{Qiskit20,Qiskit} that we will employ in Sec.~\ref{secExamples} \footnote{The qubit ordering in Qiskit is reverse to that in most textbooks on quantum computing.}. 

For every $ \bm b \in \{ 0, 1 \}^n$, we define the associated \emph{Walsh operator} $Z_{\bm b} : \mathbb B_n \to \mathbb B_n$ as
\begin{displaymath}
    Z_{\bm b} = Z^{b_1} \otimes Z^{b_2} \otimes \cdots \otimes Z^{b_n}.
\end{displaymath}
By construction, the $Z_{\bm b}$ form a collection of mutually-commuting, self-adjoint operators, which have pure tensor product form and are diagonal in the  $ \{ \ket{\bm b } \} $ basis of $\mathbb B_n$, i.e., 
\begin{displaymath}
    Z_{\bm b} \ket {\bm c} = \left(\prod_{i=1}^n (-1)^{b_i(c_i-1)} \right) \ket{ \bm c},
\end{displaymath}
where $\ket{\bm c}$ is again a quantum computational basis vector. For example, for $n=2$ qubits, the Walsh operator $Z_{\bm b}$ with $\bm b =\ket{0,1} $, and the basis vector $\ket{\bm c} = \ket{\bm b}$, one obtains 
\begin{displaymath}
    Z_{\bm b} \ket{\bm c}= (I\otimes Z) (\ket 0 \otimes \ket 1) =-\ket{\bm c}.
\end{displaymath}

It follows from a counting argument that the collection $ \{ Z_{\bm b} \}_{\bm b \in \{ 0, 1 \}^n} $ forms a basis of the vector space of operators in $ B( \mathbb B_n)$ which are diagonal in the $ \{ \ket{ \bm b} \} $ basis. In Ref.~\cite{WelchEtAl14}, it was shown that if $ A \in B(\mathbb B_n)$ is such a diagonal operator,
\begin{displaymath}
    A \ket{ \bm b } = a_{\bm b} \ket{ \bm b } \quad\mbox{with} \quad a_{\bm b} \in \mathbb C, 
\end{displaymath}
then it admits the expansion
\begin{equation}
    \label{eqWalshDecomp}
    A = \sum_{j=0}^{N-1} \hat f_j Z_{\beta^{(n)}(j)}, \quad \hat f_j \in \mathbb C,
\end{equation}
where the expansion coefficients $\hat f_j $ are the complex Walsh-Fourier coefficients $ ( \hat f_0, \ldots, \hat f_{N-1} ) = \mathsf F_nf $ of the function $f = ( f_0, \ldots, f_{N - 1} ) \in L^2_n([0,1]) $ with $ f_j = a_{\beta^{(n)}(j)}$. That is, $f_j$ is equal to the eigenvalue $a_{\bm b}$, where $\bm b$ is the $n$-digit binary representation of the integer $j$.

\subsubsection{Walsh representation of the Hamiltonian}

In order to effect the decomposition in~\eqref{eqHDecomp} for the generator-induced Hamiltonian from~\eqref{eqH}, let $ o : J_{n,d} \to \{ 0, \ldots, 2^{n/d } -1 \} $ be the enumeration on the index set $J_{n,d} $ from~\eqref{eqIdxJ} based on the standard order of integers, i.e.,  $o(-2^{n/d-1}) = 0, 1, \ldots,  2^{n/d} - 1 = o(2^{n/d-1})$. For example, $n=2$, $d=1$ gives $j\in J_{2,1}=\{-2,-1,1,2\}$, which is mapped to $o(j)=\{0,1,2,3\}$. The mapping of $J_{2,2}$ (for $n=2$, $d=2$) is displayed in first and third columns of Table~\ref{tableWalsh}.  We define $ W_n : \mathcal H_n \to \mathbb B_n$ as the unique (unitary) linear map such that for $j = ( j_1, \ldots, j_d )$,
\begin{equation}
    \label{eqWn}
    \begin{aligned}
        W_n \psi_j &= \ket{\bm b} \quad\mbox{with}\\
        \bm b &= \bm \eta(j) :=  ( \eta(j_1), \ldots, \eta(j_d)), \\
        \eta(j_i) &= \tilde\beta^{(n/d)}( o( j_i ) ).
   \end{aligned}
\end{equation}
That is, $W_n$ maps the basis element $\psi_j$ of $\mathcal H_n$ with multi-index $ j = ( j_1,\ldots, j_d) \in J_n $ to the tensor product basis element $\ket{\bm b} $, with $\bm b$ given by an invertible binary string encoding of $j$. Here, $\bm b$ is obtained as a concatenation $( \eta(j_1), \ldots, \eta(j_d)) $ of $d$ binary strings of length $n/d$, corresponding to the dyadic decompositions of $o(j_1), \ldots, o(j_d) $, respectively. See again Table~\ref{tableWalsh}, where we list the mapping for a two-dimensional torus with $2 = n/d$ qubits for each torus dimension.

Since $ \hat V_n \psi_j = i \omega_j \psi_j$ with $ \omega_j$ given by~\eqref{eqOmega}, we have 
\begin{equation}
    \label{eqHE}
    H_n \ket{\bm b} = \omega_{\bm \eta^{-1}(\bm b)} \ket{\bm b },
\end{equation}
Thus, in order to decompose $H_n$ into Walsh operators as in~\eqref{eqWalshDecomp}, we need to compute the discrete Walsh transform of the function $ h\in L^2_n([0,1]) $ with
\begin{equation}
    \label{eqHFunc}
    h( m / N ) = \omega_j, \quad j = \bm \eta^{-1}( \tilde \beta^{(n)}(m)). 
\end{equation}
This calculation is detailed in Appendix~\ref{appWalsh}. The eigenvalues $\omega_j$ for the example of a two-dimensional torus with $n=2$ qubits are listed in the fifth column of Table~\ref{tableWalsh}. 

\begin{table}
    \centering
    \begin{tabularx}{\linewidth}{@{\extracolsep{\fill}}cccc}
        \hline\hline 
        $(j_1,j_2)$ & $(\eta(j_1), \eta(j_2))$ & $b$ & $\omega_j$ \\
        \hline
        $(-2,-2)$ & $((0,0),(0,0))$ & 0 & $-2 \alpha_1 -2 \alpha_2$\\
        $(-2,-1)$ & $((0,0),(0,1))$ & 1 & $-2 \alpha_1 -1 \alpha_2$\\
        $(-2,+1)$ & $((0,0),(1,0))$ & 2 & $-2 \alpha_1 +1 \alpha_2$\\
        $(-2,+2)$ & $((0,0),(1,1))$ & 3 & $-2 \alpha_1 +2 \alpha_2$\\

        $(-1,-2)$ & $((0,1),(0,0))$ & 4 & $-1 \alpha_1 -2 \alpha_2$\\
        $(-1,-1)$ & $((0,1),(0,1))$ & 5 & $-1 \alpha_1 -1 \alpha_2$\\
        $(-1,+1)$ & $((0,1),(1,0))$ & 6 & $-1 \alpha_1 +1 \alpha_2$\\
        $(-1,+2)$ & $((0,1),(1,1))$ & 7 & $-1 \alpha_1 +2 \alpha_2$\\

        $(+1,-2)$ & $((1,0),(0,0))$ & 8 & $+1 \alpha_1 -2 \alpha_2$\\
        $(+1,-1)$ & $((1,0),(0,1))$ & 9  & $+1 \alpha_1 -1 \alpha_2$\\
        $(+1,+1)$ & $((1,0),(1,0))$ & 10 & $+1 \alpha_1 +1 \alpha_2$\\
        $(+1,+2)$ & $((1,0),(1,1))$ & 11 & $+1 \alpha_1 +2 \alpha_2$\\

        $(+2,-2)$ & $((1,1),(0,0))$ & 12 & $+2 \alpha_1 -2 \alpha_2$\\
        $(+2,-1)$ & $((1,1),(0,1))$ & 13 & $+2 \alpha_1 -1 \alpha_2$\\
        $(+2,+1)$ & $((1,1),(1,0))$ & 14 & $+2 \alpha_1 +1 \alpha_2$\\
        $(+2,+2)$ & $((1,1),(1,1))$ & 15 & $+2 \alpha_1 +2 \alpha_2$\\
        \hline\hline
    \end{tabularx}
    \caption{\label{tableWalsh} Binary encodings $ \bm \eta( j ) = (\eta(j_1), \eta(j_2))$ and enumeration $ b = (\tilde\beta^{(n)})^{-1}( \bm\eta(j) ) = 0 \dots 2^{n}-1$  of the eigenfrequencies $\omega_j $ with multi-index $ j = (j_1,j_2) $ of a quasiperiodic system on a two-dimensional torus ($d=2$) with basis frequencies $\alpha_1$ and $\alpha_2$. The total number of qubits is $n=4$ with 2 qubits for each torus dimension.}
\end{table}

By virtue of the decomposition in~\eqref{eqFOmega}, the only nonzero coefficients in the Walsh-Fourier transform $ \hat h = ( \hat h_0, \ldots, \hat h_{N-1} ) = \mathsf F_n h $ are the coefficients $ \hat h_j $ with $ j = 2^{l+(i-1)d} $ and $1 \leq l \leq n / d$, $ 1 \leq i \leq d $. Correspondingly, the only nonzero terms $ \hat h_j Z_{\beta^{(n)}(j)} $  in the Walsh operator expansion from~\eqref{eqWalshDecomp} for the Hamiltonian in~\eqref{eqHE} are those for which the binary string $ \bm\eta(j) $ has exactly one bit equal to 1 and the remaining $ n - 1 $ bits equal to 0. In particular, we have
\begin{align}
    \nonumber H_n &= \sum_{i=1}^d \sum_{l=0}^{\frac{n}{d}-1} \hat h_{2^{l+(i-1)n/d}} Z_{\beta(2^{l+(i-1)n/d})} \\
    \nonumber & = \hat h_{1} Z \otimes I \otimes I \otimes I \otimes \cdots \otimes I \\
    \nonumber &+ \hat h_{2} I \otimes Z \otimes I \otimes I \otimes \cdots \otimes I  \\
    \nonumber & + \hat h_{4} I \otimes I \otimes Z \otimes I \otimes \cdots \otimes I + \ldots \\
    \label{eqHDecompQuasiperiodic}& + \hat h_{2^{n-1}} I \otimes \cdots \otimes I \otimes Z.   
\end{align}

Equation~\eqref{eqHDecompQuasiperiodic} verifies the assertion made earlier that the decomposition of $H_n$ in~\eqref{eqHDecomp} can be arranged to have $n$ nonzero terms, each of which factorizes as a tensor product of $n$ operators, with all but one factors equal to the identity. Since 
\begin{displaymath}
    e^{i t I \otimes \cdots \otimes I \otimes Z \otimes I \cdots \otimes I } = I \otimes \cdots \otimes I \otimes e^{it Z} \otimes I \otimes \cdots \otimes I,
\end{displaymath}
we conclude that
\begin{equation}
    \hat U^t_n = e^{i H_n t} = \bigotimes_{k=1}^{2^{n-1}} \exp(i t \hat h_k Z),
\end{equation}
which is consistent with the decomposition in~\eqref{eqUDecomp2}.

\section{\label{secMeas}Projective measurement of observables}

In the classical setting, the process of obtaining the results of a computation is a straightforward readout of the state of the computer. In contrast, in quantum computing, extracting information from the system is a non-trivial process, as it must invariably confront with the intricacies of quantum measurement. In this section, we describe how the QECD performs probabilistic predictions of the evolution of classical observables through projective measurement of quantum computational observables. First, in Sec.~\ref{secIdealizedMeas}, we consider an idealized measurement scenario, where one has access to the spectral measure of the observable of interest. Then, in Sec.~\ref{secApproximateMeas} we develop an approximate measurement procedure based on the QFT, which yields asymptotically consistent results with the idealized measurement, while maintaining an exponential quantum advantage. Additional technical results are provided in Appendix~\ref{appQFT}.    

\subsection{\label{secIdealizedMeas}Idealized quantum measurement}

Our goal is to approximate the classical evolution $U^t f(x)$ through projective measurement of the quantum mechanical observable $\hat S_n := \hat T_n f$ on the quantum state 
\begin{equation}
    \label{eqRhoXNT}
    \hat \rho_{x,n}^{(t)} := \hat \Psi^t_n( \hat \rho_{x,n} ), \quad \hat\rho_{x,n} =  \hat{\mathcal F}_n(x), 
\end{equation}
where the representation maps $\hat T_n$ and $\hat{\mathcal F}_n$ are defined in~\eqref{eqFTN}, and the evolution map $\hat \Psi^t_n $ is defined in~\eqref{eqPsiUN} (see also Fig.~\ref{Scheme}). Since $ \hat S_n $ is a finite-rank, self-adjoint operator, it has a spectral resolution
\begin{equation}
    \hat S_n = \sum_{s \in \sigma(\hat S_n)} s P_s ,
\label{PVM0}    
\end{equation}
where $ \sigma(\hat S_n)$ is the spectrum of $\hat S_n $, i.e., the set of its eigenvalues, and $P_s \in B(\mathbb B_n)$ are the orthogonal projections onto the corresponding eigenspaces. For example, if $s \in \sigma(\hat S_n)$ is an eigenvalue of multiplicity 1 with a corresponding normalized eigenvector $\ket s$, then $ P_s$ is the rank-1 projection given by $ P_s = \ket s \bra s$. The collection $ \{ P_s \} $ defines a {\em projection-valued measure (PVM)} on $ \sigma( \hat S_n) $, i.e., a map $ \mathcal S_n: \Sigma(\hat S_n) \to B(\mathbb B_n) $ given by 
\begin{equation}
    \mathcal S_n( \Upsilon ) = \sum_{ s \in \Upsilon} P_s,
    \label{PVM1}    
\end{equation}
where $ \Sigma(\hat S_n)$ is the collection ($\sigma$-algebra) of all subsets of $ \sigma(S_n)$, and $\Upsilon $ a set in $\Sigma(\hat S_n)$.  A \emph{projective measurement} of $\hat S_n $ on the quantum state $ \hat \rho_{x,n}^{(t)} $ then corresponds to a randomly drawn eigenvalue $ \hat s $ from the spectrum $ \sigma(\hat S_n)$ with probability
\begin{displaymath}
    \mathbb P_{\hat \rho^{(t)}_{x,n}}( \hat s ) = \tr( \hat \rho_{x,n}^{(t)} P_s ).
\end{displaymath}
The random draws $ \hat s $ have expectation 
\begin{displaymath}
    \sum_{s \in \sigma(\hat S_n) } s \mathbb P_{\hat \rho^{(t)}_{x,n}}( s ) = \sum_{s\in \sigma(\hat S_n)} \tr( \hat \rho^{(t)}_{x,n} P_s ) =: f^{(t)}_n(x), 
\end{displaymath}
which is equivalent with~\eqref{eqUtApprox} by unitarity of the transformations from the matrix mechanical to quantum computational level. 

One can compute a Monte Carlo (ensemble) estimate of $ f^{(t)}_n(x) $ by performing a collection $\{ \hat s_1, \ldots, \hat s_K \}$ of measurements of $\hat S_n $ on $K $ independently and identically prepared quantum systems. The number $K$ is oftentimes referred to as the number of {\em shots}. The ensemble mean, 
\begin{equation}
    \label{eqEnsMean}
    \hat f^{(t)}_n(x) := \frac{1}{K} \sum_{k=1}^K \hat s_k 
\end{equation}
converges as $K \to \infty$ to the expectation $ f^{(t)}_n(x) $. The latter, converges in turn to the true value $U^t f(x)$ in the infinite-qubit limit, $ n \to \infty$; that is, we have
\begin{equation}
    \label{eqUConv}
    \lim_{n\to\infty} \lim_{K\to\infty} \hat f^{(t)}_n(x) = U^t f(x).
\end{equation}

\subsection{\label{secApproximateMeas}Approximate quantum measurement using quantum Fourier transforms}

Despite its theoretical consistency, the quantum measurement process described in Sec.~\ref{secIdealizedMeas} is not well-suited for practical quantum computation. The reason is that, in general, a quantum computing platform does not support the measurement of arbitrary PVMs such as $\mathcal S_n$ in~\eqref{PVM0}, and instead only allows measurement of the PVM associated with the quantum register. For an $n$-qubit system, the latter is defined as the PVM  $ \mathcal E_n : \Sigma( \{ 0, 1 \}^n ) \to B(\mathbb B_n)$ (cf.~\eqref{PVM1}),
\begin{displaymath}
    \mathcal E_n(\Upsilon ) = \sum_{\bm b \in \Upsilon}  E_{\bm b} \quad\mbox{with}\quad E_{\bm b} = \ket{\bm b} \bra{\bm b}, 
\end{displaymath}
where $ E_{\bm b}$ is the orthogonal projection along the computational basis vector $\ket{\bm b}$. 

In order to transform a measurement of $\mathcal S_n$  to an equivalent measurement of $\mathcal E_n $, one must apply a unitary transformation $ \hat \rho^{(t)}_{x,n} \mapsto \Lambda_n \hat \rho^{(t)}_{x,n} \Lambda^*_n$ to the quantum state $\hat \rho^{(t)}_{x,n}$, where $\Lambda_n : \mathbb B_n \to \mathbb B_n $ is a unitary map that diagonalizes $\hat S_n$, i.e., $\Lambda^*_n \hat S_n \Lambda_n$ is a diagonal operator in the $\{ \ket{\bm b} \}$ basis of $\mathbb B_n$. Two issues arise with this approach. First, $\Lambda_n$ is generally not known in closed form, and must be determined by solving an (exponentially large) eigenvalue problem for $\hat S_n$. Secondly, even if $\Lambda_n$ were known explicitly, it would likely be difficult to implement efficiently in a quantum circuit as it would generally be represented by a fully occupied matrix. 

To overcome these challenges, instead of working with $\Lambda_n$ directly, we will employ a different unitary map on $\mathbb B_n$  associated with the QFT. As is well known, the QFT on the $n$-qubit space $\mathbb B_n$ has a circuit implementation of size $O(n^2)$ and depth $O(n)$ \cite{Coppersmith94,MooreNilsson01,Nielsen10}. Thus, including it in the QECD pipeline does not result in loss of an exponential advantage in $n$ over classical computation. Crucially for our purposes, moreover, the class of operators $\hat S_n $ induced from multiplication operators $\pi f$ by classical observables on $X$ turns out to be approximately diagonalized by the QFT, with an error that vanishes in a suitable asymptotic limit. 

In more detail, for any $n \in \mathbb N$, let $\mathfrak F_n : \mathbb B_n \to \mathbb B_n$ be the Fourier operator on $\mathbb  B_n$, defined as
\begin{equation}
    \label{eqQFT}
    \mathfrak  F_n \ket m = \frac{1}{\sqrt {2^n}} \sum_{p=0}^{2^n-1} e^{-2\pi i p m /2^n} \ket p,
\end{equation}
where $\ket m$ and $\ket p$ are again two basis vectors of $\mathbb B_n$, parameterized by integers $m $ and $p $, respectively, by conversion of the corresponding binary sequences. Moreover, for $n$ divisible by the state space dimension $d$, let $\mathfrak F_{n,d} : \mathbb B_n \to \mathbb B_n$ be the tensor product operator defined as 
\begin{equation}
    \label{eqQFTProd}
    \mathfrak F_{n,d} = \underbrace{\mathfrak F_{n/d} \otimes \cdots \otimes \mathfrak F_{n/d}}_d,
\end{equation}
and $\bm{\mathfrak F}_{n,d} : B(\mathbb B_n) \to B(\mathbb B_n)$ the induced operator on quantum computational observables, given by
\begin{equation}
    \bm{\mathfrak F}_{n,d} A = \mathfrak F_{n,d} A \mathfrak F_{n,d}^*.
    \label{eqQFTInd}
\end{equation}
In Appendix~\ref{appQFT}, we show that $\tilde S_n := \mathfrak F_{n,d} \hat S_n \mathfrak F_{n,d}^* $ is an approximately diagonal operator in the computational basis $ \{ \ket{ \bm b } \} $. In particular, decomposing $\bm b = ( \bm b^{(1)}, \ldots, \bm b^{(d)} ) $, where $\bm b^{(i)} = ( b^{(i)}_1, \ldots, b^{(i)}_{n/d} ) $ are binary strings of length $n/d$, and defining the points 
\begin{equation}
    \label{eqGrid}
    x_{\bm b} = ( \theta_{\bm b^{(1)}}, \ldots, \theta_{\bm b^{(d)}}) \in \mathbb T^d 
\end{equation}
with the canonical angle coordinates 
\begin{equation*}
    \theta_{\bm b^{(i)}} = \frac{2\pi ({\tilde\beta}^{(n/d)})^{-1}(\bm b^{(i)})}{2^{n/d}}, 
\end{equation*}
we have  
\begin{equation}
    \label{eqApproxDiag}
    \tilde S_n \ket{\bm b} = \tilde s_{\bm b} \ket{\bm b} + \ket{ r_{n\bm b} }, \quad \tilde s_{\bm b} = f( x_{\bm b}). 
\end{equation}
Here, $ \ket{ r_{n\bm b} }$ is a residual that vanishes as $n\to\infty$, and $L: \rkha \to \rkha$ is the self-adjoint, diagonal operator defined in Appendix~\ref{appConsistency} (see also Sec.~\ref{secRKHSRep}). Effectively, the points $x_{\bm b}$ define a uniform grid on the $d$-torus $\mathbb T^d$, indexed by the $n$-digit binary strings $\bm b$. The quantities $\tilde s_{\bm b} $ can thus be interpreted as approximate eigenvalues of $\tilde S_n$, which can be obtained from \emph{classical} measurement of $f$ at the points $x_{\bm b}$, avoiding the need to solve an exponentially large eigenvalue problem for $\hat S_n$.

By virtue of these facts, and since
\begin{displaymath}
    \tr(\hat\rho^{(t)}_{x,n} \hat S_n) = \tr(\tilde\rho^{(t)}_{x,n} \tilde S_n),
\end{displaymath}
with
\begin{equation}
    \label{eqRhoQFT}
    \tilde \rho^{(t)}_{x,n} = \bm{\mathfrak F}_{n,d} \hat \rho^{(t)}_{x,n},
\end{equation}
we can approximate a measurement of $\hat S_n$ on the state $\hat \rho^{(t)}_{x,n}$ by a measurement of the PVM $\mathcal E_n$ on the state $\tilde \rho^{(t)}_{x,n}$. The latter measurement returns a random string $\bm b \in \{ 0,1 \}^n$ with probability 
\begin{displaymath}
    \mathbb P_{\tilde \rho^{(t)}_{x,n} }(\bm b ) = \tr(\tilde \rho^{(t)}_{x,n} E_{\bm b} ) = \bra{\bm b} \tilde \rho^{(t)}_{x,n} \ket{\bm b}, 
\end{displaymath}
inducing a sample $ \tilde s_{\bm b} = f(x_{\bm b}) $. Analogously to~\eqref{eqEnsMean}, we estimate $U^tf(x)$ by forming an ensemble of $K$ independent measurements $\bm b_1, \ldots, \bm b_K$ of $\mathcal E_n$, and computing the ensemble mean by
\begin{equation}
    \hat f^{(t)}_n(x) := \frac{1}{K} \sum_{k=1}^K \tilde s_{\bm b_k}. 
    \label{approxObs}
\end{equation}

Further details on this approximation, such as the proof of asymptotic consistency, can be found in Appendix~\ref{appQFT}. Here, we note that due to errors associated with the QFT-based measurement process, the convergence of $\hat f^{(t)}_n$ to $U^t f$ is not unconditional, but requires taking a sequence of decreasing RKHA parameters $\tau$ (unlike the limit in~\eqref{eqUConv} which holds for any $ \tau > 0 $). It should also be noted that it is possible to simulate multiple classical observables using the same circuit and ensemble of quantum measurements $\{ \bm b_1, \ldots, \bm b_K \} $. That is, to simulate the evolution of a different observable $ g: X \to \mathbb C $, we use the $\bm b_k$ to generate samples $ \dtilde{s}_{\bm b_k} = g( x_{\bm b_k}) $, and estimate $U^t g(x)$ by $\hat g^{(t)}_n(x) := \sum_{k=1}^K \dtilde s_{\bm b_k} / K $, analogously to~\eqref{approxObs}.

\section{\label{secPreparation}State preparation}

Besides measurement of observables, the preparation, or loading, of the quantum state representing the input (initial conditions) to a quantum computer is challenging. In a typical scenario involving an $n$-qubit computation, the register of a quantum computer is initialized with a state vector associated with an unentangled tensor product state,  $ \ket{\bm 0} \equiv \ket 0^{\otimes n}$. The desired initial state must be prepared by applying a unitary transformation (encoder) to $\ket{\bm 0}$, which may in general require a circuit of exponential depth in $n$ when the algorithm is broken down to elementary gate operations \cite{Benedetti2019,Markovic2020}. This poses a potentially significant obstruction to the scalability of quantum computational algorithms. 

In QECD, our task is to prepare the quantum state $\hat \rho_{x,n} = \hat{\mathcal F}_n(x)$ from~\eqref{eqRhoXNT} associated with the classical initial condition $x \in X$. This state is a pure state,
\begin{displaymath}
    \hat \rho_{x,n} = \ket{\hat \xi_{x,n}} \bra{\hat\xi_{x,n}},  
\end{displaymath}
where the state vector $\ket{\hat\xi_{x,n}} = W_n \xi_{x,n}$ is obtained by application of the unitary $W_n : \mathcal H_n \to \mathbb B_n$ from~\eqref{eqWn} on the normalized RKHS feature vector $\xi_{x,n}$ from~\eqref{eqRhoN}. Specifically, we have   
\begin{displaymath}
    \xi_{x,n} = \frac{k_{x,n}}{\sqrt{\kappa_n}} = \frac{1}{\sqrt{\kappa_n}} \sum_{j\in J_{n}} \psi_j^*(x) \psi_j,
\end{displaymath}
and thus
\begin{align}
    \ket{\hat \xi_{x,n}} &= W_n \xi_{x,n} 
    =\sum_{j\in J_{n}} \frac{\psi^*_j(x)}{\sqrt{\kappa_n}}W_n \psi_j \nonumber\\&= \sum_{\bm b \in \{ 0, 1 \}^n} \frac{\psi^*_{\bm\eta^{-1}(\bm b)}(x)}{\sqrt{\kappa_n}} \ket{\bm b}.
    \label{initial}
\end{align}
We now describe how, in the limit of small RKHA parameter $\tau$, this state can be prepared to any degree of accuracy using a circuit of size $O(n)$ and depth $O(1)$. 

First, let $\ket \Omega \in \mathbb B_n$ be the state vector associated with a uniform superposition of the quantum computational basis vectors,    
\begin{displaymath}
    \ket \Omega = \frac{1}{\sqrt{N}} \sum_{\bm b \in \{ 0, 1 \}^n} \ket {\bm b}.
\end{displaymath}
The state vector $\ket \Omega$ can be prepared from $\ket{\bm 0}$ using a circuit of depth 1, associated with an $n$-fold tensor product of Hadamard gates, i.e., 
\begin{equation}
    \ket \Omega = \left( \bigotimes_{i=1}^n \mathsf H \right) \ket{\bm 0}\,,
    \label{Hadamard}
\end{equation}
where $\mathsf H : \mathbb B \to \mathbb B $ is the Hadamard gate, represented by the matrix  
\begin{displaymath}
    \mathsf H = \frac{1}{\sqrt 2} 
    \begin{pmatrix}
        1 & 1 \\
        1 & - 1
    \end{pmatrix}.
\end{displaymath}
We will come back to this point in Sec.~\ref{secIBM} when the algorithm is implemented on an actual quantum computer.  

Next, recall that the basis functions $\psi_j$ of $\mathcal H_n$ have the form $\psi_j = e^{-\tau \lvert j \rvert_p/2} \phi_j$, where the $\phi_j$ are Fourier functions on the abelian group $X=\mathbb T^d$ (see Sec.~\ref{secRKHA}). Since the Fourier functions are characters of the group, they take the value $\phi_j(e) = 1 $ on the identity element $e \in X$ (the point with angle coordinates $\theta = 0 $), and thus 
\begin{displaymath}
    \ket{\hat \xi_{e,n}} = \sum_{\bm b \in \{ 0, 1 \}^n} \frac{e^{-\tau\lvert\bm\eta^{-1}(\bm b)\rvert_p/2}}{\sqrt{\kappa_n}} \ket{\bm b}.
\end{displaymath}
It follows that
\begin{equation}
    \label{eqOmegaApprox}
    \left\lVert \ket{\hat\xi_{e,n}} - \ket{\Omega} \right\rVert^2_{\mathbb B_n} = \sum_{\bm b \in \{ 0, 1 \}^n} \left\lvert \frac{1}{\sqrt N} - \frac{e^{-\tau\lvert\bm\eta^{-1}(\bm b)\rvert_p/2}}{\sqrt{\kappa_n}}\right \rvert^2,
\end{equation}
and noting that $\lim_{\tau\to0}\kappa_n = N$ (see~\eqref{eqRhoN}), we conclude that, for fixed $n$, $\ket{\hat \xi_{e,n}}$ converges to $\ket\Omega$ as $\tau \to 0$. In particular, since $\ket\Omega$ can be efficiently prepared via~\eqref{Hadamard}, we can efficiently approximate $\ket{\hat \xi_{e,n}}$ by $\ket\Omega$ to arbitrarily high precision. 

We now claim that every state vector $\ket{\hat \xi_{x,n}}$ from~\eqref{initial} can be reached efficiently from $\ket{\hat \xi_{e,n}}$ by applying a suitable unitary Koopman operator. Indeed, letting $S^x : \mathcal H \to \mathcal H$ be the shift operator by $ x= (\theta^1,\ldots,\theta^d) \in \mathbb T^d $, i.e.,
\begin{displaymath}
    (S^x f)(y) = f(x+y),
\end{displaymath}
we have that  $S^x = U^t$, where $U^t$ is the Koopman operator for any time $t$ and rotation frequencies $\alpha_1,\dots,\alpha_d$ such that $ x = (\alpha_1 t, \dots, \alpha_d t)$. Thus, if $\hat S^x_n : \mathbb B_n \to \mathbb B_n$ is the unitary operator induced at the quantum computational Hilbert space $\mathbb B_n$ by $S^x$ (cf.~\eqref{eqHatUtN}), 
\begin{displaymath}
    \hat S^x_n = (\mathcal W_n \circ \bm\Pi_n \circ \bm \Pi) S^x,
\end{displaymath}
we can implement $\hat S^x_n$ with a circuit of size $O(n)$ and depth $O(1)$ using an analogous approach to that used for the Koopman operator. In particular, by translation invariance of the kernel $k$ (see~\eqref{eqKTrans}),  we have $ \xi_x = S^{-x} \xi_e $, and thus $ \ket{\hat \xi_{x,n} } = \hat S^{-x}_n \ket{\hat \xi_{e,n}} $. Therefore, the state vector $\ket{\hat \xi_{x,n} }$ can be obtained efficiently by application of that circuit to $\ket{\hat \xi_{e,n}}$. 

Consider now the state vector 
\begin{equation}
    \label{eqDHatXi}
    \ket{\check{\xi}_{x,n}} := \hat S^{-x}_n \ket \Omega.
\end{equation}
We have
\begin{align*}
    \left\lVert \ket{\check{\xi}_{x,n}} - \ket{\hat\xi_{x,n}} \right\rVert_{\mathbb B_n} &= \left\lVert \hat S_n^{-x}\ket{\Omega} - \hat S_n^{-x}\ket{\hat\xi_{e,n}} \right\rVert_{\mathbb B_n} \\
    &= \left \lVert \ket \Omega - \ket{\hat{\xi}_{e,n}} \right\rVert_{\mathbb B_n},
\end{align*}
where we have used the unitarity of $\hat S^{-x}_n$ to obtain the last equality. By~\eqref{eqOmegaApprox}, it follows that as $\tau \to 0 $ at fixed $n$, $\ket{\check{\xi}_{x,n}}$ converges to $ \ket{\hat \xi_{x,n}} $. We therefore conclude that for any error tolerance $\epsilon$ there exists $\tau>0$ such that the desired initial state vector, $ \ket{\hat \xi_{x,n}}$, is approximated by $ \ket{\check{\xi}_{x,n}}$ with an error of at most $\epsilon$ in the norm of $\mathbb B_n$. Moreover, the state vector $ \ket{\check{\xi}_{x,n}}$ can be prepared by passing the initial quantum computational state vector $\ket{\bm 0}$ through a circuit of size $O(n)$ and depth $O(1)$. As with the QFT-based measurement scheme (see Sec.~\ref{secApproximateMeas} and Appendix~\ref{appQFT}), as $n\to\infty$, errors due to approximation of $ \ket{\hat \xi_{x,n}}$ by $ \ket{\check{\xi}_{x,n}}$ can be controlled by taking a decreasing sequence of RKHA parameters $\tau$.

\section{\label{secExamples}Simulated quantum circuit experiments}

In this section, we demonstrate the performance of the QECD framework with simulated quantum circuit experiments implemented in the ideal Qiskit Aer simulator~\cite{Qiskit20,Qiskit}. We consider a periodic example on the circle (Sec.~\ref{secCircle}), as well as a quasiperiodic system on the 2-torus (Sec.~\ref{secTorus}). In both cases, we compare the mean from an ensemble of quantum measurements with the true dynamical evolution of representative classical observables. The numerical results, displayed in Figs.~\ref{dim1} and~\ref{dim2} for the one- and two-dimensional examples, respectively, are in good agreement with the theory developed in Secs.~\ref{secQuantumMechanical}--\ref{secMeas}.

\subsection{\label{secCircle}Circle rotation}
\begin{figure*}[t]
    \centering
    \includegraphics[width=.95\linewidth]{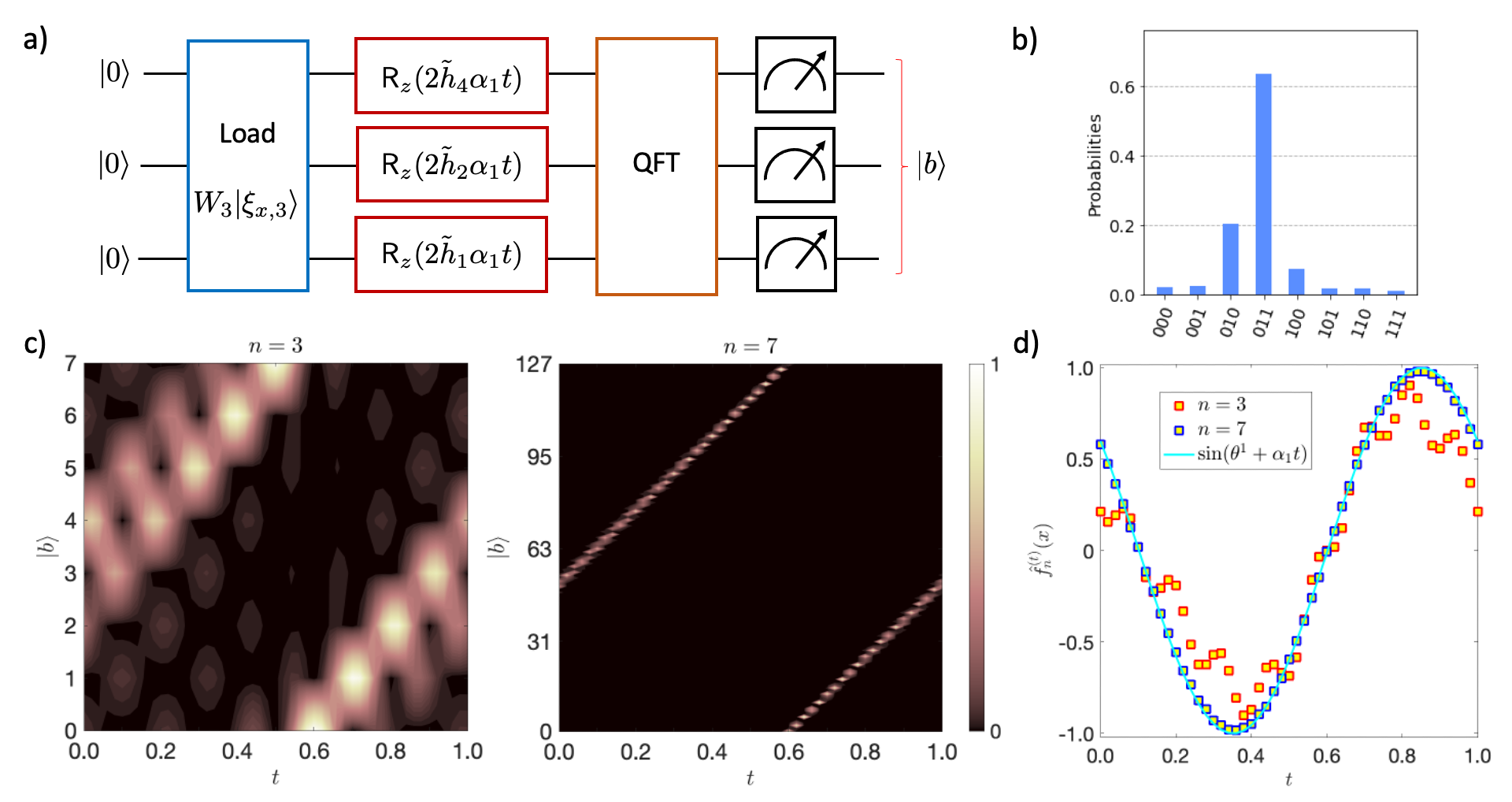}
    \caption{Quantum circuit implementation of the 3- and 7-qubit approximation of a circle rotation with frequency $\alpha_1 = 2\pi$ in the ideal Qiskit Aer environment. (a) Circuit diagram with $n=3$ qubits, comprising (from left to right) of state vector load, Koopman evolution over time $t$ using $\mathsf R_z$ gates, quantum Fourier transform (QFT), and measurement. (b) Empirical distribution of an ensemble of $K=10^6$ projective measurements (shots) of the projection-valued measure (PVM) associated with the computational basis vectors $\ket{\bm{b}} \equiv \ket b$ for $n=3$ and $t=0.94$. (c) Temporal evolution of the empirical probability distributions for $n=3$ and 7. (d) Reconstruction of the classical observable $f^{(t)}(x)=\sin(x(t))=\sin(\theta^1+\alpha_1 t)$ from the ensemble means, $\hat f^{(t)}_n(x)$. The analytical result $f^{(t)}(x)$ is plotted as a cyan solid line. In Panels (b)--(d), the initial condition is $x=\theta^1 = 2.5$ and the reproducing kernel Hilbert algebra (RKHA) parameters are $p=\tau=1/4$. Measurements are performed at a fixed timestep $\Delta t = 0.02$. In Panels~(b) and~(c), the computational basis vectors $\ket b$ are indexed by an integer $b$ in the range $0, \dots, 2^{n-1}$.}
    \label{dim1}
\end{figure*}
\begin{figure*}
    \centering
    \includegraphics[width=.45\linewidth]{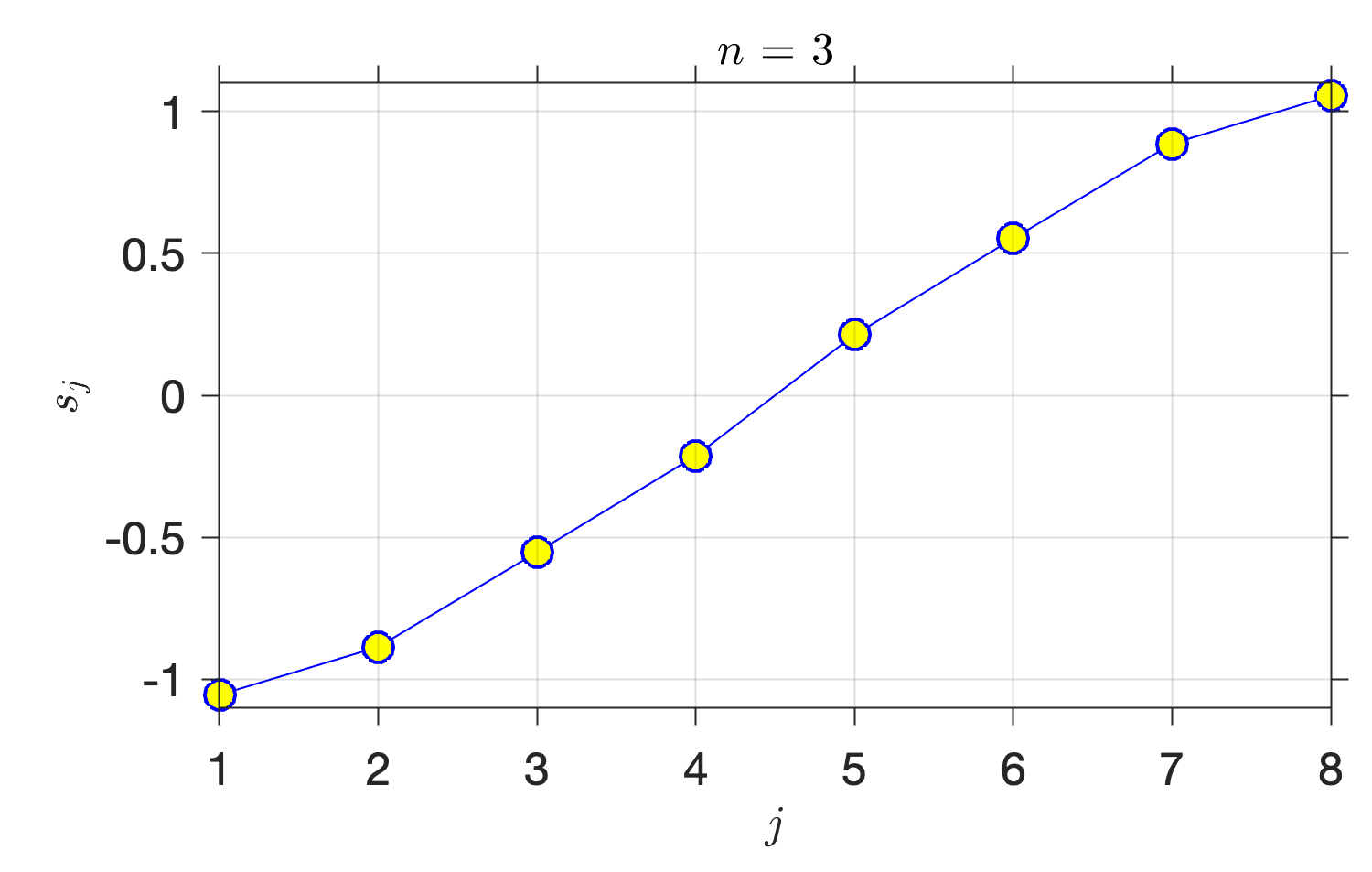} \qquad \includegraphics[width=.45\linewidth]{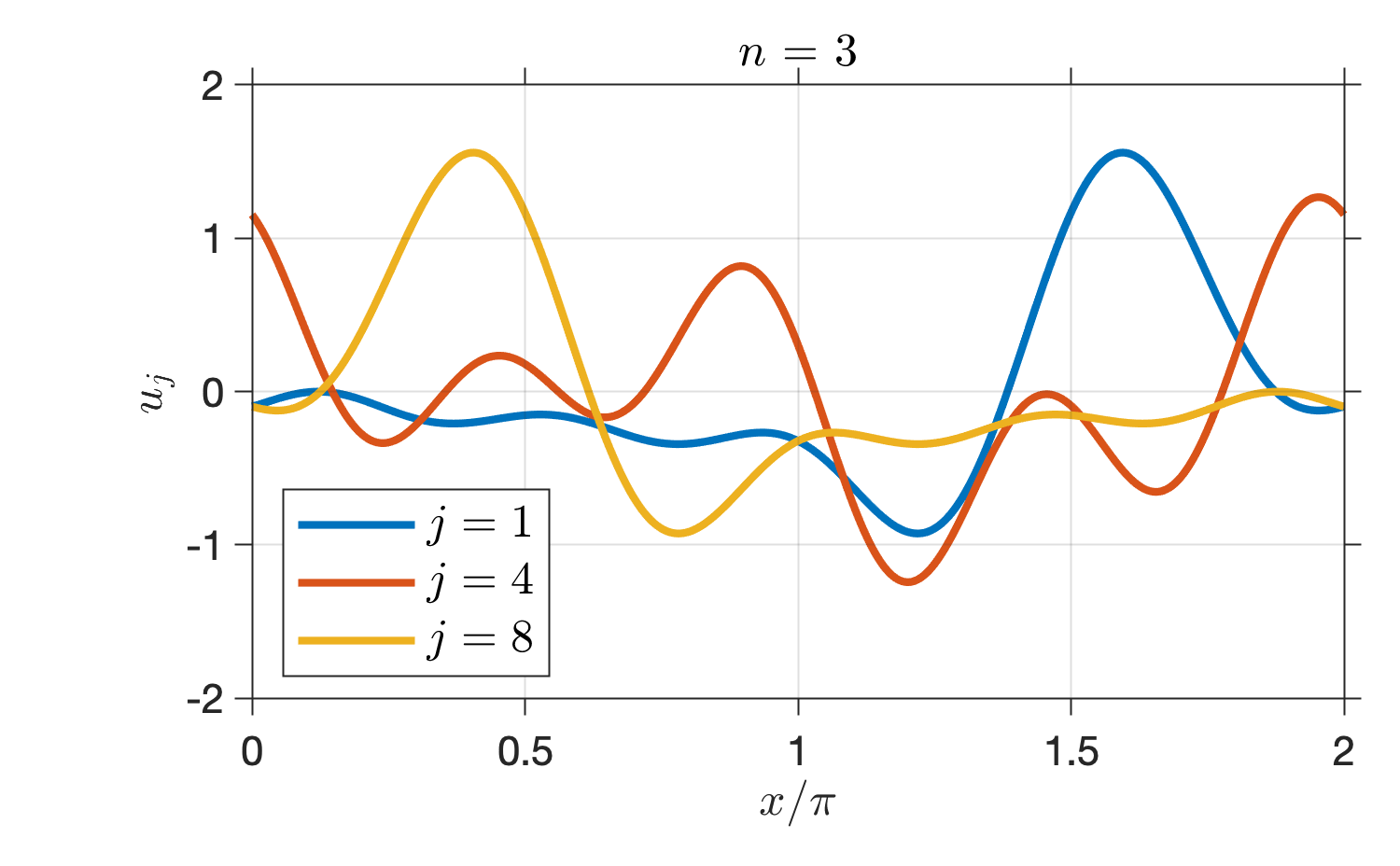} 
    \includegraphics[width=.45\linewidth]{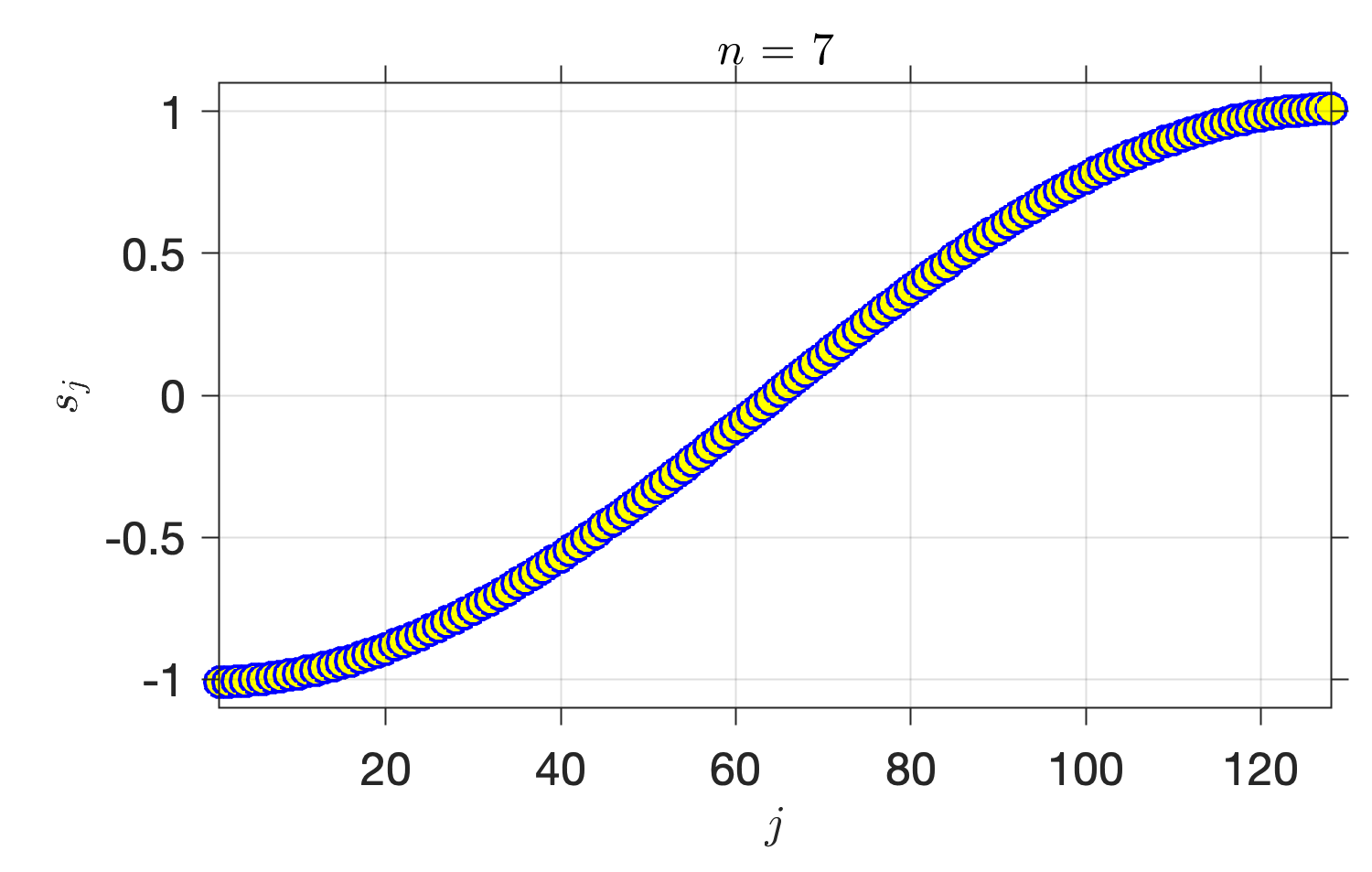} \qquad \includegraphics[width=.45\linewidth]{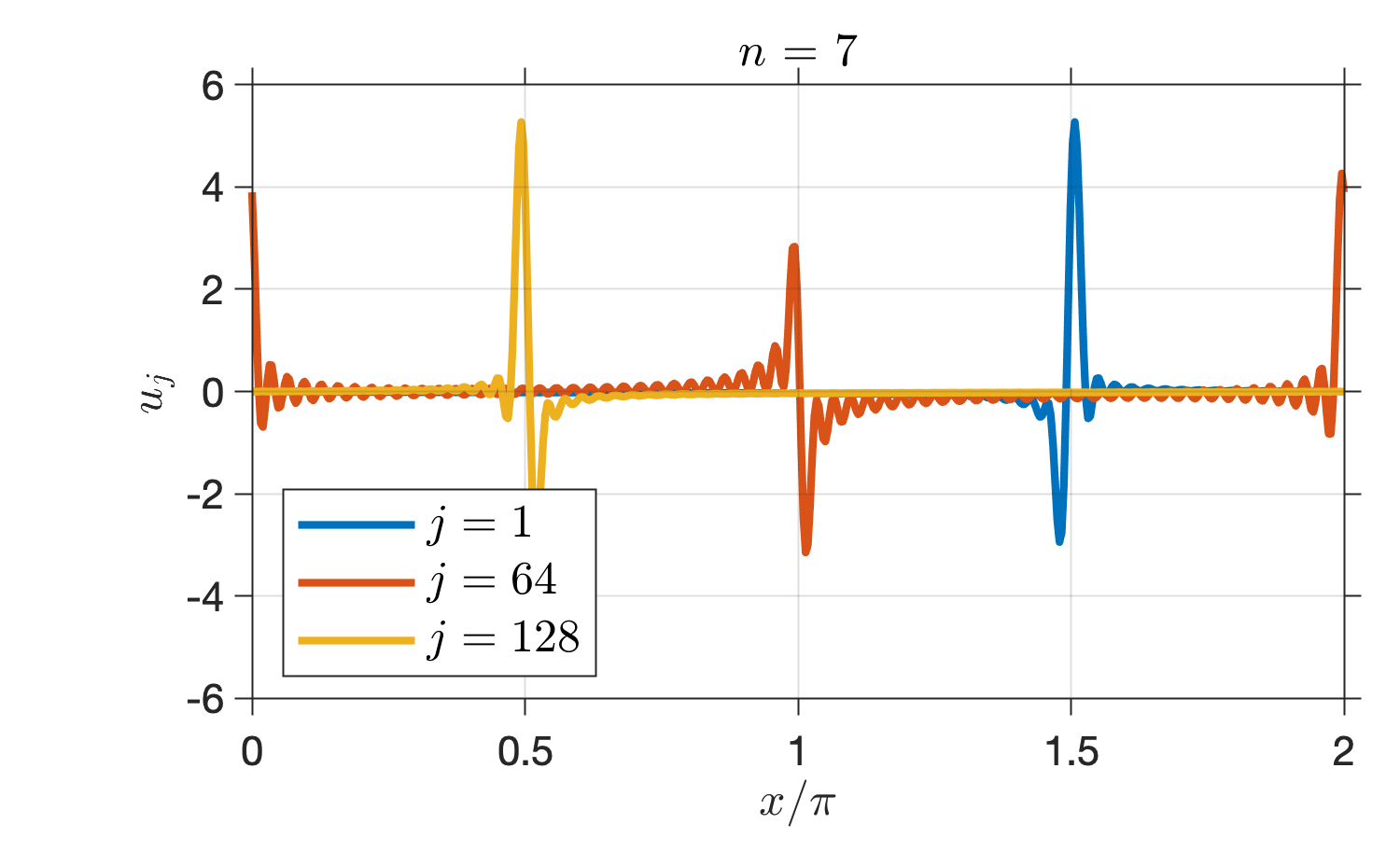} 
    \caption{\label{figSpecEig}Eigenvalues $s_j$ (left-hand column) and representative eigenfunctions $u_j : S^1 \to \mathbb R$ (right-hand column) of the self-adjoint operators $S_n = T_n f$ representing the classical observable $ f(x) = \sin x$ on the circle for the qubit numbers $n=3$ (top row) and 7 (bottom row). The RKHA parameters are $p= \tau = 1/4$ as in Fig.~\ref{figMultOp}. The index $j$ runs from 1 to $2^n$. Notice that as $n$ increases the spectra of $S_n$ provide an increasingly dense sampling of the range of values of $f$ (i.e., the interval $[-1,1]$), and the eigenfunctions $u_j(x)$ become increasingly localized around values of $x$ for which $ f(x) \approx s_j $.}
\end{figure*}

According to~\eqref{eqTorusRotation}, in dimension $d=1$ the orbits of the dynamics are given by 
\begin{displaymath}
    x(t) = \Phi^t(x) = (\theta^1 + \alpha_1 t) \mod 2 \pi,
\end{displaymath}
where $\alpha_1$ is the frequency parameter and $x=\theta^1$ the initial condition. We set $\alpha_1 = 2 \pi $, so the orbits have period $ 2 \pi / \alpha_1 = 1$. We seek to approximate the evolution of a real-valued observable $f : S^1 \to \mathbb R $ on the orbit starting at $x$, which is represented using the Koopman operator as
\begin{displaymath}
    f^{(t)}(x) = U^t f(x) = f(\Phi^t(x)) = f(\theta^1+\alpha_1 t ).  
\end{displaymath}
In this experiment, we consider the bandlimited observable $f(x) = \sin x$. 

The quantum circuit output by the compiler, displayed graphically in Fig.~\ref{dim1}(a), consists of the following four logical stages:
\begin{enumerate}
    \item A load stage, where the initial quantum state $\hat \rho_{x,n} = \hat{\mathcal F}_n(x)$ is prepared using the quantum feature map $\hat{\mathcal F}_n$ in~\eqref{eqFTN}.
    \item A dynamical evolution stage, which evolves $\hat \rho_{x,n}$ to the state $ \hat \rho^{(t)}_{x,n} = \hat\Psi^t_n(\rho_{x,n})$ using the evolution operator $\hat \Psi^t_n$ in~\eqref{eqPsiUN}.  
    \item A QFT stage, rotating $\hat \rho^{(t)}_{x,n}$ to the state $\tilde\rho^{(t)}_{x,n} = \bm{\mathfrak F}_{n,d} \hat\rho^{(t)}_{x,n}$ using the Fourier operator in~\eqref{eqQFTProd}. 
    \item A measurement stage, measuring the quantum-computational PVM $\mathcal E_n$ on the state $\hat \rho^{(t)}_{x,n}$. The quantum mechanical approximation $ \hat f^{(t)}_n(x)$ of $f^{(t)}(x)$ is then obtained as an ensemble mean of $K$ independent shots using~\eqref{approxObs}.
\end{enumerate}
The circuit is parameterized by three parameters, namely the RKHA parameters $p$ and $\tau$ and the number of qubits $n$. We set $ p = \tau = 1/4 $, and consider experiments with $n = 3$ and $n=7$ qubits, corresponding to the quantum computational Hilbert spaces $\mathbb B_3$ and $\mathbb B_7$ of dimension $N=2^3 = 8 $ and $N = 2^7 = 128$, respectively. Another input parameter is the evolution time $t$, which we set to integer multiples of a fixed timestep $\Delta t = 0.02$ for purposes of visualization.

Since all quantum states in the pipeline are pure, in practice we implement the circuit as a sequence of operators on the corresponding state vectors. First, the initial state is given by 
\begin{displaymath}
    \hat \rho_{x,n} = \ket{\hat \xi_{x,n}} \bra{\hat\xi_{x,n}},  
\end{displaymath}
where the state vector $\ket{\hat\xi_{x,n}} = W_n \xi_{x,n}$ is obtained by application of the unitary $W_n : \mathcal H_n \to \mathbb B_n$ from~\eqref{eqWn} on the normalized RKHS feature vector $\xi_{x,n}$ from~\eqref{eqRhoN}. See also~\eqref{initial}. 
We note that in these experiments the state vector $ \ket{\hat\xi_{x,n}}$ is loaded into the quantum register ``exactly'', using an amplitude encoding scheme applied to the initial state vector $\ket{\bm 0}$ (see Fig.~\ref{dim1}(a)), as opposed to the efficient approximate scheme described in Sec.~\ref{secPreparation}. In particular, we loaded $ \ket{\hat\xi_{x,n}}$ using the Qiskit function \texttt{QuantumCircuit.initialize}. We will discuss experiments utilizing the preparation approach of Sec.~\ref{secPreparation} in Sec.~\ref{secIBM}.

The next step is the unitary Koopman evolution, given by
\begin{displaymath}
    \hat\rho^{(t)}_{x,n} = \hat \Psi^t_n(\hat\rho_{x,n}) = \hat U^{t*}_n \ket{\hat \xi_{x,n}} \bra{\hat \xi_{x,n}} \hat U^t_n.
\end{displaymath}
Here, $\hat U^t_n = e^{it H_n}$ is the unitary operator in~\eqref{eqUDecomp2}, which is generated by the Hamiltonian $H_n$ with the Walsh factorization in~\eqref{eqHDecompQuasiperiodic}. We have $\hat\rho^{(t)}_{x,n} = \ket{\hat\xi^{(t)}_{x,n}} \bra{\hat\xi^{(t)}_{x,n}}$ with 
\begin{displaymath}
    \ket{\hat\xi^{(t)}_{x,n}} = \hat U^{t*}_n \ket{\hat \xi_{x,n}} = e^{-iH_n t} \ket{\hat \xi_{x,n}}.    
\end{displaymath}
Therefore, our circuit implements the transformation $\ket{\hat \xi_{x,n}} \mapsto \ket{\hat\xi^{(t)}_{x,n}}$, i.e.,     
\begin{align*}
    \ket{\hat\xi^{(t)}_{x,n}}&= \sum_{b=0}^{2^n-1} \frac{\psi^*_{o^{-1}(b)}(x)}{\sqrt{\kappa_n}} e^{-itH_n}\ket b \\
    &= \sum_{b=0}^{2^n-1} \frac{\psi^*_{o^{-1}(b)}(x)}{\sqrt{\kappa_n}} \left[ \bigotimes_{l=0}^{n-1} \exp(-it\alpha_1 \tilde h_{2^l} Z) \right] \ket b,
\end{align*}
where $\tilde h_{2^l} = \hat h_{2^l}/\alpha_1$, and $\hat h_{2^l}$ are the Walsh-Fourier coefficients in~\eqref{eqHDecompQuasiperiodic}. In more detail, using~\eqref{eqHWalsh} with $d=1$ and $n=3$, we obtain that all coefficients $\tilde h_{2^l}$ are zero except from $\tilde h_1=-5/2$, $\tilde h_2 =-1$, and $\tilde h_4=-1/2$. For $n=7$, the seven non-vanishing coefficients are $\tilde h_1=-65/2$, $\tilde h_2 =-16$, $\tilde h_4=-8$, $\tilde h_8=-4$, $\tilde h_{16}=-2$, $\tilde h_{32}=-1$, and $\tilde h_{64}=-1/2$. The implementation of this second step on the quantum computer is done for each qubit channel separately, as seen in Fig.~\ref{dim1}(a), by a $\mathsf R_z$ rotation gate given by
\begin{equation*}
    \mathsf R_z(\vartheta)=e^{-i\vartheta Z/2}=
    \begin{pmatrix}
        e^{-i\vartheta/2} & 0\\
        0 & e^{i\vartheta/2}
    \end{pmatrix}.
\end{equation*}
Specifically, we have
\begin{align*}
    \exp(-it\alpha \tilde h_{2^l} Z) = \mathsf R_z(2 \alpha t \tilde h_{2^l}).
\end{align*}

The third step is the application of the QFT, which results to 
\begin{displaymath}
    \tilde \rho^{(t)}_{x,n} = \bm{\mathfrak F}_{n,1} \hat \rho^{(t)}_{x,n} = \ket{\tilde \xi^{(t)}_{x,n}} \bra{\tilde \xi^{(t)}_{x,n}}, 
\end{displaymath}
where $ \ket{\tilde \xi^{(t)}_{x,n}} = \mathfrak F_{n,1} \ket{\hat \xi^{(t)}_{x,n}}  $. We again operate at the level of state vectors, effecting the transformation $ \ket{\hat \xi^{(t)}_{x,n}} \mapsto \ket{\tilde \xi^{(t)}_{x,n}}$ using a standard QFT circuit. The subsequent measurement of the PVM $\mathcal E_n$ on the state represented by $\ket{\tilde \xi^{(t)}_{x,n}}$ for $K$ shots leads to an empirical probability distribution over the binary strings $\bm b \in \{ 0, 1 \}^n $ (which index the basis vectors $\ket b \equiv \ket{\bm b}$), depicted in Fig.~\ref{dim1}(b) for a representative evolution time $t$. In Fig.~\ref{dim1}(c), we display the time evolution of this probability distribution for $K=10^6$ shots and $n=3$ and $n=7$ qubits. Notice that as $n$ increases, the probability distribution becomes increasingly concentrated around straight lines that periodically fold wrap around the set $b = 0, \ldots, 2^n$ indexing the $\ket b$ vectors. This is a manifestation of the fact that the time-dependent quantum state $\tilde \rho^{(t)}_{x,n}$ ``tracks'' the underlying classical state $x(t)$. 

Figure~\ref{dim1}(d) displays the true ($f^{(t)}(x)$) and simulated ($\hat f^{(t)}_n(x)$) evolution of the observable $f(x)=\sin x$ over the time interval $ t \in [0,1]$ starting from the initial condition $x = \theta^1 = 2.5$. The simulated evolution $\hat f^{(t)}_n(x)$, which is again obtained using $K= 10^6$ shots, is seen to be in good agreement with the true signal for $n=7$ qubits. The simulation fidelity for $ n = 3$ qubits is clearly degraded, exhibiting higher variance near the extrema $f^{(t)}(x) = \pm 1$ of the true signal, but nevertheless captures an approximately sinusoidal waveform with the correct frequency.  

To gain intuition on the expected fidelity of the quantum computational model as a function of the number of qubits, in Fig.~\ref{figSpecEig} we show the spectra of eigenvalues $s_j$ and representative corresponding eigenfunctions $u_j$ of the self-adjoint operator $S_n := \bm\Pi_n(Tf) $ from~\eqref{eqUtApprox} for $n=3$ and 7 qubits. Recall, in particular, that $S_n$ is an approximation of the multiplication operator by $f$, with its spectrum of eigenvalues $\sigma(S_n)$ providing a discretization of the (continuous) range of values of $f$, i.e., in  this case the interval $[-1,1]$. Moreover, $S_n$ is unitarily equivalent to the quantum computational observable $\hat S_n = \mathcal W_n S_n$, which is in turn approximately unitarily equivalent to the Fourier-transformed observable $\tilde S_n = \bm{\mathfrak F}_{n,1} \hat S_n$ that our circuit approximately measures. In Fig.~\ref{figSpecEig}, it is evident that as $n$ increases, $\sigma(S_n)$ samples the interval $ [ -1, 1 ]$ with increasingly high density, exhibiting a clustering of eigenvalues near the boundary points $\pm 1 $. This concentration of density is consistent with the distribution of $f(x) = \sin x$ induced by a fixed-frequency rotation on the circle. Meanwhile, as $n$ increases, the eigenfunctions exhibit increasingly high localization, with eigenfunction $u_j(x)$ concentrated on points $x \in S^1 $ such that $f(x)$ is close to the corresponding eigenvalue $ s_j $. This is seen in the right-hand column of the figure for representative eigenfunctions $ u_j$. Thus, intuitively, as the number of qubits increases, the PVM associated with $\tilde S_n$ (which we approximate by the quantum computational PVM $\mathcal E_n$) provides a representation of the classical observable $ f $ of increasingly high resolution.   

\subsection{\label{secTorus}Quasiperiodic dynamics on the 2-torus}
\begin{figure*}[t]
    \centering
    \includegraphics[width=.95\linewidth]{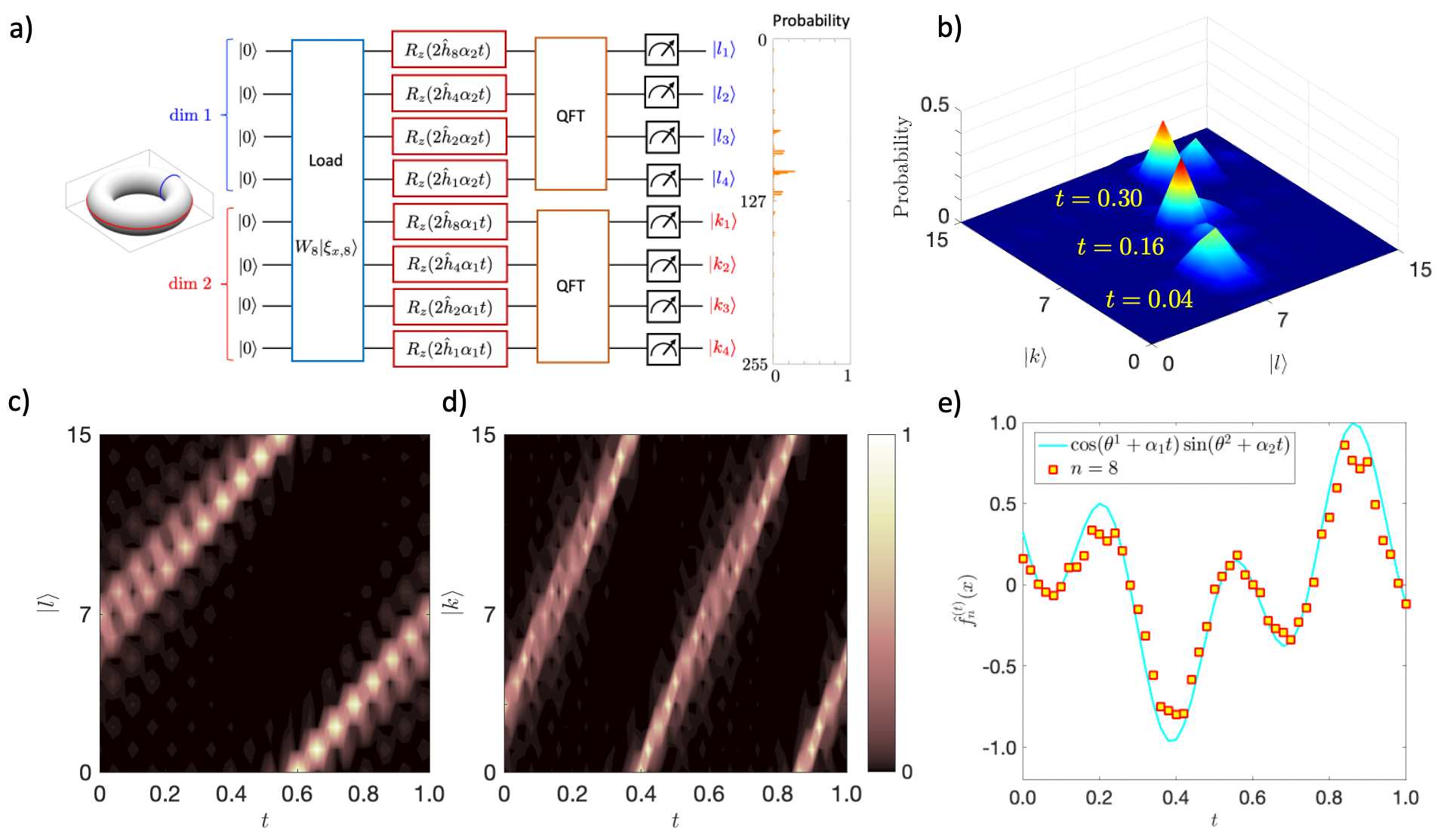}
    \caption{As in Fig.~\ref{dim1}, but for an 8-qubit approximation of a quasiperiodic rotation on the 2-torus with frequency parameters $\alpha_1 = 3 \sqrt 2 \pi $ and $\alpha_2 = 2 \pi$. (a) Quantum circuit for the quasiperiodic system, composed as two parallel copies of the circuit in Fig.~\ref{dim1}(a) for the one-dimensional case, with 4 qubits allocated to each dimension of the 2-torus. An empirical probability distribution obtained from $K=10^6$ shots is shown to the right of the circuit diagram, where the integers $b = 0, \ldots, 2^8-1 = 255 $ index the computational basis vectors $\ket b$ of the 256-dimensional Hilbert space $\mathbb B_n$ with $n=8$. The RKHA parameters are again  $p=\tau=1/4$. (b) Snapshots of the probability distribution at three representative evolution times, combined in a single surface plot. The horizontal axes labeled $\ket k$ and $\ket l $ correspond to the basis vector indices for each of the 4-qubit spaces associated with each torus dimension through the factorization $\mathbb B_8 = \mathbb B_4 \otimes \mathbb B_4$. Note that the indices $k$ and $l$ range from 0 to $2^4-1=15$. (c, d) Evolution of the marginal distributions obtained by measurement of the PVMs of each of the two 4-qubit spaces, i.e., one of the two torus dimensions only. The initial condition is $ x = (\theta^1,\theta^2) = (1.0,2.5) $, and measurements are performed at a fixed timestep $\Delta t=0.02$. The slopes of the probability contours in Panels~(c) and~(d) are proportional to the frequency parameters $\alpha_2$ and $\alpha_1$, respectively. Notice that the slopes in Panel~(c) are shallower than those in Panel~(d) since $\alpha_2 < \alpha_1$, and are equal to the corresponding slopes in Fig.~\ref{dim1}(c) since $\alpha_2$ is equal to the frequency parameter of the one-dimensional example. (e)~Reconstruction of the classical observable $f^{(t)}(x)=f^{(t)}(x_1,x_2)=\cos(\theta^2+\alpha_2 t)\sin(\theta^1+\alpha_1 t)$ from the ensemble means $\hat f^{(t)}_n(x)$ output from the quantum computer. The true evolution $f^{(t)}(x)$ is plotted as a cyan solid line.}
    \label{dim2}
\end{figure*}


The two-dimensional case proceeds along similar lines as the one-dimensional example in Sec.~\ref{secCircle}, so we mainly focus on the points that are different from the one-dimensional example. The classical dynamical orbit on the 2-torus is now given by 
\begin{displaymath}
    x(t)=\Phi^t(x) = (\theta^1+\alpha_1 t, \theta^2+\alpha_2 t) \mod 2\pi,   
\end{displaymath}
where $\alpha_1$ and $\alpha_2$ are the frequency parameters and $x=(\theta^1, \theta^2)$ is the initial condition. We choose the (rationally independent) values $\alpha_1 = 3 \sqrt 2 \pi $ and $\alpha_2 = 2 \pi $, leading to an ergodic flow on $\mathbb T^2$. We again seek to approximate the evolution of a bandlimited classical observable $f$, in this case $f(x) = \sin(\theta^1) \cos(\theta^2)$. The evolution of this observable is given by
\begin{displaymath}
    f^{(t)}(x) = U^t f(x) = \sin(\theta^1+\alpha_1 t) \cos(\theta^2 + \alpha_2 t).
\end{displaymath}

To perform quantum simulation, we set the RKHA parameters $p = \tau = 1/4$ as in Sec.~\ref{secCircle}, and use a total of $n=8$ cubits, which corresponds to 4 qubits allocated to each torus dimension. The quantum computational Hilbert space, $\mathbb B_8$, is thus 256-dimensional, and admits the tensor product factorization 
\begin{equation}
    \label{eqB8}
    \mathbb B_8=\mathbb B_4\otimes \mathbb B_4. 
\end{equation}
For convenience in the notation, we will label the basis vectors for each of the $\mathbb B_4$ factors in~\eqref{eqB8} as $\ket{\bm k }$ and $\ket{\bm l }$, where $\bm k = (k_1,k_2,k_3,k_4) $ and $\bm l = (l_1, l_2, l_3, l_4) $ are 4-digit binary strings.  Note that the factorization in~\eqref{eqB8} is compatible with the tensor product structure of the infinite-dimensional RKHA $\rkha$ in~\eqref{eqRKHAProd}, in the sense that each $\mathbb B_4$ factor corresponds to the image space under a projection of the $\rkha^{(1)}$ spaces in~\eqref{eqRKHAProd}. See also Appendix~\ref{appWalsh}, and in particular~\eqref{eqHWalsh}. A similar tensor product structure applies for the quantum feature map, dynamical evolution, and QFT operators,
\begin{equation}
    \label{eqTorusMaps}
    \begin{aligned}
        \hat{\mathcal F}_n &= \hat{\mathcal F}_{n/2}^{(1)} \otimes \hat{\mathcal F}_{n/2}^{(1)}, \\
        \hat U^t_n &= (U^t_{n/2})^{(1)} \otimes (U^t_{n/2})^{(1)}, \\ 
        \bm{\mathfrak F}_{n,2} &= \bm{\mathfrak F}_{n/2,1} \otimes \bm{\mathfrak F}_{n/2,1},
    \end{aligned}
\end{equation}
so we can form the entire circuit by composing two 4-qubit circuits from the one-dimensional case; see Fig.~\ref{dim2}(a) for an illustration. In~\eqref{eqTorusMaps}, $(1)$-superscripts and $1$-subscripts denote maps inherited from the one-dimensional case.  

As in the one-dimensional example of Sec.~\ref{secCircle}, all quantum states occurring in our scheme are pure, so we implement the circuit in Fig.~\ref{dim2}(a) at the level of the vectors $\xi_{x,n} $ (normalized RKHS feature vectors),  $\ket{\hat \xi_{x,n}} = W_n \xi_{x,n} $ (initial state vectors), $\ket{\hat \xi_{x,n}^{(t)}} = \hat U^{t*}_n \ket{\hat \xi_{x,n}}$ (Koopman-evolved state vectors), and $\ket{\tilde \xi^{(t)}_{x,n}} = \mathfrak F_{n,2} \ket{\hat \xi^{(t)}_{x,n}}$ (state vectors after application of the QFT). Note that the normalized feature vector associated with classical state $x \in \mathbb T^2$ takes the form 
\begin{displaymath}
    \xi_{x,n}=\sum_{j\in J_{n,2}} \frac{\psi^*_j(x)}{\sqrt{\kappa_n}}\psi_j,
\end{displaymath}
with $n=8$ and
\begin{align*}
    \psi^*_j(x) = \exp\left[-\frac{\tau}{2}(|j_1|^p+|j_2|^p)\right] \exp[-i(j_1x_1+j_2x_2)]. 
\end{align*}
See again Table~\ref{tableWalsh} for an example of the ordering of the multi-index $j$ and its mapping to the computational basis in the case $n=4$ (the table would have 256 rows in the current example). We also note that our $n=8$ example has $2\times 4$ nonzero Walsh-Fourier expansion coefficients: $\tilde h_1=-9/2$, $\tilde h_2 =-2$, $\tilde h_4=-1$, and $\tilde h_8=-1/2$ for each torus dimension.

Figure~\ref{dim2}(b) displays snapshots of the empirical joint probability distribution of the $(\bm k, \bm l)$ indices at representative evolution times $t$, obtained from ensembles of $K=10^6$ measurements of the quantum computational PVM $\mathcal E_n$ on the state represented by $\ket{\tilde \xi^{(t)}_{x,n}} $ for the initial condition $x=(1.0,2.5)$. The locality of the distributions is indicative of the fact that the quantum computing model successfully tracks the orbit of the underlying classical dynamical system. Figs.~\ref{dim2}(d) and~\ref{dim2}(d) show marginals of these distributions over the $\bm k $ and $\bm l $ index spaces as a function time $t$, where periodic evolution at the generating frequencies $\alpha_1$ and $\alpha_2$, respectively, is apparent. 

In Fig.~\ref{dim2}(e) we compare the approximate evolution $\hat f^{(t)}_n(x)$ of the observable $f$ computed from the same ensembles of quantum measurements against the true evolution $f^{(t)}(x)$. Despite the modest number of qubits allocated to each torus dimension, $\hat f^{(t)}_n(x)$ reproduces the quasiperiodic behavior of $f^{(t)}(x)$ to an adequate degree of accuracy, with more pronounced errors occurring near the extrema of the true signal. As in the one-dimensional example of Fig.~\ref{dim1}(d), we expect such discrepancies to rapidly diminish as the number of qubits increases. Similarly, from this example it becomes clear  how one can generalize the dynamics to a torus of dimension $ d > 2$.

\section{\label{secIBM}Experiments on the IBM Quantum System One}

\begin{figure*}[t]
    \centering
    \includegraphics[width=\linewidth]{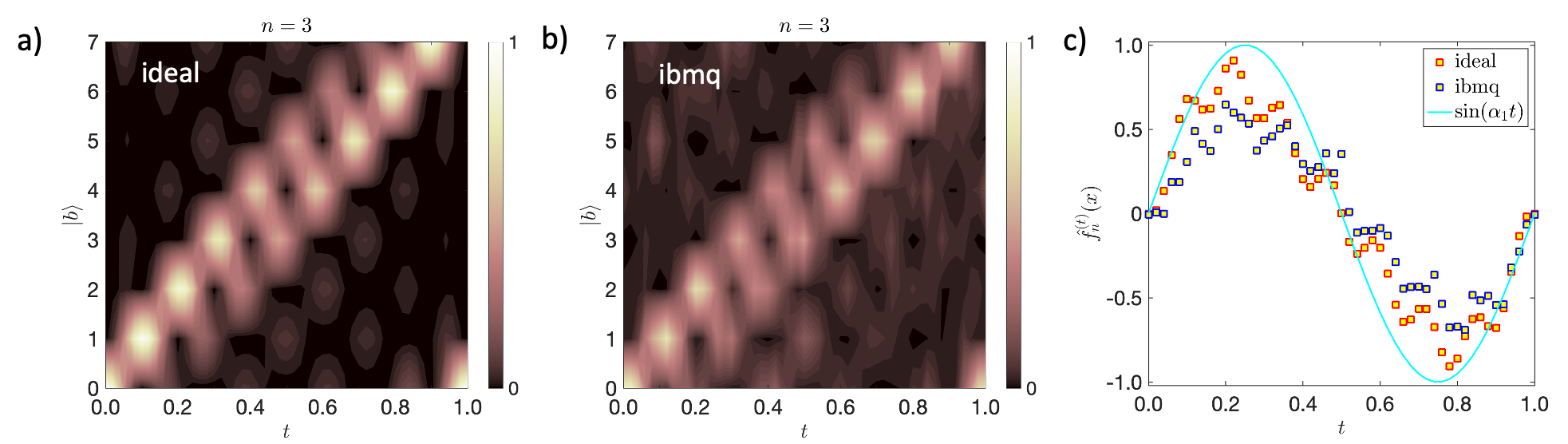}
    \caption{Comparison of 3-qubit approximations of a circle rotation with frequency $\alpha_1 = 2\pi$ from simulated circuit experiments in the ideal Qiskit Aer environment (ideal) and actual quantum computing experiments on the IBM Quantum System One (ibmq). (a)~Temporal evolution of the empirical probability similar to Fig.~\ref{dim1}(d) in the ideal Qiskit circuit simulation, using amplitude encoding with \texttt{QuantumCircuit.initialize} for the state preparation and the RKHA parameters $p=\tau=1/4$. (b)~Temporal evolution of the empirical probability distributions for $n=3$ on the quantum computer starting with a uniform superposition state $|\Omega\rangle$ at $t=0$. (d)~Reconstruction of the classical observable $f^{(t)}(x)=\sin(x(t))=\sin(\theta^1+\alpha_1 t)$ from the ensemble means, $\hat f^{(t)}_n(x)$. The analytical result $f^{(t)}(x)$ is plotted as a cyan solid line. In all panels, the initial condition is $x=\theta^1 = 0$. Measurements are performed at a fixed timestep $\Delta t = 0.02$. The number of shots is $K=2^{18}=\text{262,144}$ in both cases.}
    \label{ibmq}
\end{figure*}

The circle rotation algorithm for $n=3$ qubits was also implemented on the IBM Quantum System One to demonstrate the readiness of QECD on a real NISQ device. This system has a quantum volume (an empirical metric that quantifies the capability and error rates of a quantum device) of 32. The corresponding program was again written in Qiskit (see Sec.~\ref{secExamples}), and then transpiled (translated) into a sequence of appropriate elementary gate operations acting on the physical superconducting qubits via microwave channels at the hardware level. No error correction was used in our simulation.  As mentioned in Sec.~\ref{secPreparation}, the encoding of $2^n$ (complex) amplitudes that represent the feature vector $\ket{\hat{\xi}_{x,n}}$ associated with classical state $x\in X$ in an $n$-qubit quantum register can lead to an exponential growth of gates. To give a concrete example for $n=3$: amplitude encoding using \texttt{QuantumCircuit.initialize} with no circuit optimization is transpiled into a sequence of 84 elementary quantum gates. This conversion results to 52 elementary gates for a higher transpiler optimization level of 2. 

To circumvent this expensive amplitude encoding of classical data, it was shown in Sec.~\ref{secPreparation} that the initial state vector $\ket{\hat{\xi}_{x,n}}$ can also be obtained to any degree of accuracy with a circuit of size $O(n)$ and depth $O(1)$. In the particular case $x = e$ (i.e., the point with canonical angle coordinates $\theta^1, \ldots,\theta^d = 0 $), the encoding reduces to a uniform superposition state for $n$-qubits, $|\Omega\rangle$, which is obtained via $n$ Hadamard gates $\mathsf H$ applied to the standard basis quantum state $|0\rangle^{\otimes n}$ (see~\eqref{Hadamard}). This step reduces the number of gates, and thus the circuit depth, significantly to 33 and 30 for the transpiler optimization levels 0 and 2, respectively. This depth is close to the quantum volume of the   computer. 

Figure \ref{ibmq} directly compares the results of an ideal Qiskit Aer simulator for $n=3$ and $\tau=p=1/4$ with an experiment on the IBM Quantum System One for the observable $f(x) = \sin x$ and an initial uniform superposition state $|\Omega\rangle$ (approximating $\ket{\hat\xi_{e,n}}$). Despite the noise caused by decoherence, the evolution of probability densities (Fig.~\ref{ibmq}(b)) and expectation values (Fig.~\ref{ibmq}(c)) obtained from the NISQ device remain consistent with the Qiskit simulation (Fig.~\ref{ibmq}(a,c)). The number of shots, which is limited to 8192 on the Quantum System One, was enhanced to $2^{18}$ by aggregating results from multiple jobs.     

Unfortunately, increasing the number of qubits beyond $n=3$ led to noticeable degradation of the results on the quantum computer relative to the Qiskit simulations, despite our best efforts to manage noise and decoherence with the tools available to us. Still, to our knowledge, the $n=3$ results reported in this section constitute the first successful simulation of an observable of a classical dynamical system on a manifold by an actual NISQ device. We expect that as the coherence characteristics, error mitigation and/or circuit optimization schemes for quantum computation improve, the QECD framework presented in this paper will successfully scale to higher qubit numbers.

\section{\label{secConclusions}Summary and outlook}

We have developed a framework for approximating the evolution of observables of a classical dynamical system by a finite-dimensional quantum system implementable on an actual quantum computer. The procedure, which we refer to as quantum embedding of classical dynamics (QECD), takes the classical system as an input, and passes through intermediate classical statistical, infinite-dimensional quantum mechanical, and finite-dimensional quantum mechanical (matrix-mechanical), representations, ultimately arriving at an $n$-qubit quantum computational representation of the system. We have thus addressed the full pipeline starting from the classical dynamical system all the way to its experimental verification on a real quantum computer, the IBM Quantum System One.

For the class of dynamical systems under study (i.e., measure-preserving, ergodic dynamical systems with pure point spectra), QECD provides an exponential quantum advantage over classical computation, in the sense of being able to simulate a $2^n$-dimensional Hilbert space of classical observables using circuits of size $O(n^2)$ and depth $O(n)$. In addition, the quantum state encoding the initial classical quantum state is efficiently prepared, and predictions from the quantum computational system are extracted through projective measurement in the standard computational basis without requiring postprocessing techniques such as quantum state tomography. 

One of the mathematical underpinnings of our approach is the theory of reproducing kernel Hilbert spaces (RKHSs). RKHS theory is widely used in kernel methods for machine learning, but was employed here to construct quantum mechanical analogs of feature maps that behave consistently under classical function evaluation and quantum mechanical expectation. A further foundational ingredient is the operator-theoretic description of dynamical systems, which utilizes linear Koopman operators to characterize the action of a (nonlinear) dynamical system on observables. 

We described how QECD proceeds along two composite mappings, one taking state variables $x \mapsto \hat\rho_{x,n}$ to density operators $\hat\rho_{x,n}$ on an $n$-qubit Hilbert space, $\mathbb B_n$, and another one taking classical observables $f \mapsto \hat S_n$ to self-adjoint operators $\hat S_n$ on $\mathbb B_n$. A key aspect of the resulting quantum system is a tensor product factorization of its Hamiltonian in terms of Walsh operators, yielding quantum circuits of low size and depth. In particular, it was shown that for an ergodic dynamical system with finitely-generated pure point spectrum, this results in a circuit of size $O(n)$ and no cross-channel communication, implementing unitary Koopman evolution. The QECD framework also includes a state preparation stage of size $O(n)$, as well as a quantum Fourier transform (QFT) stage of size $O(n^2)$ to enable information retrieval through measurement in the computational basis.

The scheme exhibits three types of approximation error, all of which can be controlled, as we have shown, in appropriate asymptotic limits:
\begin{enumerate}
    \item Finite-dimensional approximation errors due to projection of the infinite-dimensional quantum system on the RKHS $\mathcal H$ to the finite-dimensional quantum computational system on $\mathbb B_n$. These errors vanish as $n\to\infty$, and the convergence is unconditional on the defining parameters of $\mathcal H$ if idealized state preparation and measurement is employed (see Sec.~\ref{secMatrixMechanical}).
    \item Bias errors due to preparation of an approximate initial quantum state and measurement of an approximate observable using efficient circuits. These errors vanish in a joint limit of decreasing RKHS parameter $\tau $ and increasing $n$ (see Secs.~\ref{secMeas}, \ref{secPreparation}, and Appendix~\ref{appQFT}).
    \item Monte Carlo errors associated with approximation of quantum mechanical expectations with a finite number of measurement shots (see Sec.~\ref{secApproximateMeas}). These errors vanish as the number of shots, $K$, increases at fixed $n$ and $\tau$.    
\end{enumerate} 

We illustrated our approach with periodic and quasiperiodic dynamical systems on the circle and 2-torus, respectively, where many aspects of the quantum embedding of classical dynamics can be directly validated against closed-form solutions. Our numerical experiments were based on simulated quantum circuits of up to $n=8$ qubits, implemented using the Qiskit framework. In addition we demonstrated the ability of our framework to deal with a classical dynamical system on a real noisy quantum computer. The results demonstrated high-fidelity simulation of the evolution of classical observables through ensemble averages of independent quantum measurements. Our approach is straightforwardly generalizable to quasiperiodic dynamics of arbitrarily large intrinsic dimension through parallel composition of quantum circuits. 

The work presented in this paper should be considered a first step, particularly given its focus on systems with pure point spectra. Applications of the procedure to mixing (chaotic) dynamical systems will invariably have to deal with the continuous spectrum of the Koopman operator, potentially generating quantum circuits of higher connectivity than for quasiperiodic dynamics. Studies in this direction are currently underway using RKHS-based spectral discretization approaches for Koopman operators \cite{DasEtAl21} (see Sec.~II~C~2 in the SM), which are able to consistently approximate, in a spectral sense, measure-preserving, ergodic dynamical flows of arbitrary spectral character (pure point spectrum, mixed spectrum, and continuous spectrum) by unitary evolution groups with pure point spectra. A possible route to generalize QECD to this class of systems is to employ the scheme of Ref.~\cite{DasEtAl21} to first approximate the Koopman group on $L^2(\mu)$ by a unitary evolution group on an RKHS with a discrete spectrum, and then apply the quantum computational techniques developed in this paper to simulate the discrete-spectrum system.

Another avenue of future research is to develop data-driven formulations of the present quantum embedding framework, using kernel methods to build orthonormal bases from dynamical trajectory data, and employ these bases to represent quantum mechanical states and observables \cite{Giannakis19} (see Sec.~II in the SM \cite{Suppl}). This line of research should lead to a systematic development of quantum machine learning algorithms that can describe classical dynamical systems on NISQ devices. This comprises not only classification and regression tasks \cite{Schuld21}, but also the development of data-driven quantum algorithms for modeling nonlinear dynamics in high-dimensional phase spaces. A longer-term goal would be to explore applications of quantum mechanical methodologies to perform simulation and forecasting of real-world systems such as climate dynamics \cite{SlawinskaEtAl19} and turbulent fluid flows \cite{GiannakisEtAl18}. 

\acknowledgements
We wish to thank Sachin Bharadwaj for helpful discussions. 
We acknowledge the use of IBM Quantum services for this work. The views expressed are those of the authors, and do not reflect the official policy or position of IBM or the IBM Quantum team. In this paper we used {\em ibmq\_ehningen}, which is one of the IBM Quantum Falcon Processors. We thank the Fraunhofer Gesellschaft (Germany) for support. D.\ Giannakis acknowledges support from the US National Science Foundation under grants 1842538 and DMS-1854383, the US Office of Naval Research under MURI grant N00014-19-1-242, and the US Department of Defense under Vannevar Bush Faculty Fellowship grant N00014-21-1-2946. The work of A.\ Ourmazd  was supported by the US Department of Energy, Office of Science, Basic Energy Sciences under award DE-SC0002164 (underlying dynamical techniques), and by the US National Science Foundation under awards STC 1231306 (underlying data analytical techniques) and DBI-2029533 (underlying analytical models). P.\ Pfeffer is supported by the Deutsche Forschungsgemeinschaft with project SCHU 1410/30-1 and by the project ``DeepTurb -- Deep Learning in and of Turbulence'' of the Carl Zeiss Foundation (Germany). J.\ Slawinska acknowledges support from the core funding of the Helsinki Institute for Information Technology (HIIT) and the Institute for Basic Sciences (IBS), Republic of Korea, under IBS-R028-D1.

\appendix

\section{\label{appQuantumRep}Quantum mechanical representation of classical observables} 

In this appendix, we state various properties and results on the representation of classical observables by quantum mechanical operators employed in the main text.

\subsection{\label{BanachStar}Banach $^*$-algebra structure of $\rkha$}

The fact that the RKHA $\rkha$ from Sec.~\ref{secRKHA} is an abelian, unital, Banach $^*$-algebra under pointwise multiplication of functions means that it has the following defining properties: 
\begin{enumerate}
    \item $\rkha $ is closed under pointwise multiplication of functions, i.e., the function $ h : X \to \mathbb C $ with $ h( x ) = f(x)g(x) $ lies in $\rkha $ whenever $ f $ and $ g $ lie in $ \rkha$. Thus, $\rkha$ is an algebra, and is clearly abelian since $fg=gf$.
    \item $\rkha$ is equipped with an antilinear involution operation $^* : \rkha \to \rkha $ given by complex conjugation of functions, i.e. $ ( f^* )(x) = f(x)^* $. Thus, $\rkha $ is also a $^*$-algebra. 
    \item There exists a constant $ C > 0 $ such that for every $ f, g \in \rkha$ the relationships
        \begin{equation}
            \label{eqBanachAlg}
            \lVert f g \rVert_{\rkha} \leq C \lVert f \rVert_{\rkha} \lVert g \rVert_{\rkha}, \quad \lVert f^* \rVert_{\rkha} = \lVert f \rVert_{\rkha},
        \end{equation}
        hold. Thus, $ \rkha $ is a Banach $^*$-algebra.
    \item The function $1_X : X \to \mathbb C$ equal everywhere to 1 lies in $ \rkha$ and satisfies $ 1_X f = f $ for all $ f\in \rkha$. Thus, finally $ \rkha$ is also unital.
\end{enumerate}

More generally, the topic of Banach function algebras on locally compact abelian groups (with respect to either pointwise multiplication or convolution), has a long history of study; e.g., \cite{Wermer54,Brandenburg75,Feichtinger79,Grochenig07,KuznetsovaMolitorBraun12}.

\subsection{\label{appInjectivity}Injectivity of the map $\tilde T$}

We verify the assertion made in Section~\ref{secQuantumRep} that the map $ \tilde T : \rkha \to B(\rkha) $ is injective on $\rkha_\text{sa}$. For that, it is enough to show that if $ \tilde T f = 0 $ for $ f \in \rkha_\text{sa}$, then $ f= 0 $. By definition of $ \tilde T $, $\tilde Tf = 0 $ implies that $ \pi f = - ( \pi f )^*$, or, equivalently 
\begin{equation}
    \label{eqSkewF}
    \langle \psi_i, f \psi_j \rangle_{\rkha} = - \langle  f \psi_i, \psi_j \rangle_{\rkha}, \quad \forall i,j \in \mathbb Z^d.
\end{equation}
Expanding $ f = \sum_{l \in \mathbb Z^d} \tilde f_l \psi_l $, and setting $ i = 0 $ in~\eqref{eqSkewF}, we get 
\begin{displaymath}
    \tilde f_j^* = - c_{j,-j} \tilde f_{-j}. 
\end{displaymath}
However, because $f$ is real, we have $ \tilde f_j^* = \tilde f_{-j}$, and since $ c_{j,-j}$ is nonzero we conclude that $ \tilde f_j = 0 $, and thus $ f=0$. 

\subsection{\label{appConsistency}Consistency of representations based on the reproducing kernel Hilbert space $\mathcal H$}

Recall the construction of the RKHS $\mathcal H$ in Sec.~\ref{secRKHS}. Even though  $\mathcal H$ is a strict subspace of the RKHA $\rkha$, the quantum feature map $\mathcal F : X \to Q(\mathcal H) $ from~\eqref{eqQ} allows us to consistently recover all predictions made for classical observables obtained via the feature map $\tilde{\mathcal F} : X \to Q(\rkha) $ of $\rkha$ in~\eqref{eqQRKHA}, as we now describe. 

First, observe that by definition of $ \varrho_x = \tilde{\mathcal F}(x)$ and $ \rho_x = \mathcal F(x)$, we have
\begin{equation}
    \label{eqRhoVarrho}
    \rho_x = \frac{\tilde\kappa}{\kappa} \bm \Pi \rho_x,
\end{equation}
where $\bm \Pi $ is the projector onto $B(\mathcal H)$, defined in~\eqref{eqPiProj}. As a result, if $ A \in B(\rkha)$ is a quantum mechanical observable whose range is included in $ \mathcal H$ (so that $A $ is well-defined as an operator on $\mathcal H$), and whose nullspace includes the orthogonal complement $ \mathcal H^\perp$ in $\rkha$, we have
\begin{equation}
    \label{eqRhoEquiv}
    \langle A \rangle_{\varrho_x} = \frac{\kappa }{\tilde\kappa}\langle A \rangle_{\rho_{x}}.
\end{equation}
Indeed, since every observable $A$ in this class satisfies $\bm \Pi A = A$, using~\eqref{eqRhoVarrho} and the cyclic property of the trace, we get 
\begin{align*}
    \langle A \rangle_{\varrho_x} & = \tr(\varrho_x A) = \tr(\varrho_x (\bm \Pi A)) \\
    &= \tr(\varrho_x \Pi A \Pi ) = \tr( \Pi \varrho_x \Pi A ) \\
    &= \frac{\kappa}{\tilde \kappa} \tr(\rho_x A) = \frac{\kappa}{\tilde \kappa} \langle A \rangle_{\rho_x},
\end{align*}
which verifies~\eqref{eqRhoEquiv}.
Thus, for all observables $A \in B(\rkha) $ satisfying
\begin{equation}
    \label{eqRanA}
    \ran A \subseteq \mathcal H, \quad \ker A \supseteq \mathcal H^\perp, 
\end{equation}
expectation values with respect to $ \varrho_x$ can be recovered from expectation values with respect to $ \varrho_x$ up to a constant scaling factor. For our purposes, this means that the quantum mechanical observables $ \bm\Pi(\pi f)  $ and $ \bm\Pi(\tilde T f) $ obtained through the projections  $ \bm \Pi \pi $ and $  \bm \Pi \tilde T $ of $ \pi $ and $ \tilde T $ from~\eqref{eqPi} and~\eqref{eqT}, respectively, satisfy~\eqref{eqRhoEquiv}.

As noted in Sec.~\ref{secRKHSRep}, in order to reach consistency between classical function evaluation and quantum mechanical expectation, analogously to~\eqref{expec1}, we introduce the modified representation maps $ \varpi : \rkha \to B(\mathcal H)$ and $ T : \rkha \to B(\mathcal H)$ in~\eqref{eqVarpi} to account for scaling errors. We reproduce the definitions here for convenience:
\begin{displaymath}
    \varpi = \bm \Pi \pi  L^{-1}, \quad T = \bm \Pi \tilde T  L^{-1}.
\end{displaymath}
We define $ L : \rkha \to \rkha $ as the self-adjoint, diagonal operator satisfying the eigenvalue equation
\begin{equation}
    \label{eqLOp}
    L \psi_l = \frac{\eta_l}{\kappa} \psi_l \quad\mbox{with}\quad \eta_l = \sum_{j \in J'_l}e^{-\tau \lvert j \rvert^p},
\end{equation}
where $J'_l$ is the index set defined as 
\begin{displaymath}
    J'_l = \{ j \in J : j + l \in J \}.  
\end{displaymath}
Note that by construction of $J'_l$, the numbers $\eta_l$ are strictly positive, and have the maximum value $\eta_0 = \kappa$. Moreover, the $\eta_l$ attain their smallest value, $e^{-\tau}$, when $\lvert l \rvert = 1$, i.e., the multi-index $ l = (l_1,\ldots, l_d) \in \mathbb Z^d$ has exactly one entry $l_i$ equal to $\pm 1$ and all other entries equal to 0. As a result, $L$ is an invertible operator with bounded inverse, satisfying
\begin{displaymath}
    L^{-1} \psi_l = \frac{\kappa}{\eta_l} \psi_l.
\end{displaymath}
Since $\kappa/\eta_l \geq 1$, we deduce that $L^{-1}$ acts by inflating the expansion coefficients of elements of $\rkha$ in the $ \{ \psi_j \} $ basis. 

We then have: 

\begin{prop}
    \label{propExpec}    
    The following classical--quantum consistency relation holds for every $f \in \rkha $ and  $x \in X$:
    \begin{displaymath}
        f(x) = \langle \varpi f\rangle_{\rho_{x}}. 
    \end{displaymath}
    Moreover, if $f$ is a real-valued observable in $ \rkha_{\text{sa}}$, we have 
    \begin{displaymath}
        f(x) = \langle T f\rangle_{\rho_x}
    \end{displaymath}
\end{prop}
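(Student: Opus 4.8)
The plan is to reduce both claims to a single matrix-element identity for the orthonormal basis functions $\psi_l$ and then bootstrap to general $f$ by linearity and continuity. First I would observe that $\varpi = \bm\Pi\circ\pi\circ L^{-1}$ is a composition of bounded maps: $L^{-1}$ is bounded because the $\eta_l$ in~\eqref{eqLOp} are bounded below (by $e^{-\tau}$), $\pi$ is bounded by the Banach-algebra inequality~\eqref{eqBanachAlg}, and $\bm\Pi$ is a contraction; moreover $A\mapsto\tr(\rho_x A)$ is a bounded functional on $B(\mathcal H)$. Hence $f\mapsto\langle\varpi f\rangle_{\rho_x}$ is a bounded linear functional on $\rkha$, as is evaluation $f\mapsto f(x)$ by the reproducing property~\eqref{eqReproducing}. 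Since $\{\psi_l\}$ is total in $\rkha$, it suffices to check $\langle\varpi\psi_l\rangle_{\rho_x}=\psi_l(x)$ for every $l\in\mathbb Z^d$; the general statement then follows from the expansion $f=\sum_l\tilde f_l\psi_l$.

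The core of the argument is the computation for $\psi_l$. Using $L^{-1}\psi_l=(\kappa/\eta_l)\psi_l$ we have $\varpi\psi_l=(\kappa/\eta_l)\,\bm\Pi(A_{\psi_l})$. From the matrix-element formula~\eqref{eqMIJ}, $(A_{\psi_l})_{ji}=c_{i,j-i}\,\delta_{l,j-i}$, so on the $\{\psi_j:j\in J\}$ basis of $\mathcal H$ the operator $\bm\Pi(A_{\psi_l})$ has the single off-diagonal $(\bm\Pi A_{\psi_l})_{ji}=c_{i,l}$ whenever $i\in J$ and $j=i+l\in J$, i.e.\ whenever $i\in J'_l$. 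Combining this with $(\rho_x)_{ij}=\psi^*_i(x)\psi_j(x)/\kappa$ (the analogue for $\rho_x=\mathcal F(x)$ of~\eqref{eqRhoIJ}) and the elementary identity $\psi^*_i(x)\,\psi_{i+l}(x)\,c_{i,l}=e^{-\tau\lvert i\rvert_p}\,\psi_l(x)$ — which follows from $\lvert\phi_j(x)\rvert=1$ together with the definition of the structure constants in~\eqref{eqProd} — one obtains
\begin{displaymath}
    \langle \varpi \psi_l \rangle_{\rho_x}
    = \frac{\kappa}{\eta_l} \sum_{i \in J'_l} \frac{\psi^*_i(x)\, \psi_{i+l}(x)\, c_{i,l}}{\kappa}
    = \frac{\psi_l(x)}{\eta_l} \sum_{i \in J'_l} e^{-\tau \lvert i \rvert_p}
    = \psi_l(x),
\end{displaymath}
the last equality being exactly the defining relation $\eta_l=\sum_{i\in J'_l}e^{-\tau\lvert i\rvert_p}$ from~\eqref{eqLOp}. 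This makes transparent the role of $L$: projecting onto $\mathcal H$ (equivalently, restricting the summation index to $j\in J$) replaces $\kappa=\sum_{j\in J}e^{-\tau\lvert j\rvert_p}$ by the smaller number $\eta_l$, and $L^{-1}$ is built precisely to undo that contraction.

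For the self-adjoint statement I would use $T=\bm\Pi\circ\tilde T\circ L^{-1}$ and the fact that $\bm\Pi(\tilde Tg)=\tfrac12\bigl(\bm\Pi(\pi g)+(\bm\Pi(\pi g))^*\bigr)$, which is immediate from $\tilde T g=\tfrac12(\pi g+(\pi g)^*)$ and $\Pi(\pi g)^*\Pi=(\Pi(\pi g)\Pi)^*$. Since $\rho_x$ is self-adjoint, $\tr\!\bigl(\rho_x(\bm\Pi(\pi g))^*\bigr)=\overline{\tr(\rho_x\bm\Pi(\pi g))}$, so with $g=L^{-1}f$ one gets $\langle Tf\rangle_{\rho_x}=\Real\langle\varpi f\rangle_{\rho_x}=\Real f(x)$, and this equals $f(x)$ because $f\in\rkha_{\text{sa}}$ is real-valued.

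The main obstacle is not conceptual but combinatorial: getting Step 2 exactly right requires careful bookkeeping of the index set $J'_l$ and of the structure constants, and in particular recognizing that the projection $\bm\Pi$ is what turns $\kappa$ into $\eta_l$ — this is the whole reason $L$ is chosen with eigenvalues $\eta_l/\kappa$. A secondary technical point is justifying the passage from basis functions to arbitrary $f\in\rkha$ through continuity of $\varpi$, which rests on the boundedness of $L^{-1}$ established alongside~\eqref{eqLOp}.
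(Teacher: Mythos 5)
Your proposal is correct and follows essentially the same route as the paper's proof in Appendix~\ref{appConsistency}: the whole argument reduces to the computation $\tr\bigl(\rho_x\,\bm\Pi(\pi\psi_l)\bigr)=\frac{1}{\kappa}\sum_{j\in J'_l}e^{-\tau\lvert j\rvert^p}\,\psi_l(x)=\frac{\eta_l}{\kappa}\psi_l(x)$, which is exactly the contraction that $L^{-1}$ is designed to undo, followed by extension from the $\psi_l$ to general $f$. The only differences are cosmetic and slightly to your credit: you apply $L^{-1}$ up front and evaluate the trace directly in the $\{\psi_j\}_{j\in J}$ basis of $\mathcal H$ (the paper instead routes through the scaling relation~\eqref{eqRhoEquiv} between $\rho_x$ and $\varrho_x$ and the RKHA-level identity~\eqref{expec1}), you make the boundedness/density argument explicit where the paper says only ``by linearity,'' and your handling of the self-adjoint case via $\langle Tf\rangle_{\rho_x}=\Real\langle\varpi f\rangle_{\rho_x}$ is a clean shortcut.
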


\begin{proof}    
Suppose that $g = \psi_l $ for some $ l \in \mathbb Z^d$, and let $ A = \bm \Pi A_g$, where $A_g = \pi \psi_l \in B(\rkha)$ is the multiplication operator by $\psi_l$. Then, $A$ satisfies~\eqref{eqRanA}, and using~\eqref{eqRhoEquiv}, we get
\begin{align}
    \nonumber \langle A \rangle_{\rho_x} &= \frac{\tilde \kappa }{\kappa} \langle A \rangle_{\varrho_x} = \frac{\tilde \kappa }{\kappa}\tr( \varrho_x \Pi A_g \Pi ) \\
    \nonumber &= \frac{\tilde \kappa }{\kappa}\sum_{j \in \mathbb Z^d} \langle \psi_j, \varrho_x \Pi A_g \Pi \psi_j \rangle_{\rkha} \\
    \nonumber &= \frac{\tilde \kappa }{\kappa}\sum_{j \in J} \langle \psi_j, \varrho_x \Pi A_g \psi_j \rangle_{\rkha} \\
    \nonumber &= \frac{\tilde \kappa }{\kappa}\sum_{j \in J} \langle \psi_j, \varrho_x \Pi (\psi_l \psi_j) \rangle_{\rkha}  \\
    \nonumber &= \frac{1}{\kappa} \sum_{j\in J} \langle \tilde k_x, \Pi(\psi_l\psi_j) \rangle_{\rkha} \langle \psi_j, \tilde k_x \rangle_{\rkha} \\
    \nonumber &= \frac{1}{\kappa} \sum_{j\in J'_l} \langle k_x, \psi_l \psi_j \rangle_{\rkha} \langle k_x, \psi_j \rangle_{\rkha}^*\\
    \nonumber &= \frac{1}{\kappa} \sum_{j \in J'_l} \psi_j^*(x) \psi_j(x) \psi_l(x) \\
    \nonumber &= \frac{1}{\kappa} \sum_{j\in J'_l} e^{-\tau \lvert j \rvert^p} \lvert \phi_j(x)\rvert^2 \psi_l(x) \\
    \nonumber &= \frac{1}{\kappa} \sum_{j\in J'_l} e^{-\tau \lvert j \rvert^p}  \psi_l(x) \\
    \label{eqRhoEquiv2}&= \frac{\eta_l}{\kappa} \psi_l(x) = L \psi_l(x).
\end{align}
Meanwhile, an application of~\eqref{expec1} for $ f = L \psi_l $ gives 
\begin{equation}
    \label{eqRhoEquiv3} L \psi_l(x) = \langle \pi(L \psi_l ) \rangle_{\varrho_x},
\end{equation}
and combining~\eqref{eqRhoEquiv2} and~\eqref{eqRhoEquiv3} we arrive at
\begin{equation}
    \label{eqRhoEquiv4}
    \langle \bm \Pi (\pi g) \rangle_{\rho_x} = \langle \pi( L g ) \rangle_{\varrho_x},
\end{equation}
where $ g= \psi_l $. Since the basis vector $\psi_l $ was arbitrary, it follows by linearity that~\eqref{eqRhoEquiv4} holds for every $ g \in \rkha$. Setting, in particular, $g = L^{-1} f $ yields  
\begin{displaymath}
    \langle \bm \Pi(\pi(L^{-1} f)) \rangle_{\rho_x} = \langle \pi f \rangle_{\varrho_x} \iff \langle \varpi f \rangle_{\rho_x} = f(x),     
\end{displaymath}
which confirms the first claim of the proposition. The second claim, $f(x) = \langle T f \rangle_{\rho_x}$, follows similarly under the additional assumption that $f^* = f$.
\end{proof}

\subsection{\label{appRKHSDyn}Dynamics on the reproducing kernel Hilbert space $\mathcal H$}

By construction, the RKHS $\mathcal H$ is a Koopman-invariant subspace of $\rkha$, i.e., $ U^t \mathcal H = \mathcal H$ for all $t \in \mathbb R$. As a result, we can define a generator $V : D(V) \to \mathcal H$ with $D(V) \subset \mathcal H$, a corresponding Koopman operator $U^t : \mathcal H \to \mathcal H$, and corresponding evolution maps on observables, $\mathcal U^t : B(\mathcal H) \to B(\mathcal H)$, and states, $\Psi^t : Q(\mathcal H) \to Q(\mathcal H)$ analogously to the corresponding operators associated with $\rkha$. These operators satisfy the compatibility relations (cf.~\eqref{eqUPi} and~\eqref{eqPsiF}) 
\begin{displaymath}
    \mathcal U^t(\varpi f) = \varpi(U^t f), \quad \Psi^t(\mathcal F(x)) = \mathcal F(\Phi^t(x)) 
\end{displaymath}
for every $ f \in \rkha$, $ x \in X$, and $ t \in \mathbb R$, where $\varpi : \rkha \to B(\mathcal H) $ is the map on observables in~\eqref{eqVarpi} and $\mathcal F: X \to Q(\mathcal H)$ the quantum feature map in~\eqref{eqQ}. In addition, using the consistency relations in Proposition~\ref{propExpec} and~\eqref{eqUtSF}, we get 
\begin{equation}
    \label{eqConsistency2}
    \begin{aligned}
        U^t f(x) &= \langle \mathcal U^t(\varpi f) \rangle_{\rho_x} = \langle \varpi f \rangle_{\Psi^t(\rho_x)}, \\
        U^t f(x) &= \langle \mathcal U^t(T f) \rangle_{\rho_x} = \langle T f \rangle_{\Psi^t(\rho_x)},
    \end{aligned}
\end{equation}
where $T: \mathcal H \to  B(\mathcal H)$ was defined in~\eqref{eqVarpi}, and the equalities in the second line hold for real-valued functions in $\mathcal H$. It follows from~\eqref{eqConsistency2} that we can consistently represent the evolution of classical observables in $\rkha $ (which is a dense subspace of $C(X)$) by quantum mechanical evolution of observables in $B(\mathcal H)$, even though $ \mathcal H$ is a non-dense subspace of $\rkha$.

\section{\label{appWalsh}Walsh operator representation of the Koopman generator}

Here, we lay out the calculation of the discrete Walsh transform of the spectral function $h \in L^2_n([0,1])$ of the Hamiltonian $H_n$ induced by the Koopman generator of a quasiperiodic dynamical system, defined in~\eqref{eqHFunc}. In particular, we show that $h$  is expressible as a linear combination of Rademacher functions $R_l=w_{2^l}$ (without contributions from more general Walsh functions), leading to the factorization of $H_n$ in~\eqref{eqHDecompQuasiperiodic}. 

First, by \eqref{eqOmega} and~\eqref{eqHFunc}, for any $ m \in \{ 0, \ldots, 2^n -1 \} $ we have    
\begin{equation}
    \label{eqFOmega}
    h\left(\frac{m}{2^n}\right) = \omega_j=\alpha_1 j_1 + \alpha_2 j_2 + \ldots + a_dj_d,
\end{equation}
where $j_1, \ldots, j_d $ are integers in the set $J_1$, defined uniquely by the property that the concatenated binary strings $\eta(j_1), \ldots, \eta(j_d)$ give the dyadic decomposition of $ m / 2^n$,
\begin{equation}
    \label{eqRad2a}
    \gamma\left( \frac{m}{2^n} \right) = (\eta(j_1),\ldots,\eta(j_d)).
\end{equation}
We can express the left-hand side of~\eqref{eqRad2a} in terms of Rademacher functions using~\eqref{eqRad2}, viz. 
\begin{align}
    \nonumber \gamma\left( \frac{m}{2^n} \right) &=  \left[ \gamma_1\left( \frac{m}{2^{n}}\right), \ldots,  \gamma_n\left(\frac{m}{2^{n}}\right)\right] \\
    \label{eqRad2b}&= \frac{1}{2} - \frac{1}{2} \left[R_0\left(\frac{m}{2^n}\right), \ldots, R_{n-1}\left(\frac{m}{2^n}\right)\right].
\end{align}
Meanwhile, setting $ m_i = o( j_i ) $ and using again~\eqref{eqRad2}, the right-hand side of~\eqref{eqRad2a} becomes
\begin{multline}
    \label{eqRad2c}
    [\eta(j_1), \eta(j_2), \ldots, \eta(j_d)] \\
    \begin{aligned}
        &= \left[ \gamma\left( \frac{m_1}{2^{n/d}}\right), \ldots,  \gamma\left(\frac{m_d}{2^{n/d}}\right)\right] \\
        &= \frac{1}{2} - \frac{1}{2}\left[     R_0\left(\frac{m_1}{2^{n/d}} \right), \ldots, R_{n/d-1}\left(\frac{m_1}{2^{n/d}}\right) \right. \\
        &\qquad\quad\;\;\;\;\;\; R_0\left(\frac{m_2}{2^{n/d}} \right), \ldots, R_{n/d-1}\left(\frac{m_2}{2^{n/d}}\right), \\
        &\qquad\qquad\;\;\;\;\;\;\quad \quad \quad\;\;\ldots\ldots\\
        &\qquad\quad\;\;\;\;\;\;  \left. R_0\left(\frac{m_d}{2^{n/d}} \right), \ldots, R_{n/d-1}\left(\frac{m_d}{2^{n/d}}\right) \right]. 
    \end{aligned}
\end{multline}
Substituting for $\gamma(m/2^n)$ and $(\eta(j_1),\ldots,\eta(j_d))$ in~\eqref{eqRad2a} using~\eqref{eqRad2b} and~\eqref{eqRad2c}, respectively, we deduce that for each $ i \in \{ 1, \ldots, d \} $ and $ l~\in~\{ 0, \ldots, n - 1 \} $ 
\begin{equation}
    \label{eqRad3}
    R_l\left(\frac{m}{2^n}\right) = R_{l-(i-1)n/d}\left(\frac{m_i}{2^{n/d}}\right), 
\end{equation}
for all $ m \in \{ 0, \ldots, 2^n - 1 \} $. 

Observe now that for $ j_i \in J_1$, 
\begin{align*}
   j_i &= 
   \begin{cases}
       m_i - 2^{n/d - 1} &:\quad 0 \leq m_i \le 2^{n/d - 1} -1 \\
       m_i - 2^{n/d - 1} + 1 &:\quad 2^{n/d -1} \leq m_i \leq 2^{n/d} - 1
   \end{cases}\\
   & = \;\;\;\sum_{l=0}^{n/d-1} \dfrac{1 - R_l\left(\dfrac{m_i}{2^{n/d}}\right)}{2^{l+2-n/d}}
       + \dfrac{ 1 - R_0\left(\dfrac{m_i}{2^{n/d }}\right) }{ 2 } - 2^{n/d-1} \\
   &= - \sum_{l=0}^{n/d-1} \dfrac{R_l\left(\dfrac{m_i}{2^{n/d}}\right)}{2^{l+2-n/d}}
       - \dfrac{R_0\left(\dfrac{m_i}{2^{n/d}}\right) }{2} \\
   &= \;\;\;-\sum_{l=0}^{n/d-1} \dfrac{R_{l+(i-1)n/d}\left(\dfrac{m}{2^n}\right)}{2^{l +2-n/d}}
     -  \dfrac{R_{(i-1)n/d}\left(\dfrac{m}{2^n}\right)}{2},          
\end{align*}
where we used~\eqref{eqRad3} to obtain the last line. Substituting the above in~\eqref{eqFOmega}, we obtain
\begin{align*}
    h\left(\frac{m}{2^n}\right) &= - \sum_{i=1}^d \alpha_i \left[\sum_{l=0}^{n/d-1} R_{l+(i-1)n/d}\left(\frac{m}{2^n}\right) 2^{-l -2+n/d}\right] \\
    &\quad - \sum_{i=1}^d \frac{\alpha_i}{2} R_{(i-1)n/d}\left(\frac{m}{2^n}\right) \\
    &= - \sum_{i=1}^d \frac{\alpha_i}{2} \sum_{l=0}^{n/d-1} \left( 2^{-l-1+n/d} + \delta_{l0} \right) \\
    & \quad \times R_{l+(i-1)n/d}\left(\frac{m}{2^n}\right). 
\end{align*}
We therefore conclude that for a quasiperiodic system, the spectral function of the generator $ h $ is expressible as a linear combination of Rademacher functions. Explicitly, we have 
\begin{equation}
    \label{eqHWalsh}
    h = \sum_{i=1}^d \sum_{l=0}^{n/d-1} \hat h_{2^{l+(i-1)n/d}} R_{l+(i-1)n/d},
\end{equation}
with
\begin{displaymath}
    \hat h_{2^{l+(i-1)n/d}} = - \alpha_i( 2^{-l-1+n/d} + \delta_{l0} ) / 2,
\end{displaymath}
which is consistent with the factorization of the Hamiltonian $H_n$ in~\eqref{eqHDecompQuasiperiodic}.

\section{\label{appQFT}Approximate diagonalization of observables using the quantum Fourier transform}

In this appendix, we perform an analysis of approximate diagonalization of quantum mechanical observables induced at the quantum computational level from classical observables through the use of the QFT. In Appendices~\ref{appDiagD1} and~\ref{appDiagDD}, we describe how such quantum mechanical observables become increasingly diagonal as the number of qubits $n$ increases, and provide explicit bounds verifying the approximate eigenvalue equation~\eqref{eqApproxDiag}. In Appendix~\ref{appExpec}, we show that quantum mechanical expectation values of the approximately diagonalized observables converge to the true expectation values in a limit of infinite qubit number $n$ and vanishing RKHA parameter $\tau$. Appendices~\ref{appProof} and~\ref{appMeasBound} contain proofs of two auxiliary lemmas, Lemma~\ref{lemDiag} and~\ref{lemMeasBound}, which are stated in Appendices~\ref{appDiagD1} and~\ref{appExpec}, respectively.  

\subsection{\label{appDiagD1}Approximate diagonalization in dimension $d=1$}

We begin with the one-dimensional case, $d=1$, where $X= S^1$. In this case, the index set $J_n$ in~\eqref{eqJ} becomes $J_n= J_{n,1} = \{ -N/2, \ldots, -1, 1, \ldots, N/2 \}$ with $N=2^n$, and the map $\mathfrak F_{n,d}$ in~\eqref{eqQFTProd} reduces to the standard $n$-qubit QFT, $\mathfrak F_{n,d} \equiv \mathfrak F_n$. We also recall that $p \in (0,1)$ and $\tau > 0 $ are the parameters associated with the RKHA $\rkha$. 

\subsubsection{\label{appDiagRegRep} Diagonalization using the regular representation $\pi$}
Fixing $m \in \mathbb Z $, consider the regular representer (multiplication map) $ \pi \psi_m \in B(\rkha)$ of basis vector $\psi_m$,  the associated quantum computational observable 
\begin{displaymath}
    \hat A_{m,n} := (\mathcal W_n \circ \bm \Pi_n \circ \bm \Pi \circ \pi) \psi_m \in B(\mathbb B_n), 
\end{displaymath}
and the Fourier-transformed observable
\begin{equation}
    \label{eqTildeA}
    \tilde A_{m,n} = \bm{\mathfrak F}_n \hat A_{m,n} \equiv  \mathfrak F_n^* \hat A_{m,n} \mathfrak F_n.
\end{equation}
First, note that by definition of the projection $\bm \Pi_n$, $\hat A_{m,n}$ is the zero operator (and thus trivially diagonal) whenever $\lvert m \rvert > N/2$. This is a manifestation of an effective ``Nyquist limit'' on the wavenumber $m$ of classical observables that can be resolved by the finite-dimensional system on $\mathbb B_n$. Here, we are interested in characterizing the behavior of $\hat A_{m,n}$ in the ``well-resolved'' regime, $ \lvert m \rvert \ll N/2$. The following lemma provides a bound showing that (a) such well-resolved observables  $\hat A_{m,n}$ are approximately diagonal in the quantum computational basis $\{ \ket 0, \ldots, \ket{N -1} \}$; and (b) the diagonal part approximately recovers the values of $\psi_m$ at particular points on the circle $S^1$.

\begin{lem}
    With the notation of~\eqref{eqTildeA}, the observable $\tilde A_{m,n} $ satisfies
    \begin{displaymath}
        (\tilde A_{m,n})_{kl} := \bra k \tilde A_{m,n} \ket l  = \psi_m(\theta_l) \delta_{kl} + \varepsilon_{mnkl},
    \end{displaymath}
    where $\theta_l = 2 \pi l / N $, and $\varepsilon_{mnkl}$ is a residual obeying the bound
    \begin{displaymath}
        \lvert \varepsilon_{mnkl} \rvert \leq \frac{C \tau \lvert m \rvert}{N^{1-p}}  + \frac{(2\lvert m \rvert + 1) e^{-\tau \lvert m \rvert^p}}{N},  
    \end{displaymath}
    for a constant $C$ independent of $k$, $l$, $m$, $n$, $p$, and $\tau$.
    \label{lemDiag}
\end{lem}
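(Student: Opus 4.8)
The plan is to compute the matrix elements $(\tilde A_{m,n})_{kl}$ explicitly by tracking how the basis function $\psi_m$ is represented at each level of the pipeline — first as a multiplication operator on $\rkha$, then projected to $\mathcal H_n$, then mapped to $\mathbb B_n$ via $W_n$, and finally conjugated by the QFT $\mathfrak F_n$ — and to compare the result against the ``ideal'' answer one would get by evaluating $\psi_m$ on the uniform grid $\theta_l = 2\pi l/N$. The starting point is the formula~\eqref{eqMIJ}: in the $\{\psi_j\}$ basis, $\pi\psi_m$ has matrix elements $c_{j,m}\delta_{i,j+m}$, i.e.\ it is the shift $j\mapsto j+m$ weighted by the structure constant $c_{j,m} = \exp(-\tau(\lvert j\rvert_p + \lvert m\rvert_p - \lvert j+m\rvert_p)/2)$. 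So $\bm\Pi_n\bm\Pi\pi\psi_m$ acts on $\mathcal H_n$ by $\psi_j \mapsto c_{j,m}\psi_{j+m}$ whenever both $j$ and $j+m$ lie in the truncated index set $J_{n,1}$, and sends $\psi_j$ to $0$ otherwise. Under $W_n$ (and the enumeration $o:J_{n,1}\to\{0,\dots,N-1\}$ from~\eqref{eqWn}), this becomes an operator on $\mathbb B_n$ that is ``nearly'' the cyclic shift by $m$ in the computational basis, up to (i) the weights $c_{j,m}$ and (ii) boundary terms where $j\in J_{n,1}$ but $j+m\notin J_{n,1}$, of which there are at most $\lvert m\rvert$.

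**The QFT computation.** The key point is that a pure cyclic shift by $m$ on $\mathbb C^N$ is \emph{exactly} diagonalized by the DFT, with eigenvalues $e^{\pm 2\pi i m l/N}$ — and up to the reindexing conventions these are precisely the Fourier characters $\phi_m(\theta_l)$. So I would write $\hat A_{m,n} = \hat D_{m,n} + \hat R_{m,n}$, where $\hat D_{m,n}$ is the ``clean'' weighted shift (treated as a genuine cyclic shift on all of $\mathbb B_n$) and $\hat R_{m,n}$ collects the at most $2\lvert m\rvert + 1$ boundary/wraparound defect entries. Conjugating by $\mathfrak F_n$: the defect term contributes $\mathfrak F_n^*\hat R_{m,n}\mathfrak F_n$, whose entries are bounded in modulus by $\frac{1}{N}\sum_{\text{defects}} \lvert c_{j,m}\rvert \le \frac{(2\lvert m\rvert+1)e^{-\tau\lvert m\rvert^p}}{N}$ — this is exactly the second term in the claimed bound, once one notes $\lvert c_{j,m}\rvert \le e^{-\tau(\lvert j\rvert_p + \lvert m\rvert_p - \lvert j+m\rvert_p)/2} \le e^{-\tau(\lvert m\rvert^p - \text{something small})}$; more carefully, by the elementary inequality $\lvert j+m\rvert^p \ge \lvert j\rvert^p - \lvert m\rvert^p$ for $p\in(0,1)$ (subadditivity of $t\mapsto t^p$), one gets $\lvert c_{j,m}\rvert \le e^{-\tau\lvert m\rvert^p}$. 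The clean term $\mathfrak F_n^*\hat D_{m,n}\mathfrak F_n$ would be exactly $\psi_m(\theta_l)\delta_{kl}$ if all the weights $c_{j,m}$ were equal to $e^{-\tau\lvert m\rvert^p/2}$ (the value that makes $c_{j,m}\psi_{j+m}$ literally $\psi_m\psi_j$ scaled correctly) — wait, more precisely, $\psi_m\psi_j = c_{m,j}\psi_{m+j}$, so the discrepancy between the actual operator and the ideal one is governed by how much $c_{j,m}$ varies with $j$.

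**The variation estimate — the main obstacle.** The hard part will be controlling $\mathfrak F_n^*(\hat D_{m,n} - \hat D^{\text{ideal}}_{m,n})\mathfrak F_n$, i.e.\ showing the off-ideal part of the weighted shift produces only an $O(\tau\lvert m\rvert/N^{1-p})$ error after the QFT. The weights differ from a constant by $c_{j,m} - e^{-\tau\lvert m\rvert^p/2}$-type quantities, and the relevant bound is on how much the exponent $\lvert j\rvert^p + \lvert m\rvert^p - \lvert j+m\rvert^p$ deviates from $2\lvert m\rvert^p$ (or whatever the ideal constant is) as $j$ ranges over the well-resolved part of $J_{n,1}$. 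For $\lvert j\rvert$ large compared to $\lvert m\rvert$, the mean value theorem gives $\bigl\lvert \lvert j\rvert^p - \lvert j+m\rvert^p\bigr\rvert \le p\lvert m\rvert \lvert j\rvert^{p-1} \le p\lvert m\rvert (N/2)^{p-1}$ roughly (using $\lvert j\rvert\gtrsim$ something — but this needs care near $j\approx 0$ where the power $\lvert j\rvert^{p-1}$ blows up, so one must separately handle the finitely many small-$\lvert j\rvert$ terms, folding them into the second error term or a constant). Then $\lvert e^{-\tau(\cdots)/2} - e^{-\tau\lvert m\rvert^p/2}\rvert \le \frac{\tau}{2}\cdot\frac{p\lvert m\rvert}{(N/2)^{1-p}}$ by the Lipschitz bound $\lvert e^{-a} - e^{-b}\rvert\le\lvert a-b\rvert$; since the QFT is unitary, a diagonal-in-$\{\psi_j\}$ perturbation of this size transfers to an entrywise bound of the same order on $\tilde A_{m,n}$, giving the first term $C\tau\lvert m\rvert/N^{1-p}$. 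So I would (1) decompose $\hat A_{m,n}$ into ideal-shift plus weight-variation plus boundary-defect; (2) diagonalize the ideal shift exactly via the DFT and identify the diagonal with $\psi_m(\theta_l)$; (3) bound the weight-variation term using the $p$-power mean value estimate and unitarity of $\mathfrak F_n$; (4) bound the boundary-defect term by counting its at most $2\lvert m\rvert+1$ nonzero entries, each of modulus $\le e^{-\tau\lvert m\rvert^p}$, and dividing by $N$ for the QFT normalization; then (5) collect the two error contributions. The delicate bookkeeping is entirely in step (3): getting the power-$p$ subadditivity/Lipschitz estimates clean, and quarantining the small-$\lvert j\rvert$ terms (where $\lvert j\rvert^{p-1}$ is $O(1)$ rather than small) so they land in the $O(1/N)$ bucket rather than spoiling the $O(\tau\lvert m\rvert/N^{1-p})$ rate.
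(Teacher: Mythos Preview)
Your overall strategy---decompose $\hat A_{m,n}$ into a constant-weight shift (diagonalized exactly by the DFT), a weight-variation piece, and a boundary-defect piece---is precisely what the paper does. The boundary accounting in step~(4) is fine. But step~(3) as written does not go through, and the issue is not just bookkeeping.

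You claim a \emph{uniform}-in-$j$ bound
\[
\bigl\lvert c_{j,m} - e^{-\tau\lvert m\rvert^p/2}\bigr\rvert \;\le\; \tfrac{\tau}{2}\,p\lvert m\rvert\,(N/2)^{p-1},
\]
coming from $\lvert j\rvert^{p-1}\le (N/2)^{p-1}$. Since $p-1<0$ and $\lvert j\rvert\le N/2$, this inequality is \emph{backward}: $\lvert j\rvert^{p-1}\ge (N/2)^{p-1}$ for every $j\in J_n$, so no uniform bound of the claimed size exists. Quarantining ``small-$\lvert j\rvert$'' terms does not help, because the failure is not localized near $j=0$---it occurs for every $j$. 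Separately, the appeal to ``QFT unitarity'' is the wrong mechanism: unitary conjugation preserves operator norms, not entrywise bounds, and the perturbation here is a weighted shift, not a diagonal.

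What actually delivers the $N^{-(1-p)}$ rate is an \emph{averaging} argument. After QFT conjugation the $(k,l)$ entry of the weight-variation piece is literally
\[
\frac{1}{N}\sum_{j\in J_n} e^{2\pi i((k-l)o(j)+km)/N}\bigl(c_{j,m}-e^{-\tau\lvert m\rvert^p/2}\bigr),
\]
so the triangle inequality gives $\frac{1}{N}\sum_j\lvert c_{j,m}-e^{-\tau\lvert m\rvert^p/2}\rvert$. Your mean-value estimate $\lvert c_{j,m}-e^{-\tau\lvert m\rvert^p/2}\rvert\lesssim \tau\lvert m\rvert\,\lvert j\rvert^{p-1}$ is correct termwise; the missing step is to recognize $\frac{1}{N}\sum_{j=1}^{N/2}j^{p-1}$ as a Riemann sum for $\int_0^1 u^{p-1}\,du=1/p$ scaled by $(N/2)^{p-1}$, which yields the bound $C\tau\lvert m\rvert/N^{1-p}$. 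This is exactly the route the paper takes (see their $\varepsilon_{\pm}$ estimates). Once you replace the uniform bound by this summation argument, your proof becomes complete and coincides with the paper's.
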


A proof of Lemma~\ref{lemDiag} will be given in Appendix~\ref{appProof}. Using this basic result, we can derive error estimates for more general quantum mechanical observables than those induced by the individual basis functions $\psi_m$.

First, note that the terms $(2 \lvert m \rvert + 1) e^{-\tau \lvert m \rvert^p}$, $m \in \mathbb Z$, are bounded by a constant that depends on $p$ and $\tau$ (and diverges as either of these parameters tends to 0). Moreover, since $p>0$, $1/N$ is bounded by a constant times $1/N^{1-p}$. Thus, for every  $ p \in (0,1)$ and $\tau >0 $ there exists a constant $C_{p,\tau}$  such that for all $ m \in \mathbb Z$,
\begin{displaymath}
    \frac{(2\lvert m \rvert + 1) e^{-\tau \lvert m \rvert^p}}{N} \leq  \frac{C_{p,\tau}}{N^{1-p}}.   
\end{displaymath}
This means that we can simplify the estimate for $\lvert \varepsilon_{mnkl} \rvert$ in Lemma~\ref{lemDiag} to (the less precise) bound
\begin{equation}
    \lvert \varepsilon_{mnkl}\rvert \leq  \frac{C_{p,\tau} + C \tau \lvert m \rvert }{N^{1-p}}.
    \label{eqEpsEst}
\end{equation}
Using~\eqref{eqEpsEst}, we estimate the square norm of the residual 
\begin{displaymath}
    \ket{r_{mnl}} := \tilde A_{m,n} \ket l - \psi_m(\theta_l) \ket l 
\end{displaymath}
as 
\begin{align*}
    \lVert r_{mnl} \rVert^2_{\mathbb B_n} &= \lVert (\tilde A_{m,n} - \psi_m(\theta_l) \ket l) \rVert^2_{\mathbb B_n} \\
    &=  \sum_{k=0}^{N-1} \lvert \bra k \tilde A_{m,n} - \psi_m(\theta_l) I \ket l \rvert^2 \\
    &= \sum_{k=0}^{N-1} \lvert \varepsilon_{mnkl} \rvert^2 \\
    & \leq \sum_{k=0}^{N-1}  \frac{(C_{p,\tau} + C \tau \lvert m \rvert)^2 }{N^{2(1-p)}}\\
    & =  \frac{(C_{p,\tau} + C \tau \lvert m \rvert)^2 }{N^{1-2p}},
\end{align*}
giving
\begin{equation}
    \label{eqResid}
    \lVert r_{mnl} \rVert_{\mathbb B_n} \leq \frac{C_{p,\tau} + C \tau \lvert m \rvert}{N^{1/2-p}}.
\end{equation}
Thus, so long as $ p < 1/2$, the norm of the residual converges to zero as $n\to \infty$, uniformly with respect to $l \in \mathbb N_0 $. 

We next generalize to \emph{bandlimited} observables, i.e., observables $ f^{(M)} : X \to \mathbb C  $ for which there exists $ M \in \mathbb N$ such that $ f^{(M)} = \sum_{m=-M}^M \hat f_m \phi_m$, where the $\phi_m$ are the Fourier functions on $X$, and the $ \hat f_m $ are complex expansion coefficients. We denote the vector space of such bandlimited observables on $X$ by $\mathfrak B$. Note that $\mathfrak B$ is a dense subalgebra of $C(X)$, and is also a dense subalgebra of $\rkha$ for any $ \tau > 0 $ and $ p \in (0,1)$ (in the respective norms). In particular, viewed as an element of $\rkha$, $f^{(M)} = \sum_{m=-M}^m \hat f_m \phi_m$ can be equivalently expressed as $ f = \sum_{m=-M}^M \tilde f_m \psi_m$, where $ \tilde f_m = e^{\tau \lvert m \rvert^p/2} \hat f_m$. 

By linearity, every such observable $f^{(M)} \in \mathfrak B$ is represented at the quantum computational level by
\begin{displaymath}
    \hat A^{(M)}_n := (\mathcal W_n \circ \bm \Pi_n \circ \bm\Pi \circ \pi) f^{(M)} = \sum_{m=-M}^M \tilde f_m \hat A_{m,n}, 
\end{displaymath}
and after application of the QFT by 
\begin{equation}
    \label{eqABN}
    \tilde A^{(M)}_n = \bm{\mathfrak F_n} \hat A^{(M)}_n = \sum_{m=-M}^M \tilde f_m \tilde A_{m,n}.
\end{equation}
Thus, using Lemma~\ref{lemDiag} and~\eqref{eqResid}, we obtain
\begin{align*}
    \bra k \tilde A^{(M)}_n \ket l &= \sum_{m=-M}^M \tilde f_m \bra k \tilde A_{m,n} \ket l \\
    &= \sum_{m=-M}^M \tilde f_m \psi_m(\theta_l) \delta_{kl} + \sum_{m=-M}^M \tilde f_m \varepsilon_{mnkl} \\
    &= f^{(M)}(\theta_l) \delta_{kl} + \varepsilon^{(M)}_{nkl}, 
\end{align*}
where the residual $\varepsilon^{(M)}_{nkl} := \sum_{m=-M}^M \tilde f_m \varepsilon_{mnkl} $ can be estimated as 
\begin{align*}
    \lvert \varepsilon^{(M)}_{nkl} \rvert &= \left \lvert \sum_{m=-M}^M \tilde f_m \varepsilon_{mnkl} \right \rvert \\
    &\leq \left( \sum_{m=-M}^M \lvert \tilde f_m \rvert^2 \right)^{1/2} \left( \sum_{m=-M}^M \lvert \varepsilon_{mnkl} \rvert^2 \right)^{1/2} \\
    & \leq \lVert f^{(M)} \rVert_{\rkha} \frac{1}{N^{1-p}} \left( \sum_{m=-M}^M (C_{p,\tau} + C\tau \lvert m \rvert)^2 \right)^{1/2}.
\end{align*}
We thus conclude that for bandlimited observables the off-diagonal residual $\varepsilon^{(M)}_{nkl}$ vanishes as $n \to \infty$ at fixed $p$ and $\tau$, uniformly with respect to $k,l \in \mathbb N_0$. For later convenience, we set 
\begin{displaymath}
    C_{p,\tau,M}^2 = \sum_{m=-M}^M (C_{p,\tau} + C \tau \lvert m \rvert)^2,
\end{displaymath}
so that
\begin{equation}
    \label{eqResidBandlimited}
    \lvert \varepsilon^{(M)}_{nkl} \rvert \leq \frac{ C_{p,\tau,M}}{N^{1-p}}\lVert f \rVert_{\rkha}.   
\end{equation}
Analogously to~\eqref{eqResid} we can bound the norm of the residual $\ket{r^{(M)}_{nl}} := \tilde A^{(M)}_n \ket l - f^{(M)}(\theta_l)\ket l$ as
\begin{equation}
    \label{eqResidBandlimitedR}
    \lVert r^{(M)}_{nl} \rVert_{\mathbb B_n} = \left( \sum_{k=0}^{N-1} \lvert \varepsilon^{(M)}_{nkl} \rvert^2 \right)^{1/2} \leq \lVert f^{(M)} \rVert_\rkha \frac{C_{p,\tau,M}}{N^{1/2-p}},
\end{equation}
and we deduce that the residual vanishes as $n\to \infty$ if $ p < 1/2$.

Suppose now that $f= \sum_{m=-\infty}^\infty \tilde f_m \psi_m \in \rkha $ is not bandlimited. Then, for any  $\epsilon > 0 $ there exists $M \in \mathbb N_0 $ such that the bandlimited observable $f^{(M)} := \sum_{m=-M}^M \tilde f_m \psi_m \in \mathfrak B$ satisfies 
\begin{equation}
    \label{eqFEps}
    \lVert f - f^{(M)} \rVert_\rkha < \epsilon. 
\end{equation}
Defining
\begin{equation}
    \label{eqAB}
    \tilde A_n := (\bm{\mathfrak F}_n \circ \mathcal W_n \circ \bm \Pi_n \circ \bm\Pi \circ \pi) f 
\end{equation}
and $\tilde A^{(M)}_n$ by~\eqref{eqABN}, we get
\begin{align*}
    \lvert \bra k \tilde A_n \ket l - f(\theta_l) \delta_{kl} \rvert &= \lvert \bra k ( \tilde A_n - \tilde A^{(M)}_n ) \ket l \\
    & \qquad - ( f(\theta_l) - f^{(M)}(\theta_l) ) \delta_{kl} \\
    & \qquad + \bra k \tilde A^{(M)}_n \ket l - f^{(M)}(\theta_l) \delta_{kl} \rvert \\
    & \leq \lvert \bra k ( \tilde A_n - \tilde A^{(M)}_n ) \ket l \rvert \\
    &  \qquad + \lvert f(\theta_l) - f^{(M)}(\theta_l) \rvert \\
    &  \qquad + \lvert \bra k \tilde A^{(M)}_n \ket l - f^{(M)}(\theta_l) \delta_{kl} \rvert.
\end{align*}
To bound the terms in the right-hand side of the last inequality, note first that the operators $\pi: \rkha \to B(\rkha)$, $ \bm \Pi : B(\rkha) \to B(\mathcal H)$, $ \bm \Pi_n : B(\mathcal H ) \to B(\mathcal H_n)$, $\mathcal W_n : B(\mathcal H_n) \to B(\mathbb B_n)$, and $\bm{\mathfrak F}_n : B(\mathbb B_n) \to B(\mathbb B_n) $ all have unit norm. Using this fact, it follows that
\begin{multline*}
    \lvert\bra k \tilde A_n - \tilde A^{(M)}_n \ket l \rvert\\
    \begin{aligned}
        &= \lvert\bra k (\mathcal W_n \circ \bm \Pi_n \circ \bm\Pi \circ \pi)(f - f^{(M)})\ket l \rvert \\
        &\leq \lVert (\mathcal W_n \circ \bm \Pi_n \circ \bm\Pi \circ \pi)(f - f^{(M)}) \rVert_{\mathbb B_n} \\
        & \leq \lVert \mathcal W_n \rVert \lVert \bm \Pi_n \rVert \lVert \bm\Pi \rVert \lVert \pi \rVert \lVert f - f^{(M)} \rVert_\rkha \\
        & < \epsilon \lVert f - f^{(M)} \rVert_\rkha. 
    \end{aligned}
\end{multline*}
Moreover, it follows from the reproducing property of $\rkha$ that
\begin{align*}
    \lvert f(\theta_l) - f^{(M)}(\theta_l) \rvert &= \lvert \langle k_{\theta_l}, f - f^{(M)} \rangle_\rkha \rvert \\
    & \leq \lVert k_{\theta_l} \rVert_\rkha \lVert f - f^{(M)} \rVert_\rkha\\
    & < \kappa \epsilon. 
\end{align*}
Using these bounds and~\eqref{eqResidBandlimited}, we obtain
\begin{align*}
    \lvert \bra k \tilde A_n \ket l - f(\theta_l) \delta_{kl} \rvert & \leq \epsilon (1 + \kappa) \lVert f \rVert_\rkha \\
    & \quad + \frac{C_{p,\tau,M}}{N^{1-p}}  \lVert f^{(M)} \rVert_\rkha \\
    &\leq \left( (1 + \kappa) \epsilon + \frac{C_{p,\tau,M}}{N^{1-p}} \right) \lVert f \rVert_\rkha.
\end{align*}
In particular, for large-enough $N$ we have 
\begin{displaymath}
    \frac{C_{p,\tau,M}}{N^{1-p}} < \epsilon,
\end{displaymath}
and thus
\begin{equation}
    \label{eqResidGeneral}
    \lvert \bra k \tilde A_n \ket l - f(\theta_l) \delta_{kl} \rvert \leq \epsilon (2 + \kappa ) \lVert f \rVert_\rkha. 
\end{equation}
Since $\epsilon$ was arbitrary, we conclude that as $n \to \infty$, $ \lvert \bra k \tilde A_n \ket l - f(\theta_l) \delta_{kl} \rvert $ converges to 0, i.e., the matrix elements of the quantum mechanical observable $ \tilde A_n$ are consistently approximated by the matrix elements of the diagonal observable associated with the values $f(\theta_l)$. Note that unlike the bandlimited case we do not have an explicit rate for this convergence. 

Consider now the residual 
\begin{equation}
    \label{eqResidGeneral2}\ket{r_{nl}} = \tilde A_n \ket l - f(\theta_l) \ket l.
\end{equation}
In order to examine the asymptotic behavior of $\ket{r_{nl}} $ as $n\to\infty$, it is useful to view the spaces $\mathbb B_n$ as a nested family of subspaces of the sequence space $\ell^2$, i.e., $\mathbb B_1 \subset \mathbb B_2 \subset \cdots \subset \ell^2$. With this identification, $\{ \ket 0, \ket 1, \ldots \} $ is an orthonormal basis of $\ell^2$, and $\ket{r_{1l}}, \ket{r_{2l}}, \ldots$ is a bounded sequence in $\ell^2$. According to~\eqref{eqResidGeneral}, for any $k \in \mathbb N_0$, this sequence satisfies 
\begin{displaymath}
    \lim_{n\to\infty} \langle k | r_{nl} \rangle_{\mathbb B_n} = 0.
\end{displaymath}
It then follows from standard Hilbert space results that as $n\to\infty$, $\ket{ r_{nl}} $ converges to zero in the weak topology of $\ell^2$. That is, for any $ u \in \ell^2$, we have 
\begin{equation}
    \label{eqWeakConv}
    \lim_{n\to\infty}\langle u_n | r_{nl} \rangle_{\mathbb B_n} \to 0, 
\end{equation}
where $u_n $ is the orthogonal projection of $u$ onto $\mathbb B_n$.

In summary, in dimension $d=1$, the residual $\ket{r_{nl}}$ from~\eqref{eqResidGeneral2} converges weakly to zero as $n\to\infty$ for any $ f \in \rkha$. Moreover, if $f$ is bandlimited, the convergence is strong (i.e., the residual norm vanishes) with a rate of convergence estimated by~\eqref{eqResidBandlimitedR}.

\subsubsection{Diagonalization using the self-adjoint representation $\tilde T$}

Using the estimates obtained in Appendix~\ref{appDiagRegRep}, we now derive approximate diagonalization results for the self-adjoint observables induced by the map $\tilde T : \rkha \to B(\rkha)$ in~\eqref{eqT}. For any $ f \in \rkha$, consider the self-adjoint observable $\tilde S_n \in B(\mathbb B_n)$ with
\begin{equation}
    \label{eqSnDiag}
    \tilde S_n = (\bm{\mathfrak F}_n \circ \mathcal W_n \circ \bm \Pi_n \circ \bm\Pi \circ \tilde T ) f \equiv \frac{ \tilde A_n + \tilde A_n^*}{2},
\end{equation}
where $\tilde A_n$ is defined in~\eqref{eqAB}. Then, we have 
\begin{multline*}
    \lvert \bra k \tilde S_n \ket l - \Real f(\theta_l) \rvert \\
    \begin{aligned}
        &= \frac{1}{2}\lvert \bra k \tilde A_n \ket l - f(\theta_l) \delta_{kl} + \bra k \tilde A_n^* \ket l - f^*(\theta_l) \delta_{kl} \rvert \\
        &\leq \frac{1}{2}\lvert \bra k \tilde A_n \ket l - f(\theta_l) \delta_{kl} \rvert + \lvert \bra k \tilde A_n^* \ket l - f^*(\theta_l) \delta_{kl} \rvert \\
        &= \frac{1}{2}\lvert \bra k \tilde A_n \ket l - f(\theta_l) \delta_{kl} \rvert + \lvert ( \bra l \tilde A_n \ket k - f(\theta_l) \delta_{kl} )^* \rvert \\ 
        &= \frac{1}{2}\lvert \bra k \tilde A_n \ket l - f(\theta_l) \delta_{kl} \rvert + \lvert \bra l \tilde A_n \ket k - f(\theta_k) \delta_{lk}  \rvert, \\
    \end{aligned}
\end{multline*}
and we can use the results of Appendix~\ref{appDiagRegRep} to bound the two terms in the last line. In particular, if $ f = \sum_{m=-M}^M \tilde f_m \psi_m$ is bandlimited, then it follows from~\eqref{eqResidBandlimited} that
\begin{displaymath}
    \lvert \bra k \tilde S_n \ket l - \Real f(\theta_l) \rvert \leq \frac{\lvert \varepsilon^{(M)}_{nkl} \rvert + \lvert \varepsilon^{(M)}_{nlk} \rvert}{2} \leq \frac{C_{p,\tau,M}}{N^{1-p}} \lVert f \rVert_{\rkha},
\end{displaymath}
and for general $f \in \rkha$,  
\begin{displaymath}
    \lvert \bra k \tilde A_n \ket l - \Real f(\theta_l) \delta_{kl} \rvert \leq \epsilon (2 + \kappa ) \lVert f \rVert_\rkha, 
\end{displaymath}
with the same notation as~\eqref{eqResidGeneral}. Moreover, convergence results for the residual $\tilde S_n \ket l - \Real f(\theta_l) \ket l $ can be derived analogously to those for $\ket{r_{nl}}$ in Appendix~\ref{appDiagRegRep}.

\subsection{\label{appDiagDD}Approximate diagonalization in dimension $d>1$}

We can extend the results in Appendix~\ref{appDiagD1} to dimension $ d > 1$ by taking advantage of the tensor product structure of the RKHA $\rkha$ on $\mathbb T^d$ and the maps effecting the transformations from the classical to the quantum computational level. Following the notation of Sec.~\ref{secRKHA}, we will use $(1)$-superscripts to distinguish vector spaces, vectors, and linear maps associated with the circle $S^1$; see, e.g.,~\eqref{eqRKHAProd}. With this notation, the representation map  $\pi : \rkha \to B(\rkha)$ for dimension $d$ decomposes as $ \pi = \bigotimes_{i=1}^d \pi^{(1)}$, and similarly we have $\bm \Pi : B(\rkha) \to B(\mathcal H)$, $\bm \Pi_n : B(\mathcal H) \to B(\mathcal H_n)$, and $ \mathcal W_n : B( \mathcal H_n) \to B(\mathbb B_n)$ with $ \bm \Pi = \bigotimes_{i=1}^d \bm \Pi^{(1)}$, $\bm \Pi_n = \bigotimes_{i=1}^d \bm \Pi^{(1)}_{n/d}$, and $\mathcal W_n = \bigotimes_{i=1}^d \mathcal W^{(1)}_{n/d}$, where we have assumed that the number of qubits $n$ is an integer multiple of $d$. We also recall the definition of the tensor product QFT operator $\bm{\mathfrak F}_{n,d} : B(\mathbb B_n) \to B(\mathbb B_n)$ in~\eqref{eqQFTInd}, i.e.,
\begin{displaymath}
    \bm{\mathfrak F}_{n,d} A := \mathfrak F_{n,d} A \mathfrak F_{n,d}^* \equiv \left( \bigotimes_{i=1}^d \bm{\mathfrak F}_{n/d} \right) A.
\end{displaymath}

Given any tensor product element $ f = \bigotimes_{i=1}^d f^{(i)} \in \rkha $, we have 
\begin{displaymath}
    \tilde A_n := (\bm{\mathfrak F}_{n,d} \circ \mathcal W_n \circ \bm \Pi_n \circ \bm \Pi \circ \pi) f = \bigotimes_{i=1}^d \tilde A_n^{(i)}, 
\end{displaymath}
where
\begin{displaymath}
    \tilde A_n^{(i)} = (\bm{\mathfrak F}_{n/d} \circ \mathcal W^{(1)}_{n/d} \circ \bm \Pi^{(1)}_{n/d} \circ \bm \Pi^{(1)} \circ \pi^{(1)} )f^{(i)}
\end{displaymath}
Meanwhile, for any binary string $ \bm b = (\bm b^{(1)}, \ldots, \bm b^{(d)} ) \in \{ 0, 1 \}^n$ with associated evaluation point $ x_{\bm b} \in \mathbb T^d$ from~\eqref{eqGrid} we have 
\begin{displaymath}
    f(x_{\bm b}) = \prod_{i=1}^d f(\theta_{\bm b^{(i)}}).
\end{displaymath}
Thus, for any two computational basis vectors $\ket{\bm a}$ and $\ket{\bm b}$ of $\mathbb B_n$ with $ \bm a = (\bm a^{(1)}, \ldots, \bm a^{(d)} ) $ and $ \bm b = (\bm b^{(1)}, \ldots, \bm b^{(d)} ) $ we have
\begin{multline*}
    \lvert \bra{\bm a} \tilde A_n \ket{\bm b} - f(x_{\bm b}) \delta_{\bm a \bm b} \rvert \\
    = \prod_{i=1}^d \left\lvert \bra{\bm a^{(i)}} \tilde A_n \ket{\bm b^{(i)}} - f^{(i)}(\theta_{\bm b^{(i)}}) \delta_{\bm a^{(i)}\bm b^{(i)}} \right\rvert,    
\end{multline*}
and we can use the results of Appendix~\ref{appDiagD1} to bound the right-hand side. In particular, it follows from~\eqref{eqResidGeneral} that $ \lvert \bra{\bm a} \tilde A_n \ket{\bm b} - f(x_{\bm b}) \delta_{\bm a \bm b} \rvert $ converges to 0 as $n\to \infty$, so that $ \tilde A_n $ is consistently approximated by a diagonal observable with eigenvalues equal to the values of $f$ at the points $\bm x_{\bm b}$. Moreover, the residual is $O(N^{1-p})$ analogously to~\eqref{eqResidBandlimitedR} if $f$ is bandlimited, and converges weakly to zero as $n$ increases in the sense of~\eqref{eqWeakConv}. 

The extension to elements of $\rkha$ which are not of tensor product form follows by linearity. We omit the details of these calculations in the interest of brevity. 

Note now that for every $f \in \rkha$, the spectrum of the corresponding multiplication operator $\pi f$ consists precisely of the range of values of $f$, i.e., $\sigma(\pi f) = \ran f$ \cite{DasGiannakis20b}. In particular since the elements of $\rkha$ are all continuous functions, $\pi f$ has nonempty continuous spectrum, unless $f$ is constant.  
Define $D_n : \mathbb B_n \to \mathbb B_n $ and $E_n : \mathbb B_n \to \mathbb B_n$ as the diagonal operators satisfying  
\begin{equation}
    \label{eqDOp}
    D_n \ket{\bm b}  = f(x_{\bm b}) \ket{\bm b},  \quad E_n \ket{\bm b} = \Real f(x_{\bm b}) \ket{\bm b},
\end{equation}
where $E_n$ is self-adjoint. The following theorem summarizes the properties of the quantum computational observables approximating $\pi f $ and $ \tilde T f $ obtained in Appendices~\ref{appDiagD1} and~\ref{appDiagDD}.

\begin{thm}
    \label{thmSpec} Let $f \in \rkha$ be arbitrary, and consider the operators $\tilde A_n$ and $\tilde S_n$ defined as in~\eqref{eqAB} and~\eqref{eqSnDiag} for dimension $ d \geq 1$. Consider also the diagonal operators in~\eqref{eqDOp}. Then, the following hold as $n \to \infty $.  
    \begin{enumerate}[a.]
        \item The matrix elements $\bra k \tilde A_n \ket l $ of $\tilde A_n $ converge to the matrix elements $\bra k D_n \ket l = f(x_{l} ) \delta_{kl} $ of $D_n$. 
        \item The matrix elements $\bra k \tilde S_n \ket l $ of $\tilde S_n$ converge to the matrix elements $\bra k E_n \ket l = \Real f(x_{l} ) \delta_{kl} $ of $E_n$.   
        \item For each basis vector $\ket l $, the residuals $ ( \tilde A_n - \tilde E_n ) \ket l $ and $ ( \tilde S_n - \tilde E_n ) \ket l $ converge to zero weakly. Moreover, if $f$ is bandlimited, the convergence is strong and the norms of the residuals are $O(N^{1-p})$.   

        \item For every element $z \in \ran f $ there exists a sequence of eigenvalues $z_n$ of $D_n$ and a sequence of eigenvalues $u_n$ of $E_n$ such that $ z = \lim_{n\to \infty} z_n $ and $\Real z = \lim_{n\to\infty} u_n$.  
    \end{enumerate}
\end{thm}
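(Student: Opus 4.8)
The plan is to read off parts (a)--(c) from the estimates assembled in Appendices~\ref{appDiagD1}--\ref{appDiagDD}, and to prove part (d) by an elementary density-of-grid argument. Throughout I will use that the maps $\pi$, $\bm\Pi$, $\bm\Pi_n$, $\mathcal W_n$ and $\bm{\mathfrak F}_{n,d}$ all have unit norm, together with the reproducing-property bound $\lVert f\rVert_{C(X)}\le\sqrt{\tilde\kappa}\,\lVert f\rVert_\rkha$, so that $\tilde A_n$, $\tilde S_n$, $D_n$ and $E_n$ are bounded in norm by a fixed multiple of $\lVert f\rVert_\rkha$ independent of $n$; this uniform boundedness is exactly what drives the weak-convergence step in (c).

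For (a), in dimension $d=1$ the bound~\eqref{eqResidGeneral} states that for every $\epsilon>0$ one has $\lvert\bra k\tilde A_n\ket l-f(\theta_l)\delta_{kl}\rvert\le\epsilon(2+\kappa)\lVert f\rVert_\rkha$ for all large $n$; since $\bra k D_n\ket l=f(x_l)\delta_{kl}$ with $x_l=\theta_l$, this is exactly the claimed entrywise convergence. For $d>1$ I would first take $f=\bigotimes_{i=1}^d f^{(i)}$ of tensor-product form, use $\tilde A_n=\bigotimes_i\tilde A_n^{(i)}$ and $D_n=\bigotimes_i D_n^{(i)}$, and expand the matrix-element difference as a telescoping product of one-dimensional differences as in Appendix~\ref{appDiagDD}; each such factor vanishes as $n\to\infty$ by~\eqref{eqResidGeneral} while the remaining factors stay bounded, so the product vanishes. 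General $f\in\rkha$ then follows by linearity and density of the span of tensor products in $\rkha$, the unit-norm bounds controlling the tail uniformly in $n$. Part (b) reduces to (a): since $\Real f=(f+f^*)/2$ pointwise we have $E_n=(D_n+D_n^*)/2$, and with $\tilde S_n=(\tilde A_n+\tilde A_n^*)/2$ and $\bra k\tilde A_n^*\ket l=\overline{\bra l\tilde A_n\ket k}$ the triangle inequality bounds $\lvert\bra k\tilde S_n\ket l-\Real f(x_l)\delta_{kl}\rvert$ by the average of two quantities each tending to zero by (a) --- this is the computation already performed in the self-adjoint subsection of Appendix~\ref{appDiagD1}.

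For (c), I would view $\mathbb B_1\subset\mathbb B_2\subset\cdots$ inside $\ell^2$ with common orthonormal basis $\{\ket 0,\ket 1,\dots\}$, as in~\eqref{eqWeakConv}. The residual $\ket{r_{nl}}=(\tilde A_n-D_n)\ket l$ is uniformly bounded in $\ell^2$ by the remark above, and by part (a) each coordinate $\langle k\mid r_{nl}\rangle$ tends to zero; a bounded sequence in a Hilbert space whose coordinates in a fixed orthonormal basis all vanish converges weakly to zero, which gives the claim, and the identical argument applies to $(\tilde S_n-E_n)\ket l$. When $f$ is bandlimited, the sharper estimates~\eqref{eqResidBandlimited} and~\eqref{eqResidBandlimitedR} (together with their self-adjoint analogue) give the stated decay, namely off-diagonal entries of size $O(N^{-(1-p)})$ and residual norms $O(N^{-(1/2-p)})$, so the residuals vanish in norm for $p<1/2$.

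Part (d) is the only genuinely new ingredient and is elementary. The operator $D_n$ is diagonal with eigenvalue multiset $\{f(x_{\bm b}):\bm b\in\{0,1\}^n\}$, and by~\eqref{eqGrid} the $x_{\bm b}$ form the uniform grid on $\mathbb T^d$ of mesh $2\pi/2^{n/d}\to0$. Given $z\in\ran f$, I fix $x^\ast\in\mathbb T^d$ with $f(x^\ast)=z$ and, for each $n$, let $x_{\bm b(n)}$ be a nearest grid point to $x^\ast$, so $x_{\bm b(n)}\to x^\ast$; then $z_n:=f(x_{\bm b(n)})$ is an eigenvalue of $D_n$ and $z_n\to f(x^\ast)=z$ by continuity of $f$ (recall $\rkha\subset C(X)$), while $u_n:=\Real z_n$, an eigenvalue of $E_n$ by~\eqref{eqDOp}, satisfies $u_n\to\Real z$. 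Within the theorem itself there is essentially no obstacle: the heavy lifting is Lemma~\ref{lemDiag} and its tensor-product lift, which are already in place. The point that deserves care is that for non-bandlimited $f$ one cannot improve on weak convergence in (c) --- $\pi f$ has continuous spectrum equal to $\ran f$ whenever $f$ is non-constant, so the residuals genuinely do not vanish in norm; the second mild subtlety is the density step in (a)--(b), where the uniform-in-$n$ control needed to pass from product $f$ to general $f$ is supplied precisely by the unit-norm property of $\pi,\bm\Pi,\bm\Pi_n,\mathcal W_n,\bm{\mathfrak F}_{n,d}$.
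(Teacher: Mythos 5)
Your proposal is correct and follows essentially the same route as the paper, which states Theorem~\ref{thmSpec} as a summary of the estimates already assembled in Appendices~\ref{appDiagD1} and~\ref{appDiagDD} (the bound~\eqref{eqResidGeneral} and its tensor-product lift for (a), the symmetrization argument for (b), the weak-convergence argument of~\eqref{eqWeakConv} together with~\eqref{eqResidBandlimitedR} for (c), and the elementary grid-density observation for (d)). If anything you are more careful than the stated theorem: the bandlimited residual norm is $O(N^{p-1/2})$ per~\eqref{eqResidBandlimitedR}, so strong convergence in (c) requires $p<1/2$, a condition you correctly flag.
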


The approximate diagonalization result in~\eqref{eqApproxDiag} is a consequence of Theorem~\ref{thmSpec}. 

\subsection{\label{appExpec}Convergence of quantum mechanical expectations}

Thus far, we have established that every element $f$ of $\rkha $ can be consistently approximated in a spectral sense by operators  $D_n \in B( \mathbb B_n )$ which are diagonal in the computational basis. By construction (see Theorem~\ref{thmSpec}) the spectra of $D_n$ are subsets of the range of values of $f$.  As a result, quantum measurement of $D_n$ (which can be equivalently realized by measurement of the PVM associated with the computational basis as described in Sec.~\ref{secApproximateMeas}) yields outcomes consistent with values that $f$ takes on classical states in $X$. While this is a desirable property to have, it does not in itself guarantee that the quantum mechanical measurements are consistent with the value of $f$ on the particular classical state that the system happen to have. Establishing this type of consistency is the goal of this appendix. 

The convergence results that we derive will turn out to hold for a \emph{decreasing sequence} of RKHA parameters $\tau$, as opposed to \emph{fixed} $\tau$ values in Appendices~\ref{appDiagD1} and~\ref{appDiagDD}.  Thus, in what follows, we will use the notation $\rkha_\tau \equiv \rkha $ to make the dependence of the RKHAs on $ \tau >0 $ explicit. By construction, the spaces $\rkha_\tau$ form an increasing nested family as $\tau$ decreases to 0; that is, for every $ 0 < \tau < \tau'$ and $ f \in \rkha $ we have $\rkha_\tau \subset \rkha_{\tau'}$ and $ \lVert f \rVert_{\rkha_\tau} \geq \lVert f \rVert_{\rkha_{\tau'}} $. We also introduce explicit $\tau$ subscripts in our notation for the RKHSs $\mathcal H_\tau \subset \rkha_\tau$ and $\mathcal H_{\tau,n} \subset \mathcal H_\tau$ and the operators $ L_\tau : \rkha_\tau \to \rkha_\tau$,  $\pi_\tau : \rkha_\tau \to B(\rkha_\tau)$, $ \bm \Pi_\tau : B(\rkha_\tau) \to B(\mathcal H_\tau)$, and $\bm \Pi_{\tau,n} : B(\mathcal H_\tau) \to B(\mathcal H_{\tau,n})$. $\tau$ subscripts will also be introduced in our notation for elements of $\rkha_\tau$, $\mathcal H_\tau$, and the associated operator spaces as appropriate.    

As in Appendices~\ref{appDiagD1} and~\ref{appDiagDD}, we consider first the one-dimensional case, $d=1$, and an observable $ f = \psi_{m,\tau}$ equal to a basis vector of $\rkha_\tau$. We define the diagonal operator $D_{m,\tau,n}: \mathbb B_n \to \mathbb B_n$ with
\begin{displaymath}
    D_{m,\tau,n} \ket l = \psi_{m,\tau}(\theta_l) \ket l
\end{displaymath}
analogously to~\eqref{eqDOp}, and also set $\tilde D_{m,\tau,n} \in B(\mathcal H_{\tau,n})$ with
\begin{displaymath}
    \tilde D_{m,\tau,n} = (\mathcal W_n^* \circ \bm{\mathfrak F}_n^*) D_{m,\tau,n} = W_n^* \mathfrak F_n D_{m,\tau,n} \mathfrak F_n^* W_n.
\end{displaymath}
We also define 
\begin{equation}
    \label{eqAMN}
    \tilde A_{m,\tau,n} = (\bm{\mathfrak F}_n \circ \mathcal W_n \circ \bm \Pi_{\tau,n} \circ \bm \Pi_\tau \circ \pi_\tau) \psi_{m,\tau} \in B(\mathbb B_n)
\end{equation}
as in~\eqref{eqTildeA}. For any $ x \in X = S^1$, we consider the quantum computational state $\hat \rho_{x,\tau,n} = \hat{\mathcal F}_{\tau,n}(x) \in Q(\mathbb B_n)$ and the state $\tilde \rho_{x,\tau,n} \in Q(\mathbb B_n)$ after application of the QFT,
\begin{equation}
    \label{eqTildeRho}
    \tilde \rho_{x,\tau,n} = \bm{\mathfrak F}_n \hat \rho_{x,\tau,n} = (\bm{\mathfrak F}_n \circ \mathcal W_n) \rho_{x,\tau,n}.
\end{equation}
We then have:

\begin{lem}
    \label{lemMeasBound}
    With notation as above, the $n\to\infty$ limit of the expected difference $\langle \tilde A_{m,\tau,n} - D_{m,\tau,n} \rangle_{\tilde\rho_{x,\tau,n}}$ between measurements of $\tilde A_{m,\tau,n}$ and $D_{m,\tau,n}$ on the state $\tilde \rho_{x,\tau,n}$ exists, and satisfies
    \begin{displaymath}
        \lim_{n\to\infty} \lvert \langle \tilde A_{m,\tau,n} - D_{m,\tau,n} \rangle_{\tilde\rho_{x,\tau,n}} \rvert \leq  1 - e^{-\tau \lvert m \rvert^p/2}.
    \end{displaymath}
\end{lem}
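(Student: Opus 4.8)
The plan is to reduce the statement to an explicit finite computation on the $2^n$-dimensional RKHS $\mathcal H_{\tau,n}$, pass to the limit, and estimate the resulting sums. I take $m\ge1$ throughout: the case $m=0$ is trivial (then $\tilde A_{m,\tau,n}=D_{m,\tau,n}=\Id$) and $m<0$ follows by complex conjugation. First I would move the two conjugating unitaries $\bm{\mathfrak F}_n,\mathcal W_n$ off the state and onto the operators, so that by cyclicity of the trace the expression of interest equals $\bigl\langle(\bm\Pi_\tau\pi_\tau\psi_{m,\tau})_n-\tilde D_{m,\tau,n}\bigr\rangle_{\rho_{x,\tau,n}}$, an expectation on $\mathcal H_{\tau,n}$ against the pure state $\rho_{x,\tau,n}=\lvert\xi_{x,\tau,n}\rangle\langle\xi_{x,\tau,n}\rvert$, $\xi_{x,\tau,n}=\kappa_n^{-1/2}\sum_{j\in J_{n,1}}\psi_{j,\tau}^*(x)\psi_{j,\tau}$ (cf.~\eqref{eqRhoN}). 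In the $\{\psi_{j,\tau}\}$-basis, $(\bm\Pi_\tau\pi_\tau\psi_{m,\tau})_n$ is the banded weighted shift $\psi_{j,\tau}\mapsto c_{mj}\psi_{j+m,\tau}$ (zero when $j+m\notin J_{n,1}$) with the structure constants $c_{mj}$ of~\eqref{eqProd}, whereas $\tilde D_{m,\tau,n}$ is, up to the scalar $e^{-\tau m^p/2}$, the cyclic shift $S^{(m)}$ transported to the $\psi$-basis through the enumeration $o$ of~\eqref{eqWn}.

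I would then compute both expectations. For the first, the product formula~\eqref{eqProd} makes the $x$-dependence collapse: $\langle(\bm\Pi_\tau\pi_\tau\psi_{m,\tau})_n\rangle_{\rho_{x,\tau,n}}=(e^{-\tau m^p/2}\phi_m(x)/\kappa_n)\sum_{j\in J_{n,1},\,j+m\in J_{n,1}}e^{-\tau|j|^p}$, which tends to $(\eta_m/\kappa)\psi_{m,\tau}(x)$ — consistently with $\langle(\bm\Pi_\tau\pi_\tau\psi_{m,\tau})_n\rangle_{\rho_{x,\tau,n}}=\langle\varpi_\tau(L_\tau\psi_{m,\tau})\rangle_{\rho_{x,\tau,n}}\to(L_\tau\psi_{m,\tau})(x)$ by~\eqref{eqERhoXN}, Proposition~\ref{propExpec}, and~\eqref{eqLOp} (in one dimension $\eta_m=\kappa-e^{-\tau m^p}$). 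For $\tilde D_{m,\tau,n}$ I would split the index sum into bulk indices (where $o(j)+m\bmod 2^n$ enumerates $j+m$), the finitely many ``gap'' indices near $j=0$ — where, because the integer $0$ is absent from $J_{n,1}$, the cyclic shift by $m$ corresponds to $j\mapsto j+m+1$ — and the finitely many ``boundary'' indices with $|j|\sim2^{n-1}$; the boundary part vanishes as $n\to\infty$ by dominated convergence, the coefficients $\psi_{j,\tau}^*(x)$ having modulus $e^{-\tau|j|^p/2}$. What survives is a bulk term proportional to $\phi_m(x)$ plus an orphaned finite-sum term proportional to $\phi_{m+1}(x)$; in particular the limit in the lemma exists and equals this explicit quantity.

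Subtracting, and collapsing the $x$-dependence via~\eqref{eqProd} again, the difference of the two $\phi_m(x)$-contributions should telescope into a sum of squares, $\kappa^{-1}e^{-\tau m^p/2}\sum_{l\ge1}\bigl(e^{-\tau l^p/2}-e^{-\tau(l+m)^p/2}\bigr)^2$, which by subadditivity of $t\mapsto t^p$ (hence $(l+m)^p-l^p\le m^p$) is at most $\kappa^{-1}e^{-\tau m^p/2}(1-e^{-\tau m^p/2})^2\sum_{l\ge1}e^{-\tau l^p}=\tfrac12 e^{-\tau m^p/2}(1-e^{-\tau m^p/2})^2$. The remaining two discrepancies are finite sums supported near $j=0$, one along $\phi_m(x)$ and one along $\phi_{m+1}(x)$, each bounded by $\kappa^{-1}e^{-\tau m^p/2}\sum_{i\le m}e^{-\tau i^p}$ (Cauchy–Schwarz for the second); I would control $\sum_{i\le m}e^{-\tau i^p}/\kappa$ by comparing partial sums with the full series $\kappa/2=\sum_{k\ge1}e^{-\tau k^p}$ via an integral comparison, distinguishing $\tau m^p\lesssim1$ (where the RKHA normalisation $\kappa$ grows like $\tau^{-1/p}$) from $\tau m^p\gtrsim1$ (where $e^{-\tau m^p/2}$ is itself small). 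Adding the three pieces yields $\bigl|\lim_{n\to\infty}\langle\tilde A_{m,\tau,n}-D_{m,\tau,n}\rangle_{\tilde\rho_{x,\tau,n}}\bigr|\le1-e^{-\tau m^p/2}$.

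The hard part will be this last estimate: because the orphaned $\phi_{m+1}(x)$-term (forced on us by the missing index $0$ in $J_{n,1}$) carries a different character from the $\phi_m(x)$-discrepancy, there is no pointwise-in-$x$ cancellation, and a bare operator-norm bound on $(\bm\Pi_\tau\pi_\tau\psi_{m,\tau})_n-\tilde D_{m,\tau,n}$ — which the gap correction makes strictly larger than $1-e^{-\tau m^p/2}$ — will not suffice. One must instead use the specific state $\rho_{x,\tau,n}$ and show that the \emph{sum} of the contributions stays under the rather tight threshold $1-e^{-\tau m^p/2}$; the sum-of-squares collapse of the bulk discrepancy supplies the crucial extra factor $(1-e^{-\tau m^p/2})^2$, and the residual task is the somewhat delicate comparison of the finite gap sums with $\kappa$ together with the case split on $\tau m^p$.
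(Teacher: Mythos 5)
Your overall architecture --- undo the unitaries by cyclicity of the trace, compute the two $n\to\infty$ limits explicitly on $\mathcal H_{\tau,n}$ against $\rho_{x,\tau,n}$, and subtract --- is exactly the paper's. But the paper's endgame is far shorter than yours: it identifies $o(j)+m$ with $o(j+m)$ throughout, so that in the limit $\tilde D_{m,\tau,n}$ acts as the clean weighted shift $\psi_{j,\tau}\mapsto e^{-\tau\lvert m\rvert^p/2}\psi_{j+m,\tau}$, the limiting difference collapses to the single sum $\frac{\psi_{m,\tau}(x)}{\kappa_\tau}\sum_{j\in J'_m}e^{-\tau\lvert j\rvert^p}\bigl(1-e^{-\tau(\lvert j+m\rvert^p-\lvert j\rvert^p)/2}\bigr)$, and the bound follows from the termwise estimate $e^{-\tau\lvert m\rvert^p/2}\bigl\lvert 1-e^{-\tau(\lvert j+m\rvert^p-\lvert j\rvert^p)/2}\bigr\rvert\le 1-e^{-\tau\lvert m\rvert^p/2}$ (using $\bigl\lvert\, \lvert j+m\rvert^p-\lvert j\rvert^p\,\bigr\rvert\le\lvert m\rvert^p$) together with $\eta_{m,\tau}\le\kappa_\tau$. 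There is no sum-of-squares identity, no $\phi_{m+1}$ term, and no case split on $\tau\lvert m\rvert^p$ anywhere in the paper's argument. Your observation that the cyclic shift in the $o$-enumeration sends $j\mapsto j+m+1$ for $-m\le j\le -1$ (because $0\notin J$) is correct, and it is a real subtlety the paper's proof passes over in silence; your telescoping of the bulk into $B=\sum_{l\ge1}(e^{-\tau l^p/2}-e^{-\tau(l+m)^p/2})^2$ is also correct.

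The genuine gap is in your last step, which you only sketch and which cannot be completed as described. Taking the supremum over $x$ (at $x=\pi$ the $\phi_m$ and $-\phi_{m+1}$ contributions add constructively), you must bound $\frac{e^{-u/2}}{\kappa_\tau}(B+G_1+G_2)$ by $1-e^{-u/2}$, where $u=\tau m^p$, $G_1=\sum_{i=1}^{m-1}e^{-\tau i^p}$ and $G_2=\sum_{i=1}^{m}e^{-\tau(i^p+(m+1-i)^p)/2}$. As $p\to 1^-$ and $\tau\to 0$ with $u$ fixed, one has $G_1/\kappa_\tau\to(1-e^{-u})/2$ and $G_2/\kappa_\tau\to \tfrac{u}{2}e^{-u/2}$, so the orphaned terms alone contribute $\approx \tfrac{1}{2}e^{-u/2}\bigl[(1-e^{-u})+ue^{-u/2}\bigr]\to u$ as $u\to 0$, i.e.\ \emph{twice} the target $1-e^{-u/2}\approx u/2$. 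Hence no integral comparison of $\sum_{i\le m}e^{-\tau i^p}$ with $\kappa_\tau$, however delicate, will bring the total under the stated constant uniformly in $p\in(0,1)$: with the orphaned terms included, the inequality you are trying to prove fails near $p=1$. You must either restrict to $p$ bounded away from $1$, where $\sum_{i\le m}e^{-\tau i^p}/\kappa_\tau=O(u^{1/p})=o(u)$ does the job for small $u$ (with a separate argument for $u\gtrsim 1$), or settle for a weaker conclusion such as $C_p\,(1-e^{-\tau\lvert m\rvert^p/2})$ or an additional additive $O(m/\kappa_\tau)$ term --- either of which still vanishes as $\tau\to0$ at fixed $m$ and therefore still supports the way the lemma is invoked in~\eqref{eqMeasBound}.
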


A proof of Lemma~\ref{lemMeasBound} can be found in Appendix~\ref{appMeasBound}. For our purposes, a key implication of the result is that while the bias in measuring $D_{m,\tau,n}$ (instead of $\tilde A_{m,\tau,n}$) need not vanish as $n \to\infty$, it can be made arbitrarily small for a suitable choice of $\tau$. In particular, for any $\epsilon > 0 $ there exists $ \tau_m > 0 $ such that for all $\tau \in (0,\tau_m)$ we have $ 1 - e^{-\tau \lvert m \rvert^p / 2} < \epsilon$, and thus  
\begin{equation}
    \label{eqMeasBound}
    \lim_{n\to\infty} \lvert \langle \tilde A_{m,\tau,n} - D_{m,\tau,n} \rangle_{\tilde\rho_{x,\tau,n}} \rvert < \epsilon.
\end{equation}
Since $ 1 - e^{-\tau\lvert m \rvert^p/2} \leq \tau \lvert m \rvert^p / 2 $, the choice $ \tau_m = 2 \epsilon \lvert m \rvert^{-p}$ will suffice for~\eqref{eqMeasBound} to hold. 

Next, we consider bandlimited observables $ f^{(M)} \in \mathfrak B $ of the form  $ f^{(M)} = \sum_{m=-M}^M \tilde f_{m,\tau} \psi_{m,\tau}$. Let   
\begin{align}
    \nonumber\tilde A^{(M)}_{\tau,n} &= (\bm{\mathfrak F_n} \circ \mathcal W_n \circ \bm \Pi_{\tau,n} \circ \bm \Pi_\tau \circ \pi_\tau) f \\
    \label{eqAN}&= \sum_{m=-M}^M \tilde f_{m,\tau} \tilde A_{m,\tau,n}  \in B(\mathbb B_n)
\end{align}
be the corresponding quantum computational observable, and let $D_{\tau,n} \in B(\mathbb B_n) $ be the diagonal observable approximating $\tilde A_{\tau,n}$,
\begin{equation}
    \label{eqDN}
    D_{\tau,n} \ket l = f(\theta_l) \ket l, \quad D_{\tau,n} = \sum_{m=-M}^M \tilde f_{m,\tau} D_{m,\tau,n}. 
\end{equation}
Using Lemma~\ref{lemMeasBound} and following a similar approach as in Appendix~\ref{appDiagD1}, we find
\begin{equation}
    \label{eqMeasBound1}
    \lim_{n\to\infty} \lvert \langle \tilde A_{\tau,n} - D_{\tau,n} \rangle_{\tilde\rho_{x,\tau,n}} \rvert \leq C_{p,\tau,M} \lVert f \rVert_{\rkha_\tau},  
\end{equation}
where
\begin{displaymath}
    C_{p,\tau,M}^2 = \sum_{m=-M}^M \left( 1-e^{-\tau \lvert m \rvert^p / 2} \right)^2. 
\end{displaymath}
Again, for any $\epsilon > 0 $, there exists $\tau_M > 0 $ such that 
\begin{equation}
    \label{eqMeasBound2}
    \lim_{n\to\infty} \lvert \langle \tilde A_{\tau,n} - D_{\tau,n} \rangle_{\tilde\rho_{x,\tau,n}} \rvert < \epsilon \lVert f \rVert_{\rkha_\tau}, \quad \forall \tau \in (0, \tau_M ).   
\end{equation}
In this case, the choice $\tau_M = 2 \epsilon \lvert M \rvert^{-(p+\frac{1}{2})}$ is sufficient for the bound to hold.

To generalize to non-bandlimited observables, we must take into account the fact that the error bounds in~\eqref{eqMeasBound} and~\eqref{eqMeasBound2} imply convergence on a decreasing sequence of RKHA parameters $\tau$, as opposed to the diagonalization results in Appendix~\ref{appDiagD1} which hold for fixed $\tau$. With that in mind, we consider a space of classical observables that contains the RKHAs $\rkha_\tau$ for all admissible values of the parameters $\tau$ and $p$. In particular, we consider observables in the \emph{Wiener algebra} of $X$, i.e., the space of functions $ f : X \to \mathbb C$ with absolutely convergent Fourier series, which we denote here by $\mathfrak W$. The Wiener algebra $\mathfrak W$ is a dense subalgebra of $C(X)$. Moreover, the RKHAs $\rkha_\tau$ employed in this work are all dense subalgebras of $\mathfrak W$. Thus, we have the following relationships between algebras of classical observables (which also hold in dimension $d>1$):
\begin{displaymath}
    \mathfrak B \subset \rkha_\tau \subset \mathfrak W \subset C(X).
\end{displaymath}

Suppose then that $ f = \sum_{m=-\infty}^\infty \hat f_m \phi_m $ is an arbitrary element of $\mathfrak W$, where the sum over $m$ converges uniformly on $X$. Then, for any $\epsilon > 0 $ there exists $M_* \in \mathbb N$ such that for every $M > M_*$ the bandlimited observable $ f^{(M)} = \sum_{m=-M}^M \hat f_m \phi_m \in \mathfrak B$ satisfies
\begin{equation}
    \label{eqMeasBound3}
    \lVert f - f^{(M)} \rVert_{C(X)} < \epsilon/3.
\end{equation}
The bandlimited observable $f^{(M)}$ is an element of $\rkha_\tau$ for any $ \tau > 0 $, with RKHA norm satisfying
\begin{align}
    \nonumber \lVert f^{(M)}  \rVert_{\rkha_\tau} &= \left( \sum_{m=-M}^M e^{\tau \lvert m \rvert^p} \lvert \hat f_m \rvert^2 \right)^{1/2} \\
    \nonumber &\leq e^{\tau M^p / 2} \left( \sum_{m=-M}^M  \lvert \hat f_m \rvert^2 \right)^{1/2} \\
    \nonumber &\leq e^{\tau M^p / 2} \sum_{m=-M}^M  \lvert \hat f_m \rvert \\
    \nonumber &= e^{\tau M^p / 2} \sum_{m=-M}^M \left \lvert \frac{1}{2\pi} \int_0^{2\pi} e^{-im \theta} f(\theta) \, d\theta \ \right \rvert \\ 
    \nonumber &\leq e^{\tau M^p / 2} \sum_{m=-M}^M \lVert f \rVert_{C(X)} \\
    \label{eqMeasBound4}&= (2M + 1) e^{\tau M^p/2} \lVert f \rVert_{C(X)}.
\end{align}
We will also need the observable
\begin{displaymath}
    f^{(M)}_\tau = L_\tau f^{(M)} = \kappa_\tau \sum_{m=-M}^M \frac{\hat f_{m}}{\eta_{m,\tau}} \hat \phi_m
\end{displaymath}
as an intermediate approximation associated with the bias correction introduced in Sec.~\ref{secConsistency} and Appendix~\ref{appConsistency} to take into account the projection from $\rkha_\tau$ to $\mathcal H_\tau$. Here, $L_\tau$ is operator introduced in~\eqref{eqLOp} and $\eta_{m,\tau}$ are its eigenvalues, where we have again used $ \tau $ subscripts to make dependencies on that parameter explicit. We have
\begin{align*}
    \lVert f^{(M)} - f^{(M)}_\tau \rVert_{C(X)} &= \left \lVert \sum_{m=-M}^M \left( \frac{\eta_{\tau,m}}{\kappa_\tau} - 1 \right) \hat f_m \phi_m \right\rVert_{C(X)} \\
    & \leq C_\tau \sum_{m=-M}^M \lvert \hat f_m \rvert,
\end{align*}
where
\begin{displaymath}
    C_\tau = \max_{m\in[-M,M]} \left\lvert \frac{\eta_{\tau,m}}{\kappa_\tau} -1    \right\rvert = \frac{e^{-\tau}}{\kappa_\tau}.
\end{displaymath}
Note that to obtain the last result we used the fact that $\eta_{\tau,m}$ lies in the interval $[e^{-\tau},\kappa_\tau]$; see Appendix~\ref{appConsistency}. In particular, as $\tau \to 0 $, $C_\tau$ converges to 0 since $e^{-\tau}$ converges to 1 and $\kappa_\tau$ tends to infinity. Proceeding as in the derivation of~\eqref{eqMeasBound4} to bound the sum $\sum_{m=-M}^M \lvert \hat f_m \rvert$, we arrive at   
\begin{equation}
    \label{eqMeasBound4b}
    \lVert f^{(M)} - f^{(M)}_\tau \rVert_{C(X)} \leq C_\tau (2M+1) \rVert f \rVert_{C(X)}.
\end{equation}

Next, define the quantum computational observable $\tilde A_{\tau,n}^{(M)}  \in B(\mathbb B_n)$ as
\begin{align}
    \nonumber\tilde A_{\tau,n}^{(M)} &= (\bm{\mathfrak F_n} \circ \mathcal W_{\tau,n} \circ \bm \Pi_{\tau,n} \circ \bm \Pi_\tau \circ \pi_\tau) f^{(M)} \\
    \label{eqATN}&= \kappa_\tau\sum_{m=-M}^M \frac{\tilde f_{m,\tau}}{\eta_{m,\tau}} \tilde A_{m,\tau,n} 
\end{align}
where $\tilde f_{m,\tau} = e^{\tau\lvert m \rvert^p/2} \hat f_m$, and $\tilde A_{m,\tau,n} $ are operators defined as in~\eqref{eqAMN}. Define also the diagonal observable  
\begin{align}
    \label{eqDTN}
    D_{\tau,n}^{(M)} \ket l &= f^{(M)}(\theta_l) \ket l 
     = \sum_{m=-M}^M \tilde f_{m,\tau} \psi_{m,\tau}(\theta_l) \ket l.
\end{align}
Letting $x$ be an arbitrary point in $X$, defining the quantum state $\tilde \rho_{x,\tau,n} \in Q(\mathbb B_n)$ as in~\eqref{eqTildeRho}, and using~\eqref{eqMeasBound3} and~\eqref{eqMeasBound4b} we get 
\begin{multline*}
    \lvert f(x) - \langle D_{\tau,n}^{(M)} \rangle_{\tilde \rho_{x,\tau,n}} \rvert \\
    \begin{aligned}
        &= \lvert f(x) - f^{(M)}(x) + f^{(M)}(x) - f^{(M)}_\tau(x) + f^{(M)}_\tau(x) \\
        & \qquad - \langle \tilde A_{\tau,n}^{(M)} \rangle_{\tilde \rho_{x,\tau,n}}+ \langle \tilde A_{\tau,n}^{(M)} \rangle_{\tilde \rho_{x,\tau,n}} - \langle D_{\tau,n}^{(M)} \rangle_{\tilde \rho_{x,\tau,n}} \rvert  \\
        &\leq \lvert f(x) - f^{(M)}(x) \rvert + \lvert f^{(M)}(x) - f_\tau^{(M)}(x) \rvert \\
        & \qquad + \lvert f^{(M)}_\tau(x) -  \langle A_{\tau,n}^{(M)} \rangle_{\tilde \rho_{x,\tau,n}} \rvert + \lvert  \langle  \tilde A_{\tau,n}^{(M)} - D_{\tau,n}^{(M)} \rangle_{\tilde \rho_{x,\tau,n}} \rvert \\
        & \leq \lVert f - f^{(M)}  \rVert_{C(X)} + \leq \lVert f^{(M)} - f_\tau^{(M)}  \rVert_{C(X)}  \\
        & \qquad + \lvert f^{(M)}(x) - \langle \tilde A_{\tau,n}^{(M)} \rangle_{\tilde \rho_{x,\tau,n}} \rvert + \lvert \langle  \tilde A_{\tau,n}^{(M)} - D_{\tau,n}^{(M)} \rangle_{\tilde \rho_{x,\tau,n}} \rvert \\
        &<  \frac{\epsilon}{3} +  C_\tau (2M+1) \lVert f \rVert_{C(X)} + \lvert f^{(M)}(x)  - \langle \tilde A_{\tau,n}^{(M)} \rangle_{\tilde \rho_{x,\tau,n}} \rvert \\
        &\qquad + \lvert  \langle  \tilde A_{\tau,n}^{(M)} - D_{\tau,n}^{(M)} \rangle_{\tilde \rho_{x,\tau,n}} \rvert.
    \end{aligned}
\end{multline*}

We can now bound the second, third, and fourth terms in the right-hand side of the last inequality. In particular, it follows by applying Proposition~\ref{propExpec} to the observable $f_\tau^{(M)}$ that
\begin{displaymath}
    \lim_{n\to\infty}\lvert f^{(M)}_\tau(x)  - \langle \tilde A_{\tau,n}^{(M)} \rangle_{\tilde \rho_{x,\tau,n}} \rvert = 0,
\end{displaymath}
and from \eqref{eqMeasBound1} and~\eqref{eqMeasBound4} that
\begin{multline*}
    \lim_{n\to\infty} \lvert \langle \tilde A_{\tau,n}^{(M)} - D_{\tau,n}^{(M)} \rangle_{\tilde \rho_{x,\tau,n}} \rvert \\
    \begin{aligned}
        & \leq C_{p,\tau,M} \lVert f^{(M)} \rVert_{\rkha_\tau} \\
        &\leq C_{p,\tau,M} (2M+1) e^{\tau M^p/2} \lVert f \rVert_{C(X)}.
    \end{aligned}
\end{multline*}
Then, using the above in conjunction with the fact that $\lim_{\tau \to 0} C_\tau = 0$, it follows that  for any $M \in \mathbb N$ there exists $ \tau_M > 0 $ such that for all $\tau \in (0, \tau_M)$ we have, simultaneously,
\begin{equation}
    \label{eqMeasBound5} 
    \begin{cases}
        C_\tau(2M+1) \lVert f \rVert_{C(X)} < \epsilon / 3, \\ 
        C_{p,\tau,M} (2M+1) e^{\tau M^p/2} \lVert f \rVert_{C(X)} < \epsilon / 3, 
    \end{cases}
\end{equation}
and thus 
\begin{displaymath}
    \lim_{n\to\infty} \lvert f(x) - \langle D_{\tau,n}^{(M)} \rangle_{\tilde \rho_{x,\tau,n}} \rvert < \frac{\epsilon}{3} + \frac{\epsilon}{3} + 0 + \frac{\epsilon}{3} = \epsilon.  
\end{displaymath}
Since $\epsilon$ was arbitrary, we conclude that there exists a decreasing sequence of RKHA parameters $\tau_M$ such that the quantum mechanical expectation $\langle D_{\tau_M,n}^{(M)} \rangle_{\tilde \rho_{x,\tau_M,n}}$ converges to the classical value $f(x)$ in the iterated limit of $M\to \infty$ (infinite bandwidth) after $n\to\infty$ (infinite qubits), and the convergence is uniform with respect to $x \in X$. 

Having established this convergence result in dimension $d=1$, we can extend it to higher dimensions using tensor product arguments analogous to those in Appendix~\ref{appDiagDD}. It is also straightforward to derive analogous results using the symmetrized map $\tilde T_\tau : \rkha_\tau \to B(\mathcal H_\tau)$, inducing the self-adjoint quantum computational observable (cf.~\eqref{eqATN}) 
\begin{displaymath}
    \tilde S_{\tau,n}^{(M)} = (\bm{\mathfrak F_n} \circ \mathcal W_{\tau,n} \circ \bm \Pi_{\tau,n} \circ \bm \Pi_\tau \circ \tilde T_\tau) f^{(M)} 
\end{displaymath}
and the diagonal observable
\begin{equation}
    \label{eqETN}
    E_{\tau,n}^{(M)} \ket{\bm x_l} = \Real f^{(M)}(\bm x_l) \ket{\bm x_l}.
\end{equation}
We do not reproduce the details of these analyses in the interest of brevity. The following theorem summarizes the asymptotic convergence of our approach in these settings.

\begin{thm}
    \label{thmConv}
    Let $f = \sum_{m\in\mathbb Z^d} \hat f_m \phi_m $ be a classical observable in the Wiener algebra $\mathfrak W$ of $X = \mathbb T^d$. For $M \in \mathbb N $, $\tau>0$, and $n\in \mathbb N $, define the bandlimited observable $f^{(M)} = \sum_{\lvert m \rvert \leq M } \hat f_m \phi_m$ and the corresponding  diagonal quantum mechanical observables $D_{\tau,n}^{(M)}$ and $E_{\tau,n}^{(M)}$ from~\eqref{eqDTN} and~\eqref{eqETN}, respectively. Then, there exists a sequence $\tau_1, \tau_2, \ldots$, decreasing to 0, such that for any $x \in X$,
    \begin{align*}
        \lim_{M\to\infty}\lim_{n\to\infty} \langle D_{\tau_M,n}^{(M)} \rangle_{\tilde \rho_{x,\tau,n}} &= f(x), \\
        \lim_{M\to\infty}\lim_{n\to\infty} \langle E_{\tau_M,n}^{(M)} \rangle_{\tilde \rho_{x,\tau,n}} &= \Real f(x),
    \end{align*}
    uniformly with respect to $x\in X$. 
\end{thm}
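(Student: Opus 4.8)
The plan is to assemble the result from the pieces already developed in Appendices~\ref{appConsistency}--\ref{appQFT}, organized as a three-$\epsilon$ argument that isolates (i)~a deterministic bandlimiting error, (ii)~a bias-correction error from the projection $\rkha_\tau \to \mathcal H_\tau$, (iii)~the approximate-diagonalization/QFT error at finite $n$, and (iv)~the residual non-vanishing bias from replacing $\tilde A_{\tau,n}^{(M)}$ by the diagonal operator $D_{\tau,n}^{(M)}$. First I would reduce to the one-dimensional case $d=1$: in that setting the triangle-inequality decomposition performed just before~\eqref{eqMeasBound5} already splits $\lvert f(x) - \langle D_{\tau,n}^{(M)}\rangle_{\tilde\rho_{x,\tau,n}}\rvert$ into exactly these four contributions, namely $\lVert f - f^{(M)}\rVert_{C(X)}$, $\lVert f^{(M)} - f_\tau^{(M)}\rVert_{C(X)}$, $\lvert f_\tau^{(M)}(x) - \langle \tilde A_{\tau,n}^{(M)}\rangle_{\tilde\rho_{x,\tau,n}}\rvert$, and $\lvert \langle \tilde A_{\tau,n}^{(M)} - D_{\tau,n}^{(M)}\rangle_{\tilde\rho_{x,\tau,n}}\rvert$.

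Next I would control each term in the correct order of limits. For term~(i), density of $\mathfrak B$ in $\mathfrak W$ in the uniform norm gives $M_*$ with $\lVert f - f^{(M)}\rVert_{C(X)} < \epsilon/3$ for $M > M_*$; crucially this bound is independent of $\tau$ and $n$. For term~(iii), I would invoke Proposition~\ref{propExpec} applied to the (bias-corrected) observable $f_\tau^{(M)} = L_\tau f^{(M)}$, together with the approximate-diagonalization estimate from Appendix~\ref{appDiagD1}, to conclude $\lim_{n\to\infty}\lvert f_\tau^{(M)}(x) - \langle \tilde A_{\tau,n}^{(M)}\rangle_{\tilde\rho_{x,\tau,n}}\rvert = 0$ for every fixed $M$, $\tau$, uniformly in $x$. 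For term~(iv), Lemma~\ref{lemMeasBound} and the bandlimited bound~\eqref{eqMeasBound1} give $\limsup_{n\to\infty}\lvert\langle \tilde A_{\tau,n}^{(M)} - D_{\tau,n}^{(M)}\rangle_{\tilde\rho_{x,\tau,n}}\rvert \le C_{p,\tau,M}\lVert f^{(M)}\rVert_{\rkha_\tau}$, and combining with the RKHA-norm estimate~\eqref{eqMeasBound4} turns this into a quantity $C_{p,\tau,M}(2M+1)e^{\tau M^p/2}\lVert f\rVert_{C(X)}$ that, for each fixed $M$, tends to $0$ as $\tau\to 0$ (since each factor $1 - e^{-\tau\lvert m\rvert^p/2}\to 0$). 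For term~(ii), the bound~\eqref{eqMeasBound4b} gives $\lVert f^{(M)} - f_\tau^{(M)}\rVert_{C(X)} \le C_\tau(2M+1)\lVert f\rVert_{C(X)}$ with $C_\tau = e^{-\tau}/\kappa_\tau \to 0$ as $\tau\to 0$. The key point is that terms~(ii) and~(iv) vanish as $\tau\to 0$ \emph{at fixed $M$}, so for each $M > M_*$ I can pick $\tau_M > 0$ small enough that both are $< \epsilon/3$, after which term~(iii) is killed by sending $n\to\infty$; diagonalizing along this sequence $\tau_M \downarrow 0$ (and, if necessary, passing to a subsequence to enforce monotone decrease) yields $\lim_{M\to\infty}\lim_{n\to\infty}\langle D_{\tau_M,n}^{(M)}\rangle_{\tilde\rho_{x,\tau_M,n}} = f(x)$, uniformly in $x$ since none of the three $\epsilon/3$ bounds depends on $x$.

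Finally I would extend to $d > 1$ by the tensor-product factorizations $\pi = \bigotimes_i \pi^{(1)}$, $\bm\Pi = \bigotimes_i \bm\Pi^{(1)}$, $\mathcal W_n = \bigotimes_i \mathcal W^{(1)}_{n/d}$, $\bm{\mathfrak F}_{n,d} = \bigotimes_i \bm{\mathfrak F}_{n/d}$, and $\hat{\mathcal F}_n = \bigotimes_i \hat{\mathcal F}^{(1)}_{n/d}$ recorded in Appendix~\ref{appDiagDD} and~\eqref{eqTorusMaps}: on pure tensor observables $f = \bigotimes_i f^{(i)}$ and pure tensor states the expectation $\langle D_{\tau,n}\rangle_{\tilde\rho_{x,\tau,n}}$ factorizes into $d$ copies of the one-dimensional expression, so the one-dimensional limit applies factorwise, and a telescoping estimate (replacing one factor at a time) together with uniform boundedness of the intervening factors controls the product; linearity and density of finite linear combinations of tensor elements in $\mathfrak W(\mathbb T^d)$ then gives the general case. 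The self-adjoint version (with $E_{\tau,n}^{(M)}$ and $\tilde T_\tau$) follows by writing $\tilde S_{\tau,n}^{(M)} = (\tilde A_{\tau,n}^{(M)} + (\tilde A_{\tau,n}^{(M)})^*)/2$ and $\Real f = (f + f^*)/2$ and applying the scalar argument to real and imaginary parts, exactly as in the bandlimited discussion of Appendix~\ref{appDiagD1}.

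\textbf{Main obstacle.} The delicate point is the \emph{interchange of the two limits}: the diagonalization estimates of Appendices~\ref{appDiagD1}--\ref{appDiagDD} hold for \emph{fixed} $\tau$ as $n\to\infty$, whereas the measurement-bias control of Lemma~\ref{lemMeasBound} and term~(ii) require $\tau\to 0$. One cannot send $\tau\to 0$ and $n\to\infty$ jointly without care, because $\lVert f^{(M)}\rVert_{\rkha_\tau}$, $\kappa_\tau$, and the constants $C_{p,\tau}$ all blow up as $\tau\to 0$. The resolution — and the part of the proof that must be written carefully — is precisely the ordering baked into~\eqref{eqMeasBound5}: choose $M$ first (controlling term~(i) independently of everything), then choose $\tau_M$ small enough to make terms~(ii) and~(iv) small \emph{while $M$ is held fixed so the blow-up is harmless}, and only then take $n\to\infty$ to kill term~(iii). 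Verifying that $\tau_M$ can indeed be chosen to decrease to $0$ as $M\to\infty$ (rather than being forced upward) requires checking that the threshold is, e.g., of the order $\epsilon\,M^{-(p+1/2)}$ as indicated after~\eqref{eqMeasBound2}, which is compatible with monotone decay; this bookkeeping is the crux.
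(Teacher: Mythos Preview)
Your proposal is correct and follows essentially the same approach as the paper: the four-term triangle-inequality decomposition, the order of limits (choose $M$, then $\tau_M$, then $n\to\infty$), the invocation of Proposition~\ref{propExpec} and Lemma~\ref{lemMeasBound} with the bounds~\eqref{eqMeasBound1}, \eqref{eqMeasBound4}, \eqref{eqMeasBound4b}, the tensor-product extension to $d>1$, and the symmetrization for $E_{\tau,n}^{(M)}$ all match the paper's argument in Appendix~\ref{appExpec}. Your identification of the ``main obstacle'' (controlling the blow-up of $\tau$-dependent constants by fixing $M$ before shrinking $\tau$) is exactly the crux the paper handles via~\eqref{eqMeasBound5}.
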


The fact that quantum states at the quantum computational level evolve compatibly with the underlying classical dynamics, i.e., $\hat \Psi^t_n(\hat \rho_{x,\tau,n}) = \hat\rho_{\Phi^t(x),n}$, leads to the following corollary of Theorem~\ref{thmConv}, which establishes the asymptotic consistency of QECD in simulating the evolution of classical observables.
\begin{cor}
    \label{corConv}With the notation of Theorem~\ref{thmConv} and for any $t \geq 0$, let $\tilde f^{(t)}_{M,n} \in C(X)$ with
    \begin{displaymath}
        \tilde f^{(t)}_{M,n}(x) = \langle D_{\tau_M,n}^{(M)} \rangle_{\tilde \rho_{\Phi^t(x),\tau,n}},
    \end{displaymath}
    be the function representing the expected value of the time-$t$ simulation of $f$ by the quantum computer, given initial conditions $x$. Then, 
    \begin{displaymath}
        \lim_{M\to\infty}\lim_{n\to\infty} \tilde f^{(t)}_{M,n}(x)= U^t f(x). 
    \end{displaymath}
    where the convergence is uniform with respect to $x \in X $ and $t$ in compact sets. Moreover, if $f$ is real-valued, the analogous result holds for 
    \begin{displaymath}
        \tilde f^{(t)}_{M,n}(x) = \langle E_{\tau_M,n}^{(M)} \rangle_{\tilde \rho_{\Phi^t(x),\tau,n}}.
    \end{displaymath}
\end{cor}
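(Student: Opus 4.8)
The plan is to deduce the corollary from Theorem~\ref{thmConv} by exploiting the dynamical compatibility of the quantum computational representation established in Secs.~\ref{secQuantumDynamical} and~\ref{secQuantumComputational}. The key observation is that the feature map intertwines the classical flow with the unitary evolution on $\mathbb B_n$, so evolving the quantum state and then taking expectations is the same as moving the base point along the orbit and taking expectations of the static observable. First I would recall from~\eqref{eqPsiF} (and its $\mathcal H$- and $\mathbb B_n$-level analogs; see the commutative diagrams in Sec.~\ref{secQuantumComputational} and~\eqref{eqHatUtN}) that $\hat\Psi^t_n(\hat\rho_{x,\tau,n}) = \hat\rho_{\Phi^t(x),\tau,n}$, hence after the QFT rotation~\eqref{eqTildeRho} we likewise have $\tilde\rho^{(t)}_{x,\tau,n} = \tilde\rho_{\Phi^t(x),\tau,n}$ since $\bm{\mathfrak F}_{n,d}$ acts only on the $\mathbb B_n$ tensor factor and commutes with the substitution $x \mapsto \Phi^t(x)$. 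Therefore
\begin{displaymath}
    \tilde f^{(t)}_{M,n}(x) = \langle D_{\tau_M,n}^{(M)} \rangle_{\tilde \rho^{(t)}_{x,\tau,n}} = \langle D_{\tau_M,n}^{(M)} \rangle_{\tilde \rho_{\Phi^t(x),\tau,n}},
\end{displaymath}
which is exactly the quantity controlled by Theorem~\ref{thmConv} evaluated at the point $\Phi^t(x) \in X$.

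Next I would apply Theorem~\ref{thmConv} with $x$ replaced by $\Phi^t(x)$. Since the convergence there is uniform with respect to the base point ranging over all of $X$, the iterated limit $\lim_{M\to\infty}\lim_{n\to\infty}$ gives $f(\Phi^t(x))$, and by the definition of the Koopman operator~\eqref{Koopmandef} this is precisely $U^t f(x)$. Uniformity over $x \in X$ is inherited directly because $\Phi^t : X \to X$ is a bijection for each fixed $t$, so the supremum over $x$ of $|\tilde f^{(t)}_{M,n}(x) - U^t f(x)|$ equals the supremum over $y \in X$ of $|\langle D_{\tau_M,n}^{(M)}\rangle_{\tilde\rho_{y,\tau,n}} - f(y)|$, which the theorem sends to zero. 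The real-valued case is identical, substituting $E^{(M)}_{\tau,n}$ for $D^{(M)}_{\tau,n}$ and using the second conclusion of Theorem~\ref{thmConv} together with the fact that $\Real f = f$ when $f$ is real.

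The remaining point is uniformity with respect to $t$ in compact sets, which requires a small equicontinuity argument rather than a direct invocation of the theorem. Here I would use that for each fixed $n$ and $\tau$ the map $(t,x) \mapsto \langle D_{\tau_M,n}^{(M)}\rangle_{\tilde\rho_{\Phi^t(x),\tau,n}}$ is continuous (the flow $\Phi^t$ is jointly continuous on $\mathbb R \times X$, the feature map $x \mapsto \hat\rho_{x,\tau,n}$ is continuous into the finite-dimensional space $Q(\mathbb B_n)$, and the expectation functional is continuous), and that $U^t f(x) = f(\Phi^t(x))$ is jointly continuous with $\Phi^t$ uniformly continuous on $[-T,T]\times X$ for any $T$ since $X$ is compact. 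Combining the already-established uniformity in $x$ at each fixed $t$ with this joint continuity and a standard covering argument on the compact $t$-interval upgrades the convergence to be uniform in $(x,t)$ with $t$ in compact sets. I expect this last equicontinuity/covering step to be the main technical obstacle, though it is routine; the conceptual heart of the corollary is the one-line intertwining identity $\tilde f^{(t)}_{M,n}(x) = \langle D_{\tau_M,n}^{(M)}\rangle_{\tilde\rho_{\Phi^t(x),\tau,n}}$, after which Theorem~\ref{thmConv} does essentially all the work.
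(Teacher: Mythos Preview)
Your proposal is correct and matches the paper's approach: the paper derives the corollary in one line from the intertwining identity $\hat\Psi^t_n(\hat\rho_{x,\tau,n}) = \hat\rho_{\Phi^t(x),\tau,n}$ together with Theorem~\ref{thmConv}, exactly as you do. One small remark: your third paragraph is unnecessary, because your second paragraph already shows that $\sup_{x\in X}\lvert \tilde f^{(t)}_{M,n}(x) - U^t f(x)\rvert$ equals the $t$-independent quantity $\sup_{y\in X}\lvert \langle D_{\tau_M,n}^{(M)}\rangle_{\tilde\rho_{y,\tau,n}} - f(y)\rvert$, so uniformity in $t$ (in fact over all of $\mathbb R$, not just compact sets) comes for free without any equicontinuity or covering argument.
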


Before closing this section, we note that while the convergence results in Theorem~\ref{thmConv} and Corollary~\ref{corConv} hold for observables in the Wiener algebra $\mathfrak W$ with absolutely convergent Fourier series, the fact that $\mathfrak W$ is a dense subspace of $C(X)$ means that any observable $f \in C(X) $ can be approximated to arbitrarily high precision in uniform norm by an observable $g \in \mathfrak W$, whose dynamical evolution can in turn be simulated to arbitrarily high precision using the quantum compiler as established in Corollary~\ref{corConv}. The function $g$ may be constructed by several means available from signal processing, e.g., by convolution of $f$ by an appropriate smoothing kernel. A detailed study of this topic is beyond the scope of the present work. 

\subsection{\label{appProof}Proof of Lemma~\ref{lemDiag}}

Using the definition of the map $W_n$ in~\eqref{eqWn} and the QFT in~\eqref{eqQFT}, we get 
\begin{align*}
    W_n^* \mathfrak F_n \ket l &= W_n^* \left( \frac{1}{\sqrt N} \sum_{q=0}^{N-1} e^{-2\pi i lq/N} \ket q \right) \\
    &=  \frac{1}{\sqrt N} \sum_{q=0}^{N-1} e^{-2\pi i lq/N} \psi_{o^{-1}(q)}  \\
    &= \frac{1}{\sqrt N} \sum_{j\in J_n} e^{-2\pi i lo(j)/N} \psi_j,
\end{align*}
leading to
\begin{align*}
    (\pi \psi_m) R^*_n W_n^* \ket l &=  \frac{1}{\sqrt N} \sum_{j\in J_n} e^{-2\pi i lo(j)/N} (\pi \psi_m )\psi_j\\
    &=  \frac{1}{\sqrt N} \sum_{j\in J_n} e^{-2\pi i lo(j)/N} \psi_m \psi_j \\
    &=  \frac{1}{\sqrt N} \sum_{j\in J_n} e^{-2\pi i lo(j)/N} c_{mj} \psi_{m+j}.
\end{align*}
Therefore, the operator $\tilde A_{m,n}$ has the matrix elements
\begin{align*}
    (\tilde A_{m,n})_{kl} &= \bra k \tilde A_{m,n} \ket l \\
    &= \bra k \mathfrak F_n^* W_n \Pi_n (\pi \psi_m) \Pi_n^* W_n^* \mathfrak F_n \ket l \\
    &= \left\langle \Pi_n^* W_n^* \mathfrak F_n  k, (\pi \psi_m)\Pi_n^* W_n^* \mathfrak F_n l \right\rangle_{\rkha} \\
    &= \left \langle \frac{1}{\sqrt N} \sum_{j' \in J_n} e^{- 2 \pi i k o(j')/ N} \psi_{j'}, \right. \\
    & \qquad \left. \frac{1}{\sqrt N} \sum_{j\in J_n} e^{-2\pi i lo(j)/N} c_{mj} \psi_{m+j} \right\rangle_{\rkha} \\
    &= \frac{1}{N} \sum_{j',j \in J_n} e^{2 \pi i (ko(j')-lo(j))/N} c_{mj} \delta_{j',m+j}\\ 
    &= \frac{1}{N} \sum_{j\in J_n} e^{2\pi i (ko(m+j)-lo(j))} c^{(n)}_{mj},
\end{align*}
where
\begin{equation*}
    c^{(n)}_{mj} = 
    \begin{cases}
        c_{mj}, & m+j \in J_n, \\
        0, & \text{otherwise}.
    \end{cases}
\end{equation*}

Observe now that if $m+j \in J_n$, then $o(m+j) = m + o(j)$. Therefore, since $c^{(n)}_{mj} = 0 $ whenever $ m+j \notin J_n $, we get
\begin{displaymath}
    (\tilde A_{m,n})_{kl} = \frac{1}{N} \sum_{j\in J_n} e^{2\pi i((k-l)o(j)+km)/N} c_{mj}^{(n)}.
\end{displaymath}
Thus, defining 
\begin{align*}
    \tilde c_{mj}^{(n)} &= c_{mj}^{(n)} - e^{-\tau \lvert m \rvert^p/2} \\
    &= 
    \begin{cases}
        e^{\tau(\lvert m \rvert^p + \lvert j \rvert^p - \lvert m+j \rvert^p)/2} - e^{-\tau \lvert m \rvert^p/2}, & m+j \in J_n, \\
        - e^{-\tau \lvert m \rvert^p / 2 }, & \text{otherwise}
    \end{cases}
\end{align*}
and 
\begin{displaymath}
    \varepsilon_{mnkl} = \frac{1}{N} \sum_{j\in J_n} e^{2\pi i((k-l)o(j)+km)/N} \tilde c_{mj}^{(n)},
\end{displaymath}
we get
\begin{align*}
    (\tilde A_{m,n})_{kl} &= \frac{1}{N} \sum_{j\in J_n} e^{2\pi i((k-l)o(j)+km)/N} e^{-\tau \lvert m \rvert^p / 2} + \varepsilon_{kl} \\
    &= \frac{1}{N} \sum_{q=0}^{N-1} e^{2\pi i((k-l)q+km)/N} e^{-\tau \lvert m \rvert^p / 2} + \varepsilon_{mnkl} \\
    &= e^{2\pi i km / N} e^{-\tau \lvert m \rvert^p / 2} \delta_{kl} + \varepsilon_{mnkl}. 
\end{align*}
Note that we used standard properties of discrete Fourier transforms to arrive at the last line. It then follows by definition of the $\psi_m$ basis vectors and $\theta_l$ gridpoints that 
\begin{displaymath}
    (\tilde A_{m,n})_{kl} = \psi_m(\theta_l)\delta_{kl} + \varepsilon_{mnkl},
\end{displaymath}
as claimed in the statement of the lemma.

We now proceed to bound the remainder $\varepsilon_{mnkl}$, assuming, for now, that $m \geq 0 $. Letting $\tilde N = N / 2$, we have 
\begin{align}
    \nonumber \lvert \varepsilon_{mnkl} \rvert &= \left\lvert \frac{1}{N} \sum_{j\in J_n} e^{2\pi i ( (k-l) o(j) + km)/N} \tilde c_{mj}^{(n)} \right\rvert \\
    \nonumber & \leq \frac{1}{N} \sum_{j\in J_n} \tilde c_{mj}^{(n)} \\
    \nonumber & = \frac{1}{N} \sum_{j=-\tilde N}^{-m} c_{mj}^{(n)} + \frac{1}{N} \sum_{j=-m+1}^{-1} e^{-\tau\lvert m \rvert^p} \\
    \nonumber & \quad + \frac{1}{N} \sum_{j=1}^{\tilde N - m } c_{mj}^{(n)} + \frac{1}{N} \sum_{\tilde N - m + 1}^{\tilde N} e^{-\tau \lvert m \rvert^p} \\
    \label{eqVarepsilon}&= \frac{(2 \lvert m \rvert + 1)e^{-\tau\lvert m \rvert^p}}{N} + \varepsilon_- + \varepsilon_+,  
\end{align}
where
\begin{displaymath}
    \varepsilon_- = \frac{1}{N} \sum_{j=-\tilde N}^{-m} c_{mj}^{(n)}, \quad  \varepsilon_+ = \frac{1}{N} \sum_{j=1}^{\tilde N - m } c_{mj}^{(n)}.
\end{displaymath}
Next, to bound the $\varepsilon_+$ term, consider the function $f(u) = u^p$. Since $p\in (0,1)$, $f $ is strictly concave on the positive real line. Thus, for $ m \geq 0 $ and $j \geq 1$, we have
\begin{align}
    \nonumber \lvert m+j \rvert^p - \lvert j \rvert^p &= \lvert f(m+j) - f(j) \rvert \\
    \nonumber & \leq \lvert f'(j) \rvert \lvert m \rvert \\
    \label{eqCBound1}&= p j^{p-1} \lvert m \rvert. 
\end{align}
Consider also the function $g(u) = e^{\tau u / 2} - 1 $ on the interval $ u \in [ 0, u_\text{max} ]$ with $ u_\text{max} = pm $. The function $g$ is strictly convex, so 
\begin{displaymath}
    g(u) \leq g'(u_\text{max}) u = \frac{\tau}{2} e^{\tau u_\text{max}/2} u = \frac{\tau}{2}e^{\tau pm/2} u.
\end{displaymath}
Therefore, for $m \geq 0 $ and $j \geq 1$, we obtain  
\begin{equation}
    \label{eqCBound2}
    \tilde c_{mj}^{(n)} = e^{-\tau \lvert m \rvert^p/2} g(f(m+j)-f(j)) \leq \tau p \lvert m \rvert j^{p-1}/2.
\end{equation}
Note that we have used~\eqref{eqCBound1} and the fact that $f(m+j) -f(j) \leq pm $ (which follows from the same equation). 

Next, let $a_{\tilde N}$ be the series
\begin{displaymath}
    a_{\tilde N} = \sum_{j=1}^{\tilde N} \left( \frac{j}{\tilde N} \right)^{p-1} \frac{1}{\tilde N}.
\end{displaymath}
As $\tilde N \to \infty$,  $a_{\tilde N}$ converges to the integral $ \int_0^1 u^{p-1} \, du = 1 / p$. Therefore, $a_{\tilde N}$ is bounded by a constant, $\tilde C$, leading to the bound
\begin{equation}
    \frac{1}{\tilde N} \sum_{j=1}^{\tilde N} j^{p-1} = \tilde N^{p-1} \tilde a_N \leq \tilde C \tilde N^{p-1}.
    \label{eqCBound3}
\end{equation}
Using~\eqref{eqCBound2} and~\eqref{eqCBound3}, we thus obtain
\begin{align}
    \nonumber \varepsilon_+ &:= \left\lvert \frac{1}{2 \tilde N}  \sum_{j=1}^{\tilde N} e^{2\pi i((k-l)o(j)+km)/N} \tilde c_{mj}^{(n)} \right\rvert \\
    \nonumber & \leq \frac{1}{2\tilde N} \sum_{j=1}^{\tilde N} \tilde c_{mj}^{(n)} \\
    \label{eqCBound4} & \leq \tilde C \tau p \lvert m \rvert \tilde N^{p-1} / 2.
\end{align}
Moreover, analogous arguments for $ j \leq - 1$ lead to the estimate    
\begin{align}
    \nonumber \varepsilon_- &:= \left\lvert \frac{1}{2 \tilde N}  \sum_{j=-\tilde N}^{-1} e^{2\pi i((k-l)o(j)+km)/N} \tilde c_{mj}^{(n)} \right\rvert \\
    \label{eqCBound5} &\leq \hat C \tau p \lvert m \rvert \tilde N^{p-1}/2
\end{align}
for a constant $\hat C$. 

Substituting~\eqref{eqCBound4} and~\eqref{eqCBound5} into~\eqref{eqVarepsilon}, it follows that 
\begin{align*}
    \lvert \varepsilon_{mnkl} \rvert &\leq \frac{(2 \lvert m \rvert + 1)e^{-\tau\lvert m \rvert^p}}{N} + \varepsilon_+ + \varepsilon_- \\
    & \leq \frac{(2 \lvert m \rvert + 1)e^{-\tau\lvert m \rvert^p}}{N} + \frac{C \tau p \lvert m \rvert}{ N^{1-p}} 
\end{align*}
with $ C = \min\{ \tilde C, \hat C \} $, which verifies the claim of the lemma for $ m \geq 0$. However, since $\psi_{-m} = \psi_m^*$, repeating the calculation described above for $m < 0 $ leads to the same bound, so we conclude that the claim holds for any $ m \in \mathbb Z$. \qed  

\subsection{\label{appMeasBound}Proof of Lemma~\ref{lemMeasBound}}

We have
\begin{multline*}
    \langle \tilde A_{m,\tau,n} - D_{m,\tau,n} \rangle_{\tilde\rho_{x,\tau,n}} \\
    \begin{aligned}
        &= \tr(\tilde \rho_{x,\tau,n} (\tilde A_{m,\tau,n} - D_{m,\tau,n})) \\
        &= \tr(\rho_{x,\tau,n}( \bm \Pi_{\tau,n}(\bm \Pi(\pi \psi_{m,\tau})) - \tilde D_{m,\tau,n})).
    \end{aligned}
\end{multline*}
By the results in Sec.~\ref{secMatrixMechanical} and Appendix~\ref{appQuantumRep}, it follows that
\begin{multline}
    \label{eqALimit}
    \lim_{n\to\infty} \tr(\rho_{x,\tau,n} \bm \Pi_{\tau,n}(\bm \Pi_\tau(\pi_\tau \psi_{m,\tau}))) \\ 
    \begin{aligned}
        &=\tr(\rho_{x,\tau} \bm \Pi_\tau(\pi_\tau \psi_{m,\tau})) \\
        &= \frac{\eta_{m,\tau}}{\kappa_\tau} \psi_{m,\tau}(x) \\
        &= \frac{\sum_{j\in J'_m}e^{-\tau\lvert j\rvert^p}}{\kappa_\tau}\psi_{m,\tau}(x), 
    \end{aligned}
\end{multline}
where we recall the definition of the index set $J'_m$, 
\begin{displaymath}
    J'_m = \{ j \in J : j + m \in J \}.
\end{displaymath}
Moreover, we have
\begin{align*}
    \tr(\rho_{x,\tau,n} \tilde D_{m,\tau,n}) &= \langle \xi_{x,\tau,n}, \tilde D_{m,\tau,n} \xi_{x,\tau,n} \rangle_{\mathcal H_\tau} \\
    &= \frac{\langle k_{x,\tau,n}, \tilde D_{m,\tau,n}) k_{x,\tau,n} \rangle_{\mathcal H_\tau}}{\kappa_{\tau,n}} \\
    &= \frac{(\tilde D_{m,\tau,n} k_{x,\tau,n})(x)}{\kappa_{\tau,n}}.
\end{align*}
In the above, the function $\tilde D_{m,\tau,n} k_{x,\tau,n} \in \mathcal H_{\tau,n}$ can be expressed as
\begin{multline*}
    D_{m,\tau,n} k_{x,\tau,n} \\
    \begin{aligned}
        &= W_n^* \mathfrak F_n D_{m,\tau,n} \mathfrak F_n^* W_n \left( \sum_{j \in J_n} \psi_{j,\tau}^*(x) \psi_{j,\tau} \right) \\ 
        &= W_n^* \mathfrak F_n D_{m,\tau,n} \mathfrak F_n^* \left( \sum_{j \in J_n} \psi_{j,\tau}^*(x) \ket{o(j)} \right) \\ 
        &= W_n^* \mathfrak F_n D_{m,\tau,n} \left( \frac{1}{\sqrt N} \sum_{l=0}^{N-1}  \sum_{j \in J_n} \psi_{j,\tau}^*(x) e^{2\pi i o(j) l / N} \ket l \right) \\ 
        &= W_n^* \mathfrak F_n \left( \frac{1}{\sqrt N} \sum_{l=0}^{N-1}  \sum_{j \in J_n} \psi_{j,\tau}^*(x) e^{2\pi i o(j) l / N} \psi_{m,\tau}(\theta_l) \ket l \right) \\ 
        &= W_n^* \left( \frac{1}{N} \sum_{k,l=0}^{N-1}  \sum_{j \in J_n} \psi_{j,\tau}^*(x) e^{2\pi i (o(j) - k) l / N} \psi_{m,\tau}(\theta_l) \ket k \right) \\ 
        &= \frac{1}{N} \sum_{k,l=0}^{N-1}  \sum_{j \in J_n} \psi_{j,\tau}^*(x) e^{2\pi i (o(j) - k) l / N} \psi_{m,\tau}(\theta_l) \psi_{o^{-1}(k)} \\   
        &= \sum_{j,j' \in J_n} \psi^*_{j,\tau}(x)\psi_{j',\tau} \\
        & \qquad \times \left( \frac{1}{N} \sum_{l=0}^{N-1} e^{i(o(j)-o(j'))(2\pi l / N)} \psi_{m,\tau}(2\pi l / N) \right).
    \end{aligned}
\end{multline*}
As $n\to\infty$, the summation in the parentheses in the last line converges to a continuous Fourier transform,
\begin{multline*}
    \lim_{n\to\infty} \frac{1}{N}\sum_{l=0}^{N-1}  e^{i(o(j)-o(j'))(2\pi l / N)} \psi_{m,\tau}(2\pi l / N) \\ 
    \begin{aligned}
        &= \int_{S^1} e^{- i (j-j') \theta} \psi_{m,\tau}(\theta) \, d\theta \\
        &= e^{-\tau \lvert m \rvert^p/2}  \int_{S^1} e^{ i (j-j'+ m) \theta}  \, d\theta \\
        &= e^{-\tau\lvert m \rvert^p/2} \delta_{j', j+m}.
    \end{aligned}
\end{multline*}
As a result, we have 
\begin{multline*}
    \lim_{n\to\infty} \frac{\tilde D_{m,\tau,n} k_{x,\tau,n}}{\kappa_{\tau,n}} \\  
    \begin{aligned}
        &= \frac{1}{\kappa_\tau} \sum_{j,j' \in J} \psi_{j,\tau}^*(x) \psi_{j'} e^{-\tau\lvert m \rvert^p/2} \delta_{j',j+m} \\
        &= \frac{1}{\kappa_\tau} \sum_{j\in J'_m} \psi^*_{j,\tau}(x)\psi_{j+m} e^{-\tau\lvert m \rvert^p/2}\\
        &= \frac{1}{\kappa_\tau} \sum_{j\in J'_m} \psi^*_{j,\tau}(x) e^{-\tau\lvert j + m \rvert^p/2}\phi_{j +m} e^{-\tau\lvert m \rvert^p/2}\\
        &= \frac{1}{\kappa_\tau} \sum_{j\in J'_m} \psi^*_{j,\tau}(x) \psi_{j,\tau} e^{-\tau(\lvert j + m \rvert^p - \lvert j \rvert^p )/2} \psi_{m,\tau},
    \end{aligned}
\end{multline*} 
and upon evaluation at $x$,
\begin{multline}
    \lim_{n\to\infty} \frac{(\tilde D_{m,\tau,n} k_{x,\tau,n})(x)}{\kappa_{\tau,n}} \\
    \label{eqDLimit}= \frac{1}{\kappa_\tau} \sum_{j\in J'_m} e^{-\tau \lvert j \rvert^p} e^{-\tau(\lvert j + m \rvert^p - \lvert j \rvert^p )/2} \psi_{m,\tau}(x).
\end{multline}
Therefore, combining~\eqref{eqALimit} and~\eqref{eqDLimit}, we obtain
\begin{multline*}
    \lim_{n\to\infty} \langle \tilde A_{m,\tau,n} - D_{m,\tau,n} \rangle_{\tilde\rho_{x,\tau,n}} \\
    = \frac{1}{\kappa_\tau} \sum_{j\in J'_m} e^{-\tau \lvert j \rvert^p} \left( 1-  e^{-\tau(\lvert j + m \rvert^p - \lvert j \rvert^p )/2}  \right) \psi_{m,\tau}(x),
\end{multline*}
and thus
\begin{multline*}
    \lim_{n\to\infty} \lvert \langle \tilde A_{m,\tau,n} - D_{m,\tau,n} \rangle_{\tilde\rho_{x,\tau,n}} \rvert \\
    \begin{aligned}
        &= \frac{e^{-\tau \lvert m \rvert^p/2}}{\kappa_\tau} \left \lvert \sum_{j\in J'_m} e^{-\tau \lvert j \rvert^p} \left( 1-  e^{-\tau(\lvert j + m \rvert^p - \lvert j \rvert^p )/2}  \right) \right\rvert \\
        &\leq \frac{1}{\kappa_\tau} \sum_{j\in J'_m} e^{-\tau \lvert j \rvert^p} \left \lvert 1-  e^{-\tau(\lvert j + m \rvert^p - \lvert j \rvert^p )/2} \right\rvert. 
    \end{aligned}
\end{multline*}

Note now that for fixed $m \in \mathbb Z$, the largest value of $ e^{-\tau \lvert m \rvert^p/2}\lvert 1-  e^{-\tau(\lvert j + m \rvert^p - \lvert j \rvert^p )/2} \rvert$ over $ j \in \mathbb Z$ occurs for $\lvert j \rvert = \lvert m \rvert $. That is, we have
\begin{multline*}
    e^{-\tau \lvert m \rvert^p/2}\left \lvert 1-  e^{-\tau(\lvert j + m \rvert^p - \lvert j \rvert^p )/2} \right\rvert \\
    \begin{aligned}
        &\leq e^{-\tau\lvert m \rvert^p/2}\max \left\{  \left\lvert 1-  e^{\tau \lvert m \rvert^p/2} \right\rvert,  \left\lvert 1-  e^{-\tau \lvert m \rvert^p/2} \right\rvert \right\} \\   
        &= \max \left\{ \left\lvert e^{-\tau\lvert m \rvert^p/2} - 1 \right\rvert,  e^{-\tau\lvert m \rvert^p/2} \left\lvert 1-  e^{-\tau \lvert m \rvert^p/2} \right\rvert \right\} \\   
        &\leq \max \left\{ \left\lvert e^{-\tau\lvert m \rvert^p/2} - 1 \right\rvert,   \left\lvert 1-  e^{-\tau \lvert m \rvert^p/2} \right\rvert \right\} \\   
        &= 1-e^{-\tau\lvert m \rvert^p/2},
    \end{aligned}
\end{multline*}
so that 
\begin{multline*}
    \lim_{n\to\infty} \lvert \langle \tilde A_{m,\tau,n} - D_{m,\tau,n} \rangle_{\tilde\rho_{x,\tau,n}} \rvert\\
    \begin{aligned}
        &\leq \frac{1}{\kappa_\tau} \sum_{j\in J'_m} e^{-\tau \lvert j \rvert^p} \left( 1- e^{-\tau\lvert m \rvert^p/2} \right) \\
        &= \frac{\eta_{m,\tau}}{\kappa_\tau} \left( 1- e^{-\tau\lvert m \rvert^p/2} \right) \\
        &\leq 1- e^{-\tau\lvert m \rvert^p/2},
    \end{aligned}
\end{multline*}
proving the lemma. \qed


%
\end{document}